\newcommand\dom{\mathit{dom}}
\newcommand\calD{\mathcal{D}}
\newcommand\calF{\mathcal{F}}
\newcommand\calS{\mathcal{S}}
\newcommand\calT{\mathcal{T}}
\newcommand\calU{\mathcal{U}}
\newcommand\delay{\mathit{delay}}
\newcommand{\trans}{\mathcal{T}}
\newcommand{\transs}{\mathcal{S}}
\newcommand\dia[2]{\ensuremath{\mathit{dia}_{#2}(#1)}}
\newcommand{\dlev}{\ensuremath{d_l}}
\newcommand{\TP}[0]{\textbf{TP}}
\newcommand{\ATP}[0]{\textbf{ATP}}
\newcommand{\STP}[0]{\textbf{STP}}
\newcommand{\HTP}[0]{\textbf{HTP}}
\newcommand{\dHam}[2]{d_{h}(#1,#2)}
\newrobustcmd{\manu}[2][]{{\color{black}\todo[color=blue!30,noshadow,#1]{{\bf Manu:} #2}}\ignorespaces}
\newrobustcmd{\isma}[2][]{{\color{black}\todo[color=green!30,noshadow,#1]{{\bf Isma:} #2}}\ignorespaces}
\newrobustcmd{\saina}[2][]{{\color{black}\todo[color=pink!30,noshadow,#1]{{\bf Saina:} #2}}\ignorespaces}
\newrobustcmd{\khush}[2][]{{\color{black}\todo[color=red!45,noshadow,#1]{{\bf Khush:} #2}}\ignorespaces}
\title{Approximate Problems for Finite Transducers}
\author{Emmanuel Filiot}{Université libre de Bruxelles}{efiliot@ulb.e}{https://orcid.org/0000-0002-2520-5630}{He is a senior research associate at the National Fund for Scientific Research (F.R.S.-FNRS) in Belgium. His work is supported by the FNRS project T011724F.}
\author{Isma\"el Jecker}{Université de Franche-Comté}{ismael.jecker@gmail.com}{https://orcid.org/0000-0002-6527-4470}{}
\author{Khushraj Madnani}{Max Planck Institute for Software Systems}{kmadnani@mpi-sws.org}{https://orcid.org/0000-0003-0629-3847}{}
\author{Saina Sunny}{Indian Institute of Technology Goa}{saina19231102@iitgoa.ac.in}{https://orcid.org/0009-0005-1366-0168}{}
\authorrunning{E. Filiot, I. Jecker, K. Madnani and S. Sunny} 
\keywords{Finite state transducers, Edit distance, Determinisation, Functionality}
\begin{document}

\maketitle

\begin{abstract}
    Finite (word) state transducers extend finite state automata by defining a binary relation over finite words, called \emph{rational  relation}. If the rational relation is the graph of a function, this  function is said to be rational. The class of sequential functions is a strict subclass of rational functions, defined as the functions
    recognised by input-deterministic finite state
    transducers. The class membership problems between those classes
    are known to be decidable. We consider approximate versions of
    these problems and show they are decidable as well. This includes
    the \emph{approximate functionality problem}, which asks whether
    given a rational relation (by a transducer), is it \emph{close} to a
    rational function, and the \emph{approximate determinisation problem}, which
    asks whether a given rational function is close to a sequential
    function. We prove decidability results for several classical
    distances, including Hamming and Levenshtein edit distance. Finally, we
    investigate the \emph{approximate uniformisation problem}, which
    asks, given a rational relation $R$, whether there exists a sequential
    function that is close to some function uniformising $R$. As its
    exact version, we prove that this problem is undecidable.
    
\end{abstract}

\section{Introduction}

Finite (state) transducers are a fundamental automata model to compute functions from words to words. The literature on finite state transducers is rich, and dates back to the early days of computer science, where they were called generalised sequential machines~\cite{Raney1958SequentialF,Ginsburg1968-GINACO-2}. See also~\cite{DBLP:conf/stacs/MuschollP19,DBLP:journals/siglog/FiliotR16} and the references therein. Finite state transducers extend finite automata with outputs on their transitions, allowing them to produce none or several symbols. While finite automata recognise languages of (finite) words, finite transducers recognise binary relations from words to words, called \emph{rational relations}. When the rational relation is the graph of a function, it is called a \emph{rational function}. This subclass is decidable within the class of rational relations. In particular, given a finite transducer $T$, it is decidable in \textsc{PTime} whether $T$ recognises a function~\cite{GurariI83,twinningproperty}. In that case, $T$ is said to be \emph{functional}. Beyond the fact that it is a natural restriction, the class of functional transducers is of high importance, as many problems known to be undecidable for transducers (such as inclusion and equivalence), become decidable under the functional restriction.

\subparagraph*{Determinisation} It turns out that non-determinism is needed for finite transducers to capture rational functions. A canonical example is the function $f_{\sf last} : \{a,b\}^*\rightarrow \{a,b\}^*$ that moves the last symbol upfront. For example, $f_{\sf last}(ab\underline{b}) = \underline{b}ab$ and $f_{\sf last}(ab\underline{a}) = \underline{a}ab$. Since the number of symbols that have to be read before reading the last symbol is arbitrarily large, a finite transducer recognising $f_{\sf last}$ needs non-determinism to guess the last symbol, as illustrated by the following transducer:
\begin{center}
\begin{tikzpicture}[draw=black,shorten >=1pt, auto, node distance=2cm, on grid, >=stealth,scale=0.8, every node/.style={scale=0.8}]
    \tikzset{state/.style={circle, draw=black, thick, minimum size=0.5cm}}
        \node[state, draw,initial,initial above, initial text=] (q0) {$q_0$};
        \node[state] (q1) [right=of q0] {$q_1$};
        \node[state] (q2) [left=of q0] {$q_2$};
        \node[state, accepting,draw] (q3) [right=of q1] {$q_3$};
        \node[state, accepting,draw] (q4) [left=of q2] {$q_4$};
    
        \path[->]
            (q0) edge node {$\begin{array}{lll}a\mid aa\\ b\mid ab\end{array}$} (q1)
            (q0) edge [above]node {$\begin{array}{l}a\mid ba\\ b\mid bb\end{array}$} (q2)
            (q1) edge [loop above] node {$\begin{array}{l}a\mid a\\b\mid b\end{array}$} (q1)
            (q2) edge [loop above] node {$\begin{array}{l}a\mid a\\b\mid b\end{array}$} (q2)
            (q2) edge [above] node {$b\mid \epsilon$} (q4)
            (q1) edge node {$a\mid\epsilon$} (q3);
    \end{tikzpicture}
\end{center}
So non-determinism, unlike finite automata, brings some extra expressive power when it comes to finite transducers. On the other hand, non-determinism also yields some inefficiency issues when the input is received sequentially as a stream, because the whole input may have to be stored in memory until the first output symbol can be produced. This motivates the class of \emph{sequential functions}, as the rational functions recognised by \emph{input-deterministic} finite transducers, and the \emph{determinisation problem}: given an arbitrary finite transducer, does it recognise a sequential function? In other words, can it be (input-) determinised? This well-studied problem is known to be decidable in \textsc{PTime}~\cite{twinningproperty}. The determinisation problem is a central problem in automata theory. It has been for instance extensively considered for weighted automata~\cite{DBLP:journals/coling/Mohri97}, and a long-standing open problem is whether this problem is decidable for $(\mathbb{N},max,+)$-automata~\cite{DBLP:conf/stacs/KirstenL09}.

\subparagraph*{Approximate determinisation} The function $f_{\sf last}$ is not sequential, in other words, the latter transducer is not determinisable. It turns out that $f_{\sf last}$ is \emph{almost} sequential, in the sense that it is \emph{close to} some sequential function, for instance the identity function ${\sf id}$. "Close to" can be defined in different ways, by lifting standard distances between words to functions and relations. Two classical examples are
the Hamming distance and the Levenshtein distance, which respectively measure the minimal number of letter substitutions (respectively letter substitutions, insertion and deletion) to rewrite a word into another. A distance $d$ between words is lifted to functions $f_1,f_2$ with the same input domain, by taking the supremum of $d(f_1(u),f_2(u))$ for all words $u$ in their domain. Coming back to our example, $f_{\sf last}$ and ${\sf id}$ are close for the edit distance, in the sense that $d(f_{\sf last},{\sf id})$ is finite for $d$ the edit distance, but they are not close for the Hamming distance. This raises a natural and fundamental problem, called \emph{approximate determinisation} problem (for a distance $d$): given a finite transducer recognising a function $f$, does there exists a sequential function $s$ such that $d(f,s)$ is finite? The approximate determinisation problem has been extensively studied for weighted automata~\cite{DBLP:conf/lics/AminofKL11,DBLP:journals/algorithmica/BuchsbaumGW01,DBLP:journals/tcs/AminofKL13} and quantitative automata~\cite{DBLP:conf/fsttcs/BokerH12,DBLP:journals/corr/BokerH14}, but, to the best of our knowledge, nothing was known for transducers. However, if both $f$ and $s$ are given (by finite transducers), checking whether they are close (for various and classical edit distances) is known to be decidable, even if $s$ is rational but not sequential~\cite{editdistance}. This can be seen as the verification variant of approximate determinisation, while approximate determinisation is rather a synthesis problem, for which only $f$ is given, and which asks to generate $s$ if it exists.

\subparagraph*{Contributions} In this paper, our main result is the decidability of approximate determinisation of finite transducers, for a family $\mathcal{E}$
of edit distances, which include Hamming and Levenshtein distances. 
For exact determinisation, determinisable finite transducers are characterised by the so called twinning property (\TP)~\cite{twinningproperty,twinningproperty0, ChoffrutG99}, a pattern that requires that the delay between any two outputs on the same input must not increase when taking synchronised cycles of the transducer. 
As noticed in~\cite{editdistance}, bounded (Levenshtein) edit distance is closely related to the notion of word conjugacy. In this paper, we consider an approximate version of the twinning property (\ATP{}), with no constraints on the delay, but instead requires that the output words produced on the synchronised loops are conjugate. It turns out that \ATP{} is not sufficient to characterise approximately determinisable transducers, and an extra property is needed, the strongly connected twinning property (\STP{}), which requires that the \TP{} holds within SCCs of the finite transducer. We show that a transducer $\calT$ is approximately determinisable (for Levenshtein distance) iff both \ATP{} and \STP{} hold and, if they do, we show how to approximately determinise $\calT$. We also prove that $\ATP{}$ and $\STP{}$ are both decidable.

For Hamming distance, which only allows letter substitutions, determinizable transducers are characterized by both \STP{} and another property called Hamming twinning property (\HTP{}), which roughly requires that outputs on synchronised cycles have same length and do not mismatch. We also show that \HTP{} is decidable, which entails decidability of approximate determinisation for Hamming distance.

We also consider another fundamental problem: the \emph{approximate functionality problem}. Informally, the approximate functionality problem asks whether a given rational relation $R$ is \emph{almost} a rational function, in the sense that $d(R,f)$ is finite for some rational function $f$, where $d(R,f)$ is now the supremum, for all $(u,v)\in R$, of $d(v,f(u))$. We prove that the approximate functionality problem is decidable for all the distances in $\mathcal{E}$. We prove this problem to be decidable for classical distances, including Hamming and Levenshtein. 

Finally, we consider the \emph{approximate (sequential) uniformisation problem}. In its exact version, this problem asks whether for a given rational relation $R$, there exists a sequential function $f$ with the same domain, and whose graph is included in $R$. This problem is closely related to a synthesis problem, but is unfortunately undecidable~\cite{LoC,DBLP:conf/icalp/FiliotJLW16}. We consider its approximate variant, where instead of requiring that the graph of $f$ is included in $R$, we require that it is close to some function whose graph is included in $R$. However, despite this relaxation, we show that the problem is still undecidable.

\subparagraph*{Other related works} Variants of the determinisation problem have been considered in the literature~\cite{DBLP:conf/icalp/FiliotJLW16}. However, this work considers the \emph{exact} determinisation of a transducer $\calT$, by some sequential transducer which is \emph{close} to $\calT$, for some notions of structural similarity between transducers.  Robustness and continuity notions for finite transducers have been introduced in~\cite{DBLP:conf/fsttcs/HenzingerOS14}. While those notions are also based on word distances, the problems considered are different from ours.

\section{Preliminaries}

For every $k \in \mathbb{N}$, we let $[k]$ denote the set $\{1,\ldots, k\}$. 

\subparagraph*{Words} Let $A$ or $B$ denote finite alphabets of letters. A word is a sequence of letters. The empty word is denoted by $\epsilon$. The length of a word is denoted by $|w|$, in particular $|\epsilon|=0$. The $i$th letter of a word $w$, for $i\in\{1,\dots,|w|\}$, is denoted by $w[i]$. The primitive root of $w$ is the shortest word $\rho_w$ such that $w\in\rho_w^*$. 
The set of all finite words over the alphabet $A$ is denoted by $A^*$. A relation $R\subseteq A^*\times B^*$ is sometimes called a \emph{transduction}, and is said to be \emph{functional} if it is the graph of a function. We let $\dom(R) = \{ u\in A^*\mid \exists v\in B^*, (u,v)\in R\}$ be the \emph{domain} of $R$, and for all $u\in A^*$, we let $R(u) = \{v\in B^*\mid (u,v)\in R\}$. Note that $u\in\dom(R)$ iff $R(u)\neq\varnothing$.

\subparagraph*{Finite State Automata and Transducers} A (non-deterministic) finite automaton over an alphabet $A$ is denoted as a tuple ${\cal A} = (Q, I, \Delta, F)$ where $Q$ is the finite set of states, $I\subseteq Q$ the set of initial states, $F\subseteq Q$ the set of final states, and $\Delta\subseteq Q\times A \times Q$ the transition relation. A \emph{run} on a word $\sigma_1\dots \sigma_n$ is a sequence of states $q_1\dots q_{n+1}$ such that  $(q_i,\sigma_i,q_{i+1})$ for all $i\in[n]$. It is accepting if and only if $q_1\in I$ and $q_{n+1}\in F$. We denote by $L(\cal A)$ the language accepted by $\cal A$, i.e. the set of words on which there is an accepting run. An automaton is said to be \emph{trim} if for any state $q$, there exists at least one accepting run visiting $q$. When ${\cal A}$ is deterministic, we denote the transition function as $\delta : Q\times A \rightarrow Q$.

A \emph{rational transducer} $\cal T$ over an input alphabet $A$ and an output alphabet $B$ is a (non-deterministic) finite automaton over a finite subset of $A^*\times B^*$. A transition $(p,(u,v),q)$ of $\cal T$ is often denoted $p\xrightarrow{u\mid v}_{\cal T} q$. More generally, we write $p\xrightarrow{u_1\dots u_n\mid v_1\dots v_n}_{\calT} q$ whenever there exists a run $\rho$ of $\calT$ from $p$ to $q$ on $(u_1,v_1)\dots (u_n,v_n)$. The \emph{relation recognised} by $\cal T$, denoted as $R_{\cal T}\subseteq A^*\times B^*$, is defined as 
$R_{\cal T} = \{ (u,v)\mid\exists q_0\in I,q_f\in F,q_0\xrightarrow{u\mid v}_{\calT} q_f\}$. The relations recognised by rational transducers are called \emph{rational relations}.

When rational transducers recognise functions, it is often convenient to restrict their transitions to input words of length one exactly, modulo the possibility of producing a word on accepting states. This defines the so-called class of real-time transducers, which is expressively equivalent to rational transducers when restricted to functions. 
Formally, a \emph{real-time transducer} (or simply, a transducer in the sequel) $\calT$ over input alphabet $A$ and output alphabet $B$ is given by a tuple ${\cal T} = (Q, I,\Delta, F, \lambda)$ where 
$(Q,I,\Delta, F)$ is a finite automaton over a finite subset of $A\times B^*$, and $\lambda : F\rightarrow B^*$ is a final output function.

Given a word $u = a_1 \cdots a_n \in A^*$ where $a_i\in A$ for all $i$, a run $\rho$ of $\calT$ over $u$ is a sequence $q_0 \cdots q_n$ such that $q_0 \in I$ and $(q_{i-1},a_i,v_i,q_i) \in \Delta$ for all $i \in [n]$. The \emph{input word} of the run $\rho$ is $u = a_1 \cdots a_n$ and the \emph{output word} of $\rho$ is $v_1 \cdots v_n \cdot \lambda(q_n)$ if $q_n$ is a final state; otherwise, $v_1 \cdots v_n$. As for rational transducers, the relation $R_{\cal T}$ recognised by $\cal T$ is defined as the set of pairs $(u,v)$ such that $u$ (resp. $v$) is the input (resp. output) word of some accepting run. We often confuse $\cal T$ with $R_{\cal T}$, and may write $\dom(\calT)$ for $\dom(R_\calT)$, or $\calT(u)$ for $R_\calT(u)$.

 The \emph{underlying automaton} of $\calT$ is the automaton obtained by projection on inputs, i.e. the automaton ${\cal A} = (Q, I, \Delta', F)$ such that $\Delta' = \{(q,a,q')\mid \exists (q,a,v,q')\in\Delta\}$. Note that $\dom(\calT) = L({\cal A})$. The transducer $\calT$ is said to be \emph{trim} if its underlying automaton is trim.

  The cartesian product, denoted $\calT_1 \times \calT_2$,
  of two transducers $\calT_i = (Q_i, I_i,\Delta_i, F_i, \lambda_i)$, $i\in[2]$, is the transducer $(Q_1 \times Q_2, I_1 \times I_2,\Delta, F_1 \times F_2, \lambda)$  where $((p_1,p_2),a,(v_1,v_2),(q_1,q_2)) \in \Delta$ if $(p_i,a,v_i,q_i) \in \Delta_i$ for $i \in [2]$, and, $\lambda(p_1,p_2) = (\lambda_1(p_1),\lambda_2(p_2))$ for $(p_1,p_2) \in F_1 \times F_2$.

\subparagraph*{(Sub)classes of rational functions} Let $\calT$ be a real-time transducer. When $R_\calT$ is functional, $\calT$ is said to be functional as well, and the functions recognised by functional transducers are called \emph{rational functions}. If the underlying automaton of $\calT$ is unambiguous (i.e., has at most one accepting run on any input), $\calT$ is referred to as an \emph{unambiguous transducer}. It is well-known that a function is recognised by real-time transducer iff, it is recognised by a rational transducer~\cite{Berstel} iff, it is recognised by an unambiguous transducer~\cite{eilenberg1974automata}.

\emph{Sequential transducers} are those whose underlying automaton is deterministic and they define functions known as \emph{sequential functions}. In that case, we denote the transition function as $\delta : Q\times A\rightarrow Q\times B^*$. The functions recognised by transducers that are
finite disjoint unions of sequential transducers are called \emph{multi-sequential functions}~\cite{DBLP:conf/stacs/ChoffrutS86,DBLP:journals/ijfcs/JeckerF18}. The symmetric counterpart of multi-sequential is the class of \emph{series-sequential functions}.

A transducer $\calT$ is \emph{series-sequential} if it is a finite disjoint union of sequential transducers $\calD_1, \ldots , \calD_k$
where additionally, for every $1 \leq i < k$,  there is
a single transition from a (not necessarily final) state $q_i$ of $\calD_i$
to the initial state of the next transducer $\calD_{i+1}$. Moreover, the initial state of $\calT$ is the initial state of $\calD_1$ (the initial states of $\calD_i$ is not considered as initial in $\calT$, for all $2\leq i\leq k$).
In particular, non-determinism is allowed, for all $1\leq i<k$, only in state $q_i$, from which it is possible to move to $\calD_{i+1}$ or to stay in $\calD_i$. We denote\footnote{This notation should not be confused with the split-sum operator of~\cite{DBLP:conf/csl/AlurFR14}, which is semantically different.} such a transducer as $\calD_1 \cdots \calD_k$.
A function is \emph{series-sequential} if it is recognised
by a series-sequential transducer.

\subparagraph{Distances between words and word functions} We recall that a metric on a set $E$ is a mapping
$d : E^2\rightarrow\mathbb{R}^+\cup \{\infty\}$ satisfying the separation, symmetry and triangle inequality axioms. 
Classical metrics between finite words are 
the \emph{edit distances}.
An edit distance between two words is the minimum number of edit operations required to rewrite a word to another if possible, and $\infty$ otherwise. Depending on the set of allowed edit operations, we get different edit distances.
\Cref{table:editdistance} gives widely used edit distances with their edit operations.

\begin{table}[h!]
\centering 
\begin{tabular}{|p{4.8cm}|p{1.2cm}|p{6cm}|}
\hline
Edit Distances & Notation & Edit Operations \\
\hline
Hamming   & \ $d_h$   & letter-to-letter substitutions   \\
Longest Common Subsequence   & \ $d_{lcs}$   & insertions and deletions   \\
Levenshtein   & \ $\dlev$   & insertions, deletions and substitutions   \\
Damerau-Levenshtein   & \ $d_{dl}$   & insertions, deletions, substitutions and swapping adjacent letters   \\
\hline
\end{tabular}
\caption{Edit Distances \cite{editdistance}}
\label{table:editdistance}
\end{table}

Distances between words can be lifted to that between functions from words to words.
\begin{definition}[Metric over Functions \cite{editdistance}]\label{def:distfunc}
 Let $d$ be a metric on words over some alphabet $B$. Given two partial functions $f_1,f_2: A^* \rightarrow B^*$, the distance between $f_1$ and $f_2$ is defined as 
$$d(f_1,f_2) = \begin{cases} \sup \left \{\,  d(f_1(w), f_2(w)) \,\mid\,  w \in \dom(f_1) \right \} & \text{ if $\dom(f_1) = \dom(f_2)$} \\
 \infty & \text{ otherwise } 
 \end{cases}$$
\end{definition}

It is shown that $d$ is a metric over functions (Proposition 3.2 of \cite{editdistance}). 
The distance between two functional transducers is defined as the distance between the functions they recognise. A notion closely related to the distance between functions is \emph{diameter} of a relation. The diameter of a relation $R$ w.r.t.~metric $d$, denoted by $\dia{R}{d}$ is defined to be the supremum of the distance of every pair in the relation, i.e.,
$\dia{R}{d} = \sup \{d(u,v) \mid (u,v) \in R\}$.

The distance between rational functions and diameter of rational relation w.r.t.~the metrics given in \Cref{table:editdistance} are computable \cite{editdistance}. The computability of distance and diameter relies on the notion of conjugacy.  Two words $u$ and $v$ are conjugate if there exist words $x,y$ such that $u=xy$ and $v =yx$. In other words, they are cyclic shifts of each other. For example, words $aabb$ and $bbaa$ are conjugate with $x=aa$ and $y=bb$. But $aabb$ and $abab$ are not conjugate. Conjugacy is an equivalence relation over words.

\begin{proposition}\label{prop:conj}
    Let $x,y,x',y',u,v$ be words and $c,C \in \mathbb{N}$. For any metric $d$ in \Cref{table:editdistance}, if $d(xu^ky,x'v^{ck}y') \leq C$ for all $k \geq 0$, then $|u| = |v^{c}|$ and the primitive roots of $u$ and $v$ are conjugate. 
\end{proposition}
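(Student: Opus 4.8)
The plan is to establish the two conclusions separately: first the length equality $|u| = |v^c|$, and then the conjugacy of the primitive roots, reducing the latter to an equal-length version of the statement.

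For the length statement, I would use that every metric $d$ in \Cref{table:editdistance} dominates the length difference, i.e. $d(w,w') \geq \bigl||w|-|w'|\bigr|$: substitutions and adjacent swaps preserve length while each insertion or deletion changes it by one, so at least $\bigl||w|-|w'|\bigr|$ operations are required (and for Hamming, finiteness already forces equal lengths). Writing $w_k = xu^ky$ and $w'_k = x'v^{ck}y'$, the length difference equals $\bigl|A + k(|u|-c|v|)\bigr|$ where $A = |x|+|y|-|x'|-|y'|$ is a constant. If $|u|\neq c|v|$ this tends to infinity with $k$, contradicting the uniform bound $C$; hence $|u| = c|v| = |v^c|$.

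For the conjugacy, I would set $U = u$ and $V = v^c$, so that $v^{ck} = V^k$ and, by the first part, $|U| = |V| =: p$. It then suffices to prove the core claim: if $|U| = |V|$ and $d(xU^ky, x'V^ky') \leq C$ for all $k$, then $U$ and $V$ are conjugate; the proposition follows since conjugate words have conjugate primitive roots and the primitive root of $v^c$ equals $\rho_v$, giving $\rho_u \sim \rho_v$. To prove the core claim I would exploit the \emph{alignment} induced by a bounded edit script. For each of the four distances, an optimal transformation of cost $\leq C$ yields a monotone partial matching $\phi$ between the positions of $xU^ky$ and $x'V^ky'$ that matches equal letters on all but $O(C)$ positions (deletions and substitutions account for at most $C$ positions, and for Damerau–Levenshtein the at most $C$ swaps spoil at most $2C$ further positions), and along which the shift $\phi(i)-i$ is piecewise constant with $O(C)$ break-points and bounded in absolute value by $C$. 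Zooming into the periodic middle, for $k$ large a pigeonhole argument yields a sub-interval $I$ of the periodic region of $xU^ky$, of length $> Cp$, on which the shift is a single constant $s$ and which $\phi$ maps into the periodic region of $x'V^ky'$. On $I$, away from the $O(C)$ spoiled positions, $w_k[i] = w'_k[i+s]$; since the two words read $U$ and $V$ periodically, this says $U[j] = V[(j+t)\bmod p]$ for a fixed offset $t$ determined by $s$, $|x|$ and $|x'|$. As every residue modulo $p$ occurs more than $C$ times in $I$, some occurrence is unspoiled, so this identity holds for all $j$, which means exactly that $V$ is the cyclic rotation of $U$ by $t$; hence $U$ and $V$ are conjugate.

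The hard part will be the alignment step: making the matching $\phi$, its piecewise-constant shift, and the $O(C)$ bound on spoiled positions precise and uniform across all four edit distances (the Damerau–Levenshtein swaps are the most delicate, being locally non-monotone), and then guaranteeing that a shift-constant window of length $> Cp$ sits entirely inside both periodic regions once $k$ is large, which requires careful bookkeeping of the fixed offsets contributed by $x,x',y,y'$.
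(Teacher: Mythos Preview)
Your length argument is the same as the paper's. For the conjugacy part, however, the paper takes a different and shorter route: rather than proving directly that $u$ and $v^c$ are cyclic rotations of one another via an explicit shift-alignment analysis, it simply observes that for $k$ large enough (the paper takes $k = 2^{|u|+|v|}$) a bounded edit script must leave some portion of the $u$-block matching a portion of the $v$-block unedited, of length at least $|u|+|v|$, and then invokes the Fine and Wilf theorem to conclude immediately that $\rho_u$ and $\rho_v$ are conjugate. So both approaches rest on the same ``long unedited overlap'' pigeonhole, but the paper outsources the combinatorics of periodicity to Fine and Wilf, whereas you re-derive the rotation by hand. Your route is more self-contained and actually proves the slightly stronger fact that $u$ and $v^c$ themselves are conjugate; the price is the delicate bookkeeping you flag in your ``hard part'' (piecewise-constant shift, handling Damerau swaps, ensuring the window sits inside both periodic regions), all of which the paper sidesteps by citing Fine and Wilf.
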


\begin{proof}
    Since $d(xu^ky,x'v^{ck}y') \leq C$, we get $|u| = |v^c|$. Otherwise, as $k$ increases the length difference of $xu^{k}y$ and $x'v^{ck}y'$ increases, and hence their distance will not be bounded.

    Since $|u| = |v^c|$, either both $u$ and $v$ are nonempty, or $u=v= \epsilon$. In the latter case, $u$ and $v$ are conjugate. Assume that $u$ and $v$ are nonempty words. Take $k = 2^{|u| + |v|}$. Since $d(xu^ky,x'v^{ck}y') \leq C$ there exist large portions of $u$’s and $v$’s that match. In fact, $u$'s and $v$'s overlap at least of length $|u| + |v|$. By Fine and Wilf's\footnote{The Fine and Wilf's theorem states that if some powers of two words $u$ and $v$ share a common factor of length $|u| + |v| - \gcd(u,v)$ then their primitive roots are conjugate.} theorem, the primitive roots of $u$ and $v$ are conjugate. 
\end{proof}

\begin{proposition}[\cite{editdistance}]\label{prop:levboundconj}
    Given a rational relation $R$ defined by a transducer $\calT$, $\dia{R}{d} < \infty$ for $d \in \{\dlev,d_{lcs},d_{dl}\}$ if and only if every pair of input-output words generated by loops in $\calT$ are conjugate.
   
\end{proposition}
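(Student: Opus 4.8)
The plan is to assume without loss of generality that $\calT$ is trim (trimming removes no loops and leaves $R$ unchanged), and to prove the two implications separately, using \Cref{prop:conj} for the forward direction and a pumping/decomposition argument for the backward one.

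\textbf{From finite diameter to conjugate loops.} Suppose $\dia{R}{d}=C<\infty$, and let $p\xrightarrow{x\mid y}_{\calT}p$ be an arbitrary loop. Since $\calT$ is trim, there is a run from an initial state to $p$ producing a pair $(\alpha,\beta)$, and a run from $p$ to a final state producing $(\gamma,\delta)$ (with $\delta$ including the relevant final output of $\lambda$). Iterating the loop $k$ times yields, for every $k\ge 0$, a pair $(\alpha x^k\gamma,\ \beta y^k\delta)\in R$, hence $d(\alpha x^k\gamma,\beta y^k\delta)\le C$. Applying \Cref{prop:conj} with $c=1$ (and the obvious substitution) gives $|x|=|y|$ together with the fact that the primitive roots of $x$ and $y$ are conjugate. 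Writing $x=\rho_x^m$ and $y=\rho_y^{m'}$, conjugacy of $\rho_x,\rho_y$ forces $|\rho_x|=|\rho_y|$, which combined with $|x|=|y|$ gives $m=m'$; and if $\rho_x=rs$, $\rho_y=sr$, then $x=(rs)^m$ and $y=(sr)^m$ are cyclic shifts of one another by $|r|$ positions, so $x$ and $y$ are conjugate.

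\textbf{From conjugate loops to finite diameter.} Here the goal is a single constant $C$, depending only on $\calT$, with $d(u,v)\le C$ for all $(u,v)\in R$. The starting observation is purely local: a \emph{simple} loop $p\xrightarrow{x\mid y}_{\calT}p$ (no repeated intermediate state) has $|x|\le |Q|$, and, being conjugate, satisfies $y=\mathrm{rot}_t(x)$ for some offset $t<|x|\le |Q|$. This alignment survives iteration, since $y^k=\mathrm{rot}_t(x^k)$, so $d(x^k,y^k)\le 2t\le 2|Q|$ for all $k$ and for each $d\in\{\dlev,d_{lcs},d_{dl}\}$ (a cyclic shift by $t$ is realised by $t$ deletions and $t$ insertions, and $d_{dl}\le \dlev\le d_{lcs}$). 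Thus iterating a single simple loop costs a bounded number of edits, uniformly in the number of iterations.

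\textbf{The main obstacle} is to lift this per-loop bound to a global bound along an arbitrary accepting run, in which many loops at many states are interleaved arbitrarily often. The plan is to argue SCC by SCC: in the SCC-DAG a run visits each component in one contiguous block, so there are at most $|Q|$ such blocks, separated by at most $|Q|$ inter-component transitions of bounded output length. It then suffices to bound, by a constant, the distance $d(x,y)$ between the input $x$ and the output $y$ of a maximal sub-run that stays inside one SCC $S$; the triangle inequality then sums the $O(|Q|)$ blocks and separators into a global bound. The genuinely combinatorial step is to show that inside $S$ all loops share a \emph{common cyclic alignment}: because every loop of $S$, and in particular every product of loops through a fixed base state of $S$, is again a loop and hence conjugate, a Fine--Wilf argument (as in the proof of \Cref{prop:conj}) forces a single primitive word, up to conjugacy, to govern all cyclic labels of $S$. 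This makes the output of any $S$-sub-run a cyclic shift of its input by a \emph{bounded} offset, independent of which and how many loops were taken. Establishing this uniform offset is where the real difficulty lies; once it is available, the per-block bound and the final summation are routine.
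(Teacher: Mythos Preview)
The paper does not prove this proposition at all: it is quoted from~\cite{editdistance} and used as a black box. So there is no ``paper's own proof'' to compare against; I can only assess your attempt on its merits.

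Your forward direction is correct and is essentially the intended argument: trim, pump a loop, and apply \Cref{prop:conj}.

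Your backward direction, however, is not a proof but a plan with an acknowledged gap, and the plan as stated contains a wrong step. You claim that inside an SCC ``a Fine--Wilf argument forces a single primitive word, up to conjugacy, to govern all cyclic labels of $S$''. This is false in general: take a single state $p$ with two self-loops labelled $(ab,ba)$ and $(aab,aba)$. Every loop at $p$ (and every composite) has conjugate input/output, yet the inputs $ab$ and $aab$ have non-conjugate primitive roots. What the hypothesis \emph{does} enforce is subtler: not a common primitive root, but a \emph{consistent cyclic offset} at each state (so that concatenating loops preserves conjugacy). Establishing precisely this --- that for each state $p$ in an SCC there is a fixed shift $s_p$ such that every loop at $p$ has output equal to the input rotated by $s_p$, and that these shifts are coherent along edges --- is the heart of the argument, and your sketch neither states it correctly nor proves it. Without it, the per-loop bound $d(x^k,y^k)\le 2|Q|$ does not aggregate into a global bound: naively summing over removed loops only gives a bound linear in $|u|$, not a constant.

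In short: the forward direction is fine; the backward direction is incomplete, and the key structural claim you rely on is misformulated. Since the paper defers to~\cite{editdistance}, you would need to either reproduce that argument or give a corrected version of the uniform-offset claim with a proof.
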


\section{Twinning Properties}
 The class of sequential functions has been characterised by transducers satisfying the so called \emph{twinning property}. In this section, we recall this property and introduce three variants --- approximate twinning property, strongly connected twinning property and Hamming twinning property. We also show that these properties on transducers are decidable as well as independent of the representation of the transducers.
All these properties are expressed as particular conditions on twinning patterns. A \emph{twinning pattern} for a transducer $\calT = (Q, I, \Delta, F, \lambda)$ over $A,B$ is a tuple $(p_1,q_1,p_2,q_2,u,v,u_1,v_1,u_2,v_2)\in Q^4\times (A^*)^2 \times (B^*)^4$ such that the following runs (graphically depicted) exist:
\begin{center}
\begin{tikzpicture}[draw=black,shorten >=1pt, auto, node distance=2cm, on grid, >=stealth,scale=0.8, every node/.style={scale=0.8}]
    \tikzset{state/.style={circle, draw=black, thick, minimum size=0.5cm}}
        \node[state] (q0) {$p_1$};
        \node[state] (q1) [right=of q0] {$q_1$};
        \node[state] (p0) [right=of q1]{$p_2$};
        \node[state] (p1) [right=of p0] {$q_2$};
    
        \path[->]
            (q0) edge [above] node {$u \mid u_1$} (q1)
            (q1) edge [loop above] node {$v \mid v_1$} (q1)
            (p0) edge [above] node {$u \mid u_2$} (p1)
            (p1) edge [loop above] node {$v \mid v_2$} (p1);
    \end{tikzpicture}
\end{center}

The longest common prefix of any two words is denoted by $u\wedge v$. The \emph{delay} between $u$ and $v$, denoted by $\delay(u,v)$, is the pair $(u',v')$ such that $u = (u\wedge v)u'$ and $v=(u\wedge v)v'$.

\begin{definition}[Twinning Property (\TP)]\label{def:tp}
    Let $\calT$ be a trim transducer.
    We say that $\calT$ satisfies \emph{twinning property} if for each twinning pattern
    $(p_1,q_1,p_2,q_2,u,v,u_1,v_1,u_2,v_2)$ such that $p_1,p_2$ are initial,   $\delay(u_1,u_2) = \delay(u_1v_1,u_2v_2)$ holds. 
\end{definition}

It is well-known that a function recognised by a transducer $\calT$ is sequential iff $\calT$ satisfies the twinning property~\cite{DBLP:journals/tcs/Choffrut77,twinningproperty}. We now define its approximate variant.

\begin{definition}[Approximate Twinning Property (\ATP)]\label{def:atp}
    A trim transducer $\calT$ satisfies \emph{approximate twinning property}
    if for each twinning pattern $(p_1,q_1,p_2,q_2,u,v,u_1,v_1,u_2,v_2)$
    where $p_1,p_2$ are initial, the words $v_1$ and $v_2$ are conjugate. 
\end{definition}

\begin{definition}[Strongly Connected Twinning Property (\STP)]\label{def:stp}
    A trim transducer $\calT$ satisfies \emph{strongly connected twinning property} if for each twinning pattern $(p_1,q_1,p_2,q_2,u,v,u_1,v_1,u_2,v_2)$ such that $p_1=p_2$ (not necessarily initial) and $p_1,q_1,q_2$ are in the same strongly connected component,  $\delay(u_1,u_2) = \delay(u_1v_1,u_2v_2)$ holds.
\end{definition}

\begin{definition}[Hamming Twinning Property (\HTP)]\label{def:htp}
    A trim transducer $\calT$ satisfies \emph{Hamming twinning property} if for each twinning pattern $(p_1,q_1,p_2,q_2,u,v,u_1,v_1,u_2,v_2)$ such that $p_1,p_2$ are initial, it holds that $|v_1|=|v_2|$ and there is no mismatch between $v_1$ and $v_2$, i.e. for all position $i\in[\text{max}(|u_1|,|u_2|),\text{min}(|u_1v_1|,|u_2v_2|)]$, $(u_1v_1)[i] = (u_2v_2)[i]$.
\end{definition}

\begin{proposition}\label{proposition:TPrelations}
    Each trim transducer satisfying \TP{} also satisfies \ATP{}, \STP{} and \HTP.
\end{proposition}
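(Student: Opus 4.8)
The plan is to fix an arbitrary twinning pattern $(p_1,q_1,p_2,q_2,u,v,u_1,v_1,u_2,v_2)$ and read off each of the three weaker properties directly from the delay equality supplied by \TP{}. I set $(s,t)=\delay(u_1,u_2)$, so that $u_1=ws$, $u_2=wt$ with $w=u_1\wedge u_2$ and $s\wedge t=\epsilon$; the last condition means that either one of $s,t$ is empty, or $s$ and $t$ start with distinct letters. I will use freely the elementary fact that $\delay(zx,zy)=\delay(x,y)$ for every common prefix $z$, since $zx\wedge zy=z(x\wedge y)$.

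For \ATP{}, assume $p_1,p_2$ are initial, so \TP{} applies and yields $\delay(u_1v_1,u_2v_2)=(s,t)$. I split according to $s\wedge t=\epsilon$. If $s=\epsilon$ then $u_2=u_1t$, and $\delay(u_1v_1,u_2v_2)=(\epsilon,t)$ means $u_2v_2=u_1v_1t$; cancelling $u_1$ gives the conjugacy equation $v_1t=tv_2$, so $v_1$ and $v_2$ are conjugate. The case $t=\epsilon$ is symmetric and gives $sv_1=v_2s$. If both $s,t$ are nonempty they begin with different letters, hence $\delay(u_1v_1,u_2v_2)=(sv_1,tv_2)$, and equating with $(s,t)$ forces $v_1=v_2=\epsilon$. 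In every case $v_1$ and $v_2$ are conjugate, which is \ATP{}.

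For \STP{}, take a pattern with $p_1=p_2=p$ and $p,q_1,q_2$ in one strongly connected component. Since $\calT$ is trim, $p$ is reachable from some initial state $i$, say $i\xrightarrow{z\mid z'}p$. Prepending this run to both branches produces the twinning pattern $(i,q_1,i,q_2,zu,v,z'u_1,v_1,z'u_2,v_2)$, whose two sources are the initial state $i$. Applying \TP{} gives $\delay(z'u_1,z'u_2)=\delay(z'u_1v_1,z'u_2v_2)$, and erasing the common prefix $z'$ on both sides yields $\delay(u_1,u_2)=\delay(u_1v_1,u_2v_2)$, which is exactly the requirement of \STP{}. (Note the strong-connectedness hypothesis is not even needed here: \TP{} already forces the delay equality for every same-source pattern.)

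For \HTP{}, I reuse the case analysis of the \ATP{} part for a pattern with $p_1,p_2$ initial. Conjugate words have equal length, so $|v_1|=|v_2|$. For the mismatch condition I consider the positions at which both $v_1$ and $v_2$ contribute to $u_1v_1$ and $u_2v_2$. If $v_1=v_2=\epsilon$ (which includes the both-nonempty case above) this set is empty and there is nothing to check. If $s=\epsilon$, these positions correspond to the indices $j$ of $v_1$ with $j>|t|$, and reading the conjugacy equation $v_1t=tv_2$ letterwise gives $v_1[j]=v_2[j-|t|]$ for exactly those $j$, i.e.\ no mismatch; the case $t=\epsilon$ is symmetric via $sv_1=v_2s$. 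This establishes \HTP{} and completes the argument. I expect the only delicate step to be this final alignment: one must convert the index range $[\max(|u_1|,|u_2|),\min(|u_1v_1|,|u_2v_2|)]$ into indices of $v_1$ (equivalently $v_2$) and match it precisely with the portion of the conjugacy equation that certifies letter equality, while all the remaining work is routine bookkeeping on prefixes and delays.
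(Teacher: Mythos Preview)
Your argument is correct, and for \STP{} it is essentially the paper's own proof (prepend a run from an initial state to $p$ and cancel the common prefix in the delays; the paper even uses the same observation that the strong-connectedness hypothesis is not actually needed). Where you differ is in \ATP{} and \HTP{}. For \ATP{}, the paper iterates the loop to get $\delay(u_1v_1^i,u_2v_2^i)=\delay(u_1,u_2)$ for all $i$, deduces $|v_1|=|v_2|$, and then appeals to Fine and Wilf's theorem to obtain conjugacy; you instead do a direct case split on $(s,t)=\delay(u_1,u_2)$ and extract the word equation $v_1t=tv_2$ (or its mirror), which already \emph{is} the classical characterisation of conjugacy. This is more elementary and avoids Fine and Wilf entirely. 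For \HTP{}, the paper argues by contrapositive: a mismatch inside the overlap of $v_1$ and $v_2$ forces $\delay(u_1v_1,u_2v_2)\neq\delay(u_1v_1^2,u_2v_2^2)$, contradicting \TP{}; you stay positive and read the absence of mismatches directly off the same conjugacy equation $v_1t=tv_2$, which indeed yields $v_1[j]=v_2[j-|t|]$ on exactly the relevant index range. Your route is slightly longer to write out carefully (the index bookkeeping you flag at the end is real but routine), while the paper's contrapositive is a one-liner; conversely, your approach makes the link between \ATP{} and \HTP{} transparent, since both fall out of the same equation.
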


\begin{proof}
   Let $\calT$ be a trim transducer which satisfies \TP.

We first show it satisfies \ATP. Consider a twinning pattern $(p_1,q_1,p_2,q_2,u,v,u_1,v_1,u_2,v_2)$ such that $p_1,p_2$ are initial. Then, $\delay(u_1,u_2)=\delay(u_1v_1,u_2v_2) = \delay(u_1v_1^i,v_1v_2^i)$ for all $i\geq 0$, which implies that $|v_1|=|v_2|$. Therefore, for all $n\geq 0$, there exists $i$ such that $v_1^i$ and $v_2^i$ have a common factor of length at least $n$. By Fine and Wilf's theorem, it implies that $v_1$ and $v_2$ have conjugate primitive roots, and since $|v_1|=|v_2|$, we get that $v_1$ and $v_2$ are conjugate.

Now, consider \STP. Let $(p,q_1,p,q_2,u,v,u_1,v_1,u_2,v_2)$ be some twinning pattern where $p,q_1,q_2$ are in the same SCC. Since $\calT$ is trim, $p$ is accessible from the initial state $q_0$, by a run $q_0\xrightarrow{\alpha\mid \beta}_{\calT} p$. Note that $(q_0,q_1,q_0,q_2,\alpha u, v, \beta u_1,v_1,\beta u_2,v_2)$ is also a twinning pattern. Hence by the twinning property, $\delay(\beta u_1,\beta u_2) = \delay(\beta u_1v_1,\beta u_2v_2)$, from which we can conclude as $\delay(\beta u_1,\beta u_2) = \delay(u_1,u_2)$ and $\delay(\beta u_1v_1,\beta u_2v_2) = \delay(u_1v_1,u_2v_2)$.

Finally, if the \HTP{} is not satisfied, then considering a twinning pattern as in the definition of the \HTP, there is a position $i$ such that $(u_1v_1)[i] \neq (u_2v_2)[i]$, and therefore $\delay(u_1v_1,u_2v_2)\neq \delay(u_1v_1v_1,u_2v_2v_2)$, which fails the \TP, contradiction.
\end{proof}

The following lemma states that \ATP{}, \STP{} and \HTP{} is preserved between transducers up to finite edit distance, and so in particular between equivalent transducers. This shows that these properties do not depend on the representation of the transductions, not even on the representation of close transductions.

\begin{lemma}\label{lem:preserveTPs}
    Let $\trans$ and $\transs$ be two trim transducers
    satisfying $\dom(\trans) = \dom(\transs)$, and
    such that there exists an edit distance $d$ in \Cref{table:editdistance}
    and a constant $C \in \mathbb{N}$
    for which
    \[
    d(\trans(u),\transs(u))<C \text{ for all $u \in \dom(\trans)$}.
    \]
    Then for every ${\sf P}\in \{\ATP,\STP\}$,
    $\transs$ satisfies ${\sf P}$ if and only if $\trans$ satisfies ${\sf P}$.
    The statement also holds for $d = d_h$ and ${\sf P} = \HTP$.
\end{lemma}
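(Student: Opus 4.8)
The plan is to prove both directions simultaneously by symmetry: since each $d$ is a metric and $\dom(\trans)=\dom(\transs)$, the hypothesis is symmetric in $\trans$ and $\transs$, so it suffices to show that if $\trans$ satisfies ${\sf P}$ then so does $\transs$. I argue the contrapositive, turning a ${\sf P}$-violating twinning pattern of $\transs$ into one of $\trans$.

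For $\ATP$, suppose $\transs$ has a pattern $(p_1,q_1,p_2,q_2,u,v,u_1,v_1,u_2,v_2)$ with $p_1,p_2$ initial and $v_1,v_2$ \emph{not} conjugate. Using trimness, extend $q_1,q_2$ to final states, producing for every $k$ two inputs $\sigma_k=uv^k\omega_1$ and $\tau_k=uv^k\omega_2$ in $\dom(\transs)=\dom(\trans)$ with $\transs(\sigma_k)=a_1v_1^kb_1$ and $\transs(\tau_k)=a_2v_2^kb_2$ for constants $a_i,b_i$. I then analyse $\trans$ on the common prefix $uv^k$: pairing an accepting $\trans$-run on $\sigma_k$ with one on $\tau_k$ inside $\trans\times\trans$ and applying pigeonhole to the product states at the $v$-block boundaries yields, uniformly in large $k$, numbers $B\ge 0$, $a\ge 1$ and two synchronised loops on input $v^a$ reached after the common input $uv^B$, with respective outputs $W_1,W_2$. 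These loops start from initial states, so they form an $\ATP$ twinning pattern of $\trans$; hence $W_1,W_2$ are conjugate. Specialising to $k=B+a\ell$ makes the residual outputs constant, so $\trans(\sigma_{B+a\ell})=x_1W_1^\ell y_1$ and $\transs(\sigma_{B+a\ell})=(a_1v_1^B)(v_1^a)^\ell b_1$; as their distance stays below $C$ for all $\ell$, \Cref{prop:conj} gives $|W_1|=|v_1^a|$ with $\mathrm{prim}(v_1)$ conjugate to $\mathrm{prim}(W_1)$, and symmetrically for $v_2,W_2$. Combining this with the conjugacy of $W_1,W_2$ yields $|v_1|=|v_2|$ and $\mathrm{prim}(v_1)$ conjugate to $\mathrm{prim}(v_2)$; a short auxiliary fact (two words of equal length with conjugate primitive roots are conjugate) then makes $v_1,v_2$ conjugate, the desired contradiction.

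For $\HTP$ with $d=d_h$ the same construction applies, and Hamming rigidity makes it cleaner: $d_h(\transs(w),\trans(w))<C$ forces $|\transs(w)|=|\trans(w)|$ for all $w$ and a mismatch in fewer than $C$ positions. Thus lengths match automatically, and bounded Hamming distance between the eventually periodic words $\transs(\sigma_{B+a\ell})$ and $\trans(\sigma_{B+a\ell})$ forces $W_1$ to be positionally aligned, without mismatch and up to a bounded shift, with the corresponding power of $v_1$ (and likewise for $W_2,v_2$). Feeding the $\HTP$ of $\trans$ applied to $W_1,W_2$ back through these alignments transfers the equal-length and no-mismatch conditions to $v_1,v_2$, giving $\HTP$ for $\transs$. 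The only extra work is bookkeeping the bounded positional offsets, which Hamming distance pins down exactly, with no rotational ambiguity.

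$\STP$ is the main obstacle. First I reformulate a violation: for a pattern with common start $p$ inside one SCC, $\delay(u_1,u_2)\ne\delay(u_1v_1,u_2v_2)$ is equivalent to the delay $\delay(u_1v_1^i,u_2v_2^i)$ being unbounded, i.e. to the two branches producing different right-infinite outputs $u_1v_1^\omega\ne u_2v_2^\omega$ (together with $|v_1|=|v_2|$ once the length drift is separated out). The difficulty is that, unlike $\ATP$, an $\STP$ violation is \emph{not} detectable through the edit distance between two outputs: in the equal-length case $v_1,v_2$ are conjugate but out of phase (e.g.\ $v_1=ab$, $v_2=ba$), and then $\dlev$ between the completed outputs stays bounded, so a naive triangle-inequality argument fails. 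The plan is therefore to split into cases. When $|v_1|\ne|v_2|$, the length drift makes the edit distance between the two completed outputs grow, so the triangle inequality transports the violation to $\trans$ directly. The hard case is $|v_1|=|v_2|$ with a phase drift; here the crucial point is that $p,q_1,q_2$ lie in a single SCC, so the loops can be closed by return paths and, because $\transs$ is \emph{functional}, the phases of its loop outputs are rigidly determined (two cycles on a common state with the same input label must carry the same output, else functionality is contradicted by pumping). I would exploit this to show that the phase of each loop is preserved up to a bounded shift when passing to $\trans$ via \Cref{prop:conj}, and then localise, inside an SCC of $\trans$ reached from a common state, a synchronised fork whose loop outputs inherit the phase drift, witnessing that $\trans$ fails $\STP$. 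Making the interplay between functionality, the SCC return structure, and the phase tracking precise is where the real work lies.
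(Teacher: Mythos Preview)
Your \ATP{} argument is essentially the paper's, modulo direction (you transfer a violation, the paper transfers satisfaction); it is correct, though your ``uniformly in large $k$'' hides a second pigeonhole that the paper avoids by fixing the pattern on one side first and then finding loops on a power of $v$ on the other side. The \HTP{} sketch is likewise plausible and close in spirit to the paper's short argument.

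The \STP{} case has a genuine gap. You correctly isolate the obstacle: a phase-only violation (equal-length, conjugate but misaligned loop outputs) does not make the edit distance between completed outputs grow, so a naive triangle-inequality transfer fails. But your proposed fix does not go through. First, you invoke functionality of $\transs$, which is not among the hypotheses. Second, and more seriously, neither your equal-length case nor your $|v_1|\ne|v_2|$ case explains why the pattern you produce in $\trans$ lies \emph{inside a single SCC of $\trans$}. Pigeonhole on product states along a common input prefix gives two synchronised loops reached from initial states --- an \ATP{}-shaped pattern, not an \STP{} one. In the $|v_1|\ne|v_2|$ case, a growing length gap between $\trans$'s outputs on the two completed inputs yields at best two $\trans$-loops of different output length reached from initial states; nothing places them, together with a common ancestor, in one SCC of $\trans$. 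Your phrase ``localise, inside an SCC of $\trans$ reached from a common state'' names the goal but supplies no mechanism.

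The paper resolves this by working in the direct direction and exploiting the SCC structure on the side where it is given. Starting from an \STP{} pattern of $\trans$, the return paths $w^+:q_1\to p$ and $w^-:q_2\to p$ (available because $p,q_1,q_2$ share an SCC in $\trans$) are used to build a single input
\[
y \;=\; x\,y_1y_2\cdots y_{2N}\,z,\qquad y_i=(uv^{N^2}w^+)^i\,uv^{N^2}w^-,
\]
with $N$ the number of states of $\transs$. Any run of $\transs$ on $y$ revisits some state $r$ at three $y_i$-boundaries, yielding two loops on $r$; this automatically puts the entire $\transs$-pattern inside one SCC of $\transs$. A further pigeonhole on the $v^{N^2}$ block inside each $r$-loop extracts synchronised subloops reached from $r$ via a common prefix, so the \STP{} of $\transs$ applies and gives the delay equality there. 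What remains is a careful position-by-position transfer (via \Cref{claim:TP} and the bounded-distance hypothesis, tracking a shift $|\delta|<C+\max(|\bar x_0|,|x_0'|)$) that pulls the no-mismatch conclusion back to $u_1v_1^m$ versus $u_2v_2^m$ in $\trans$. The construction of $y$ --- interleaving the two return paths so that $\transs$ is forced to loop on a common state --- is precisely the idea your sketch lacks.
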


\begin{proof}
    Let us suppose that $\transs$ satisfies ${\sf P}\in \{\ATP,\STP, \HTP\}$, and show that so does $\trans$. The converse is symmetric. 
    \subparagraph*{For \ATP{}}  Let $(p_1,p_2,q_1,q_2,u,v,u_1,u_2,v_1,v_2)$ be an instance of the twinning pattern
    for $\trans$ with $p_1$ and $q_1$ initial.
    Since  $\trans$ is trim,
    there exist words $w,w',w_1,w_2$ and two accepting states $p_f,q_f$ such that:
    $$
    \begin{array}{llllllllllll}
    p_1 & \xrightarrow{u|u_1}_\trans & p_2 & \xrightarrow{v|v_1}_\trans & p_2 &  \xrightarrow{w|w_1}_\trans & p_f \\
q_1 & \xrightarrow{u|u_2}_\trans & q_2 & \xrightarrow{v|v_2}_\trans & q_2 &  \xrightarrow{w'|w_2}_\trans & q_f \\    
    \end{array}
    $$
Since $\dom(\trans) = \dom(\transs)$,
for all $i\geq 1$, $uv^iw,uv^iw' \in \dom(\trans) = \dom(\transs)$.
Iterating the loop of $\trans$ on $v$ sufficiently many times
causes $\transs$ to also loop on some power of $v$.
Formally, there exist $a,b,c\in\mathbb{N}$,
states $p'_1,p'_2,q'_1,q'_2,p'_f,q'_f$
and words $u'_1,u'_2,v'_1,v'_2,w'_1,w'_2$ such that:

   $$
    \begin{array}{llllllllllll}
    p_1 & \xrightarrow{uv^{a}|u_1v_1^{a}}_\trans & p_2 & \xrightarrow{v^b|v_1^b}_\trans & p_2 &  \xrightarrow{v^cw|v_1^cw_1}_\trans & p_f \\
q_1 & \xrightarrow{uv^{a}|u_2v_2^{a}}_\trans & q_2 & \xrightarrow{v^b|v_2^b}_\trans & q_2 &  \xrightarrow{v^cw'|v_2^cw_2}_\trans & q_f \\    

    p'_1 & \xrightarrow{uv^{a}|u'_1}_\transs & p'_2 & \xrightarrow{v^b|v'_1}_\transs & p'_2 &  \xrightarrow{v^cw|w'_1}_\transs & p'_f \\
q'_1 & \xrightarrow{uv^{a}|u'_2}_\transs & q'_2 & \xrightarrow{v^b|v'_2}_\transs & q'_2 &  \xrightarrow{v^cw'|w'_2}_\transs & q'_f \\    
    \end{array}
    $$
Let $C \in \mathbb{N}$
such that $d(\trans(x),\transs(x)) \leq C$ for all $x \in \dom(\trans)$.
Then in particular for all $k\geq 0$: 
$$
d(u_1v_1^{a+kb+c}w_1,u_1'v_1'^kw_1')\leq C\text{ and }d(u_2v_2^{a+kb+c}w_2,u_2'v_2'^kw_2')\leq C.
$$
From the above inequalities, using \Cref{prop:conj},
we get that $\rho_{v_i}$ and $\rho_{v'_i}$, the primitive roots of $v_i$ and $v'_i$ respectively, are conjugate and $|v_i^{b}| = |v_i'|$ for $i \in [2]$.
Since $\transs$ satisfies the \ATP{},
we also have that $v'_1$ and $v'_2$ are conjugate, and hence their primitive roots are conjugate and $|v'_1| = |v'_2|$.
By transitivity of the conjugacy relation,
we get that $\rho_{v_1}$ and $\rho_{v_2}$ are conjugate. Further, $|v_1| = |v_2|$ since $|\rho_{v_1}| = |\rho_{v_2}|$ and $|v_1^b| = |v_1'| = |v_2'| =|v_2^b|$. Therefore, $v_1$ and $v_2$ are conjugate, and
thus the \ATP{} holds for $\trans$ as well.

\subparagraph*{For \STP{}} 
We suppose that $\transs$ satisfies the \STP{}, and show that $\trans$ satisfies it too.
Let us pick a twinning pattern $(p,q_1,p,q_2,u,v,u_1,v_1,u_2,v_2)$
in $\trans$ such that $p,q_1,q_2$ are in the same strongly connected component,
as in the definition of the \STP{} (Definition~\ref{def:stp}):
\[
    \begin{array}{lllllllllllll}
    \rho_1: & q_I& \xrightarrow{x|x_0}_\trans &
    p & \xrightarrow{u|u_1}_\trans &
    q_1 & \xrightarrow{v|v_1}_\trans &
    q_1 &  \xrightarrow{w^+|w_1}_\trans &
    p & \xrightarrow{z|z_0}_\trans &
    q_F, \\
    \rho_2: &
    q_I & \xrightarrow{x|x_0}_\trans &
    p & \xrightarrow{u|u_2}_\trans &
    q_2 & \xrightarrow{v|v_2}_\trans &
    q_2 &  \xrightarrow{w^-|w_2}_\trans &
    p & \xrightarrow{z|z_0}_\trans &
    q_F.    
    \end{array}
\]
Here the runs with inputs $x$ and $z$ are witnesses of the fact that $\trans$ is trim
and the runs with inputs $w^+$ and $w^-$ are witnesses 
of the fact that $p,q_1,q_2$ are in the same strongly connected component.
Similar to the proof of \ATP{},
we use these runs to build an instance of \STP{} in $\transs$.
To that end, let $N$ denote the number of states of $\transs$.
We study the run of $\transs$ on the word~:
\[
    y = xy_1y_2\ldots y_{2N}z, \textup{ where $y_i = (uv^{N^2}w^+)^iuv^{N^2}w^-$ for every $1 \leq i \leq 2N$}.
\]
Remark that $y \in \dom(\trans)$,
as a run over $y$ can be obtained by pumping and combining the subruns of $\rho_1$ and $\rho_2$.
Therefore, as $\dom(\trans) = \dom(\transs)$,
there is also a run of $\transs$ on $y$,
and since $\transs$ has $N$ states
this run will visit some state $r$ three times between the $y_i$.
Formally, there exist $1 \leq i < j < k \leq 2N$ such that,
if we let $\bar{x} = xy_1y_2 \ldots y_{i-1}$ and $\bar{z} = y_{k}y_{k+1}\ldots y_{2N}z$, we get:
\[
    \begin{array}{lllllllllllllll}
    r_I & \xrightarrow{\bar{x}|\bar{x}_0}_\transs &
    r & \xrightarrow{y_{i} \ldots y_{j-1}|y'}_\transs &
    r &  \xrightarrow{y_{j} \ldots y_{k-1}|y''}_\transs &
    r &  \xrightarrow{\bar{z}|\bar{z}_0}_\transs &
    r_F.    
    \end{array}
\]
Now remark that the input words labeling both loops on $r$
start with the prefix $(uv^{N^2}w^+)^iuv^{N^2}$,
which is followed by $w^-$ in the first loop and $w^+$ in the second.
Since $\transs$ has $N$ states, the last block of $v^{N^2}$ occurring in this common
prefix will visit a synchronised subloop in both loops.
Formally, there is a decomposition
$\bar{u}\bar{u}'\bar{v}\bar{w}^+$ of $y_{i} \ldots y_{j-1}$
and a decomposition $\bar{u}\bar{u}'\bar{v}\bar{w}^-$ of $y_{j} \ldots y_{k-1}$
satisfying:
\begin{gather*}
    \bar{u} = (uv^{N^2}w^+)^i, \quad \quad \quad 
    \bar{u}' = uv^a, \quad \quad \quad 
    \bar{v} = v^b, \quad \quad \quad 
    \bar{w}^+ = v^{c}w^-y_{i+1} \ldots y_{j-1}, \\
    \hfill \bar{w}^- = v^{c}w^+(uv^{N^2}w^+)^{j-i-1}uv^{N^2}w^-y_{j+1} \ldots y_{k-1},\hfill\\
    \begin{array}{lllllllllllll}
    \bar{\rho}_1: &
    s_I  \xrightarrow{\bar{x}|\bar{x}_0}_\transs 
    r  \xrightarrow{\bar{u}|\bar{u}_1}_\transs 
    r_1  \xrightarrow{\bar{u}'|\bar{u}_1'}_\transs 
    s_1  \xrightarrow{\bar{v}|\bar{v}_1}_\transs 
    s_1  \xrightarrow{\bar{w}^+|\bar{w}_1}_\transs 
    r  \xrightarrow{\bar{z}|\bar{z}_0}_\transs 
    s_F, \\
    \bar{\rho}_2: &
    s_I  \xrightarrow{\bar{x}|\bar{x}_0}_\transs 
    r  \xrightarrow{\bar{u}|\bar{u}_2}_\transs 
    r_2  \xrightarrow{\bar{u}'|\bar{u}_2'}_\transs 
    s_2  \xrightarrow{\bar{v}|\bar{v}_2}_\transs 
    s_2  \xrightarrow{\bar{w}^-|\bar{w}_2}_\transs 
    r \xrightarrow{\bar{z}|\bar{z}_0}_\transs 
    s_F.    
    \end{array}
\end{gather*}
Since $\transs$ satisfies the \STP{} by hypothesis,
$\delay(\bar{u}_1\bar{u}_1',\bar{u}_2\bar{u}_2') = \delay(\bar{u}_1\bar{u}_1'\bar{v}_1,\bar{u}_2\bar{u}_2'\bar{v}_2)$.
We now combine this with the fact that the distance
between the outputs of $\trans$ and $\transs$ is bounded by $C$
to show that $\delay(u_1,u_2) = \delay(u_1v_1,u_2v_2)$,
which concludes the proof of the lemma.
First, let us pump the subruns of $\rho_1$
and $\rho_2$ to match the inputs of $\bar{\rho}_1$ and $\bar{\rho}_2$~:
\begin{gather*}
    \begin{array}{lllllllllllll}
    \rho'_1: &
    q_I \xrightarrow{\bar{x}|x'_0}_\trans
    p \xrightarrow{\bar{u}|u_0}_\trans
    p \xrightarrow{\bar{u}'|u_1v_1^a}_\trans
    q_1 \xrightarrow{\bar{v}|v_1^b}_\trans
    q_1 \xrightarrow{\bar{w}^+|w'_1}_\trans
    p \xrightarrow{\bar{z}|z'_0}_\trans
    q_F, \\
    \rho'_2: &
    q_I \xrightarrow{\bar{x}|x'_0}_\trans
    p \xrightarrow{\bar{u}|u_0}_\trans
    p \xrightarrow{\bar{u}'|u_2v_2^a}_\trans
    q_2 \xrightarrow{\bar{v}|v_2^b}_\trans
    q_2 \xrightarrow{\bar{w}^-|w'_2}_\trans
    p \xrightarrow{\bar{z}|z'_0}_\trans
    q_F.    
    \end{array}
\end{gather*}
We then introduce a standard property of the delay,
proved for instance in~\cite{DBLP:journals/ijfcs/FiliotMR20}~:
\begin{claim}\label{claim:TP}
    For every words $s_1,s_2,t_1,t_2$ we have that $\delay(s_1,s_2) = \delay(s_1t_1,s_2t_2)$
    if and only if either $t_1 = t_2 = \epsilon$,
    or $|t_1| = |t_2|$ and there is no mismatch
    between $s_1t_1^{m}$ and $s_2t_2^{m}$ for all $m \in \mathbb{N}$.
\end{claim}
Therefore, in order to show that $\delay(u_1,u_2) = \delay(u_1v_1,u_2v_2)$,
we now suppose that $v_1 \neq \epsilon$ or $v_2 \neq \epsilon$,
and we show that $|v_1| = |v_2|$ and there is no mismatch between  $u_1v_1^{m}$ and $u_2v_2^{m}$
for all $m \in \mathbb{N}$.

\subparagraph*{Same output length}
We first show that $|v_1| = |v_2|$.
Observe that for every $\lambda \in \mathbb{N}$ we have~: 
\[
\begin{array}{lllllll}
\transs(\bar{x}\bar{u}uv^{a + \lambda \cdot b}\bar{w}^+\bar{z})
& = & \bar{x}_0\bar{u}_1\bar{u}_1'\bar{v}_1^{\lambda}\bar{w}_1\bar{z}_0,
& \ &
\transs(\bar{x}\bar{u}uv^{a + \lambda \cdot b}\bar{w}^-\bar{z})
& = & \bar{x}_0\bar{u}_2\bar{u}_2'\bar{v}_2^{\lambda}\bar{w}_2\bar{z}_0,\\
\trans(\bar{x}\bar{u}uv^{a + \lambda \cdot b}\bar{w}^+\bar{z})
& = & x_0'u_0u_1v_1^{a + \lambda \cdot b}w'_1z_0',
& \ &
\trans(\bar{x}\bar{u}uv^{a + \lambda \cdot b}\bar{w}^-\bar{z})
& = & x_0'u_0u_2v_2^{a + \lambda \cdot b}w'_2z'_0.
\end{array}
\]
By the supposition in the statement of the Lemma,
$d(\bar{x}_0\bar{u}_1\bar{u}_1'\bar{v}_1^{\lambda}\bar{w}_1\bar{z}_0,
x_0'u_0u_1v_1^{a + \lambda \cdot b}w'_1z_0') \leq C$
and
$d(\bar{x}_0\bar{u}_2\bar{u}_2'\bar{v}_2^{\lambda}\bar{w}_2\bar{z}_0,
x_0'u_0u_2v_2^{a + \lambda \cdot b}w'_2z'_0) \leq C$.
This implies that $|v_1|^b = \bar{v}_1$ and  $|v_2|^b = \bar{v}_2$,
as otherwise choosing $\lambda$ large enough yields outputs
with a length difference too large  to be fixed with atmost $C$ number of edits. Hence, it suffices to show $|\bar{v}_1| = |\bar{v}_2|$. This is implied by the fact $\delay(\bar{u}_1\bar{u}_1',\bar{u}_2\bar{u}_2') = \delay(\bar{u}_1\bar{u}_1'\bar{v}_1,\bar{u}_2\bar{u}_2'\bar{v}_2)$ along with the 
\Cref{claim:TP}.

\subparagraph*{No mismatch}
As we have shown that $|v_1| = |v_2|$, let us now pick $m \in \mathbb{N}$
and show that there is no mismatch
between $u_1v_1^m$ and  $u_2v_2^m$ to conclude the proof via \Cref{claim:TP}.
Remark that the words $u_1v_1^m$
and $u_2v_2^m$ appear as subwords of the
outputs of the runs $\rho_1'$ and $\rho_2'$.
We now derive three key equations on output words
(\Cref{equ:loopSize}, \Cref{equ:pre} and \Cref{equ:STP})
by pumping and assembling parts of $\rho_1'$, $\rho_2'$,
$\bar{\rho}_1$ and $\bar{\rho}_2$.
We then combine these equations to get the desired result.
We begin by introducing some notation. Let $M\in \mathbb{N}$ satisfying
\[
M > \frac{m \cdot |v_1| + C + \max(|\bar{x}_0|,|x_0'|)}{b \cdot |v_1|}.
\]
Moreover, we consider the following output words~:
\[  \begin{array}{lllllllllllllll}
    \alpha_1 & = & u_0u_1v_1^{a + Mb}, & &
    \alpha_2 & = & u_0u_2v_2^{a + Mb}, & &
    \beta_1 & = & w_1', &&
    \beta_2 & = & w_2',\\
    \bar{\alpha}_1 & = &\bar{u}_1\bar{u}_1'\bar{v_1}^M, & &
    \bar{\alpha}_2 & = & \bar{u}_2\bar{u}_2'\bar{v_2}^M, & &
    \bar{\beta}_1 & = & \bar{w}_1, &&
    \bar{\beta}_2 & = & \bar{w}_2.
    \end{array}
\]
First, remark that for every $\lambda \in \mathbb{N}$ we have~:
\[
\begin{array}{lllllll}
\transs(\bar{x}(\bar{u}\bar{u}'\bar{v}^M\bar{w}^+)^\lambda\bar{z})
& = & \bar{x}_0(\bar{\alpha}_1\bar{\beta}_1)^\lambda\bar{z}_0,
& \ &
\transs(\bar{x}(\bar{u}\bar{u}'\bar{v}^M\bar{w}^-)^\lambda\bar{z})
& = & \bar{x}_0(\bar{\alpha}_2\bar{\beta}_2)^\lambda\bar{z}_0,\\
\trans(\bar{x}(\bar{u}\bar{u}'\bar{v}^M\bar{w}^+)^\lambda\bar{z})
& = & x_0'(\alpha_1\beta_1)^\lambda z_0',
& \ &
\trans(\bar{x}(\bar{u}\bar{u}'\bar{v}^M\bar{w}^+)^\lambda\bar{z})
& = & x_0'(\alpha_2\beta_2)^\lambda z_0'.
\end{array}
\]
Since the edit distance between
the outputs of $\trans$ and $\transs$ is bounded by $C$, this yields~:
\begin{equation}\label{equ:loopSize}
    |\alpha_1 \beta_1| = |\bar{\alpha}_1 \bar{\beta}_1| \textup{ and } 
    |\alpha_2 \beta_2| = |\bar{\alpha}_2 \bar{\beta}_2|.
\end{equation}
Second, for every $\lambda \in \mathbb{N}$ we have~:
\[
\begin{array}{lllllll}
\transs(\bar{x}(\bar{u}\bar{u}'\bar{v}^M\bar{w}^+\bar{u}\bar{u}'\bar{v}^M\bar{w}^+\bar{u}\bar{u}'\bar{v}^M\bar{w}^-)^\lambda\bar{z})
& = & \bar{x}_0(\bar{\alpha}_1\bar{\beta}_1\bar{\alpha}_1\bar{\beta}_1\bar{\alpha}_2\bar{\beta}_2)^\lambda\bar{z}_0,\\
\trans(\bar{x}(\bar{u}\bar{u}'\bar{v}^M\bar{w}^+\bar{u}\bar{u}'\bar{v}^M\bar{w}^+\bar{u}\bar{u}'\bar{v}^M\bar{w}^-)^\lambda\bar{z})
& = & x_0'(\alpha_1\beta_1\alpha_1\beta_1\alpha_2\beta_2)^\lambda z_0'.
\end{array}
\]
If we set $\lambda = C$, 
the hypothesis in the statement of the Lemma yields~: 
\[
d(x_0'(\alpha_1\beta_1\alpha_1\beta_1\alpha_2\beta_2)^C z_0',\bar{x}_0(\bar{\alpha}_1\bar{\beta}_1\bar{\alpha}_1\bar{\beta}_1\bar{\alpha}_2\bar{\beta}_2)^C\bar{z}_0) < C.
\]
Observe that at least one copy of the subword $\alpha_1\beta_1\alpha_1\beta_1\alpha_2\beta_2$
is preserved (i.e., no internal deletion or insertion) while editing
$x_0'(\alpha_1\beta_1\alpha_1\beta_1\alpha_2\beta_2)^\lambda z_0'$ into $\bar{x}_0(\bar{\alpha}_1\bar{\beta}_1\bar{\alpha}_1\bar{\beta}_1\bar{\alpha}_2\bar{\beta}_2)^\lambda\bar{z}_0$
via $C-1$ edits.
As this copy of $\alpha_1\beta_1\alpha_1\beta_1\alpha_2\beta_2$ can be shifted by at most
$C - 1$ letters, and as it is originally shifted by
$\max(|x_0' |,|\bar{x}_0|)$ letters with respect to 
its matching copy of $\bar{\alpha}_1 \bar{\beta}_1 \bar{\alpha}_1 \bar{\beta}_1 \bar{\alpha}_2 \bar{\beta}_2$
in $\bar{x}_0(\bar{\alpha}_1\bar{\beta}_1\bar{\alpha}_1\bar{\beta}_1\bar{\alpha}_2\bar{\beta}_2)^C\bar{z}_0$,
there exists
$\delta \in \mathbb{N}$ satisfying $|\delta| < C + \max(|\bar{x}_0|,|x_0'|)$ such that~:
\begin{equation}\label{equ:pre}
\begin{array}{l}
    \alpha_1 \beta_1 \alpha_1 \beta_1 \alpha_2 \beta_2[i]
      =
    \bar{\alpha}_1 \bar{\beta}_1 \bar{\alpha}_1 \bar{\beta}_1 \bar{\alpha}_2 \bar{\beta}_2[i + \delta]\\
    \text{ for all $\max(0,\delta) < i \leq
    \min(|\alpha_1 \beta_1 \alpha_1 \beta_1 \alpha_2 \beta_2|,
    |\alpha_1 \beta_1 \alpha_1 \beta_1 \alpha_2 \beta_2|-\delta)$.}
\end{array}
\end{equation}
Finally, as $\delay(\bar{u}_1\bar{u}_1',\bar{u}_2\bar{u}_2') = \delay(\bar{u}_1\bar{u}_1'\bar{v}_1,\bar{u}_2\bar{u}_2'\bar{v}_2)$,
\Cref{claim:TP} yields that
\begin{equation}\label{equ:STP}
    \bar{\alpha}_1[i] = \bar{\alpha}_2[i]
    \textup{ for every $1 \leq i \leq \min(|\bar{\alpha}_1|,|\bar{\alpha}_2|)$}.
\end{equation}
We now combine all equations.
For every $\max(0,\delta) < i \leq \min(|\alpha_1|,|\alpha_2|-\delta)$ we get~:
\[
\begin{array}{llll}
    \alpha_1[i]
    & = & \bar{\alpha}_1[i + \delta] & \text{ by~\eqref{equ:pre},}\\
    & = & \bar{\alpha}_2[i + \delta] & \text{ by~\eqref{equ:STP},}\\
    & = & (\bar{\alpha}_1
    \bar{\beta}_1 \bar{\alpha}_1
    \bar{\beta}_1 \bar{\alpha}_2)[|\bar{\alpha}_1
    \bar{\beta}_1\bar{\alpha}_1
    \bar{\beta}_1| + i + \delta] & \\
    & = & (\bar{\alpha}_1
    \bar{\beta}_1\bar{\alpha}_1
    \bar{\beta}_1 \bar{\alpha}_2)[|\alpha_1 \beta_1\alpha_1 \beta_1| + i + \delta] & \text{ by~\eqref{equ:loopSize},}\\
    & = & (\alpha_1 \beta_1\alpha_1 \beta_1 \alpha_2)[|\alpha_1 \beta_1\alpha_1 \beta_1| + i] & \text{ by~\eqref{equ:pre},}\\
    & = & \alpha_2[i].
\end{array}
\]

If $\delta < 0$ we still need to address some values of $i$. 
For every $1 \leq i \leq -\delta$ we get~:
\[
\begin{array}{llll}
    \alpha_1[i]
    & = & (\alpha_1 \beta_1 \alpha_1)[|\alpha_1 \beta_1| + i]\\
    & = & (\alpha_1 \beta_1 \alpha_1)[|\bar{\alpha}_1 \bar{\beta}_1| + i] & \text{ by~\eqref{equ:loopSize},}\\
    & = & (\bar{\alpha}_1 \bar{\beta}_1)[|\bar{\alpha}_1 \bar{\beta}_1| + i + \delta] & \text{ by~\eqref{equ:pre} since $i + \delta < 1$,}\\
    & = & (\bar{\alpha}_1 \bar{\beta}_1\bar{\alpha}_1 \bar{\beta}_1)[
    |\bar{\alpha}_1 \bar{\beta}_1\bar{\alpha}_1 \bar{\beta}_1| + i + \delta] \\
    & = & (\bar{\alpha}_1 \bar{\beta}_1\bar{\alpha}_1 \bar{\beta}_1)[|\alpha_1 \beta_1\alpha_1 \beta_1| + i + \delta] & \text{ by~\eqref{equ:loopSize},}\\
    & = & (\alpha_1 \beta_1 \alpha_1 \beta_1 \alpha_2)[|\alpha_1 \beta_1\alpha_1 \beta_1| + i] & \text{ by~\eqref{equ:pre} since $i \geq 1$,}\\
    & = & \alpha_2[i].
\end{array}
\]
These last two series of equations ensure that $\alpha_1$ and $\alpha_2$
have a large common prefix~:
\begin{equation}\label{equ:fin}
    \alpha_1[i] = \alpha_2[i] \text{ for every } 1 \leq i \leq \min(|\alpha_1|,|\alpha_2|-\delta).
\end{equation}
Then we can conclude by noticing that, thanks to the choice of 
\[
M > \frac{m \cdot |v_1| + C + \max(|\bar{x}_0|,|x_0'|)}{b \cdot |v_1|} > 
\frac{m}{b} + \frac{|\delta|}{b \cdot |v_1|},
\]
the words $\alpha_1 = u_0u_1v_1^{a + Mb}$ and $\alpha_2 = u_0u_2v_2^{a + Mb}$
can be written
as $u_0u_1v_1^{m}v_1^{\mu}$, respectively $u_0u_2v_2^{m}v_2^{\mu}$,
for some $\mu \in \mathbb{N}$ such that $|v_1^{\mu}| = |v_2^{\mu}| > |\delta|$.
Therefore, as desired, \Cref{equ:fin} implies that there is no mismatch between 
 $u_1v_1^m$ and $u_2v_2^m$.

\subparagraph*{For HTP} 
Suppose that $\calT$ satisfies the \HTP, and $\calS$ does not. Then, there exists a twinning pattern $(p_1,q_1,p_2,q_2,u,v,u_1,v_1,u_2,v_2)$
such that either
$(1)$ $|v_1|\neq |v_2|$ or
$(2)$ there is a position $i\in ]\text{min}(|u_1|,|u_2|), \text{max}(|u_1v_1|,|u_2v_2|]$ such that $(uv_1)[i]\neq (uv_2)[i]$.
In both cases,
we construct a family of input words $u$ for which the Hamming distance $d_h(\calS(uv_i^\lambda), \calT(u))$ gets arbitrarily large,
which contradicts the assumption. 

For case $(1)$, it is immediate, as iterating the loop creates outputs, on the same input, with arbitrarily large length differences.

For case $(2)$, assuming $|v_1|=|v_2|$, it is also immediate, because for all $j\geq 1$, $uv_1^j$ and $uv_2^j$ contain at least $j$ mismatching positions,
as for every $1 \leq k \leq j$,
\[
(uv_1^j)[i+(k-1)|v_1|] = (uv_1)[i] \neq (uv_2)[i]=(uv_2^j)[i+(k-1)|v_1|].\qedhere
\]
\end{proof}

We now show that checking each of the variants of the twinning property is decidable.
\begin{lemma}\label{lem:decideTPs}
    It is decidable whether a transducer satisfies \ATP{}, \STP{} and \HTP{}.
\end{lemma}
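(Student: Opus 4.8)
The plan is to establish decidability of each of the three twinning variants (\ATP{}, \STP{}, \HTP{}) by reducing, in each case, the existence of a \emph{bad} twinning pattern to a decidable question about a suitable finite-state device built as a product of the transducer with itself. The common engine is the following observation: a twinning pattern consists of two runs reading the same inputs $u$ (on the initial segment) and $v$ (on the loop), reaching a diagonal configuration $((q_1,q_2))$ that is revisited after the $v$-loop. Such synchronised run pairs are exactly the accepting runs of the product automaton $\calT \times \calT$ (as defined in the preliminaries), where the loop on $v$ corresponds to a cycle in $\calT \times \calT$ on state $(q_1,q_2)$ reachable by the common word $u$. Thus the search for twinning patterns is a search for reachable cycles in a product graph, which is finite; the challenge is only to \emph{verify the relevant word condition} (delay preservation, conjugacy, or Hamming no-mismatch) across all such cycles, since the witness outputs $v_1, v_2$ can be arbitrarily long.

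First I would handle the \emph{delay-based} properties \STP{} (and, for completeness, recall the known decidability of \TP{}, on which \STP{} is patterned). For \STP{}, by \Cref{claim:TP} the condition $\delay(u_1,u_2) = \delay(u_1v_1, u_2v_2)$ is equivalent to saying that either the loop outputs are both empty, or they have equal length \emph{and} there is no mismatch between $u_1v_1^m$ and $u_2 v_2^m$ for all $m$. The standard approach (following the classical \TP{} decision procedure of~\cite{twinningproperty}) is to track the delay as the product runs: one augments $\calT\times\calT$ with the delay $\delay(\text{output}_1,\text{output}_2)$ as an auxiliary component. The key finite-state fact is that if \STP{} holds then along any reachable synchronised cycle the delay is bounded (by a polynomial in the number of states times the maximal output length), so delays need only be tracked up to a computable bound; if \STP{} fails, some reachable cycle strictly changes the delay, which is detectable by searching for a cycle whose traversal maps a bounded delay to a strictly larger one, or creates a mismatch. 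For \STP{} one additionally restricts attention to patterns with $p_1 = p_2 = p$ and $p, q_1, q_2$ in a common SCC, which is a decidable structural restriction on the product graph (test SCC membership of the relevant states in $\calT$).

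For \ATP{}, the condition is that $v_1$ and $v_2$ are conjugate, equivalently (by \Cref{prop:conj} and Fine–Wilf) that $|v_1| = |v_2|$ and their primitive roots are conjugate. The plan is again to reduce to reachable cycles in $\calT\times\calT$ from a diagonal initial state $(p_1,p_2)$ with $p_1, p_2$ initial; for each such cycle producing outputs $(v_1,v_2)$, conjugacy is a decidable property of the \emph{single} pair of words labelling the cycle. The subtlety is that there are infinitely many cycles (one may iterate or concatenate simple cycles), so one cannot enumerate them; instead I would argue that conjugacy of \emph{all} reachable synchronised-loop output pairs is equivalent to conjugacy of output pairs on \emph{simple} synchronised cycles together with a length-matching condition, using that conjugacy is preserved under suitable combination (this mirrors the reasoning in \Cref{prop:levboundconj}, which already characterises finiteness of Lev/LCS diameter by conjugacy of all loop-generated pairs and is stated to be decidable). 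Concretely, \ATP{} can be rephrased as a diameter/finiteness condition on an auxiliary rational relation, and \Cref{prop:levboundconj} then yields decidability directly. The \HTP{} case is the most elementary: it asks $|v_1| = |v_2|$ with no mismatch in the overlap region for all diagonal-reachable loops; length equality is a linear condition checkable on cycles, and the no-mismatch condition can be encoded as the absence of a reachable synchronised cycle producing a length-preserving but mismatching output pair, again a reachable-cycle search in a product device augmented with a bounded position-offset counter.

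The main obstacle I expect is \STP{}: unlike \HTP{} and \ATP{}, whose conditions reduce cleanly to conjugacy (already handled by \Cref{prop:levboundconj}) or to bounded-length/mismatch checks, the delay-equality condition of \STP{} requires simultaneously controlling \emph{length equality}, \emph{absence of mismatch for all powers $m$}, and the SCC structural restriction, and the delay component can a priori grow unboundedly along cycles. The crux is therefore to prove a \emph{pumping/boundedness lemma}: if \STP{} is violated at all, it is violated by a pattern whose loop word $v$ and loop outputs have length bounded by a computable function of $|Q|$ and the maximal transition-output length, so that only finitely many candidate patterns need to be inspected. This is the analogue of the classical small-witness argument for the twinning property, and reusing that machinery (via \Cref{claim:TP} to convert the delay condition into the equal-length-plus-no-mismatch form) is the natural route; I would expect the bulk of the technical work to be in certifying that this bound suffices and that the resulting search over the product graph is effective.
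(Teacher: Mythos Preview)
Your plan is workable in outline, but it diverges from the paper's proof on all three properties, and crucially you misidentify where the difficulty lies. For \STP{}, which you flag as the main obstacle and for which you propose re-deriving a delay-boundedness/pumping argument, the paper's proof is three lines: decompose $\calT$ into its maximal SCCs, and for each SCC $S$ and each state $p\in S$, view $S$ as a transducer with initial state $p$ and apply the known \textsc{PTime} decision procedure for \TP{} from~\cite{twinningproperty}. No new small-witness lemma is needed; the existing one for \TP{} does all the work once you observe that \STP{} is literally \TP{} restricted to patterns living inside a single SCC with a freely chosen start state.

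For \ATP{}, the paper does not go through diameter finiteness. It directly builds, for each state $(p,q)$ of $\calT\times\calT$, the rational relation $R_{(p,q)}$ of output pairs generated by loops rooted at $(p,q)$, and then invokes the decidability of the question ``is every pair in a given rational relation conjugate?'' from~\cite{decidingconjugacy}. Your route via \Cref{prop:levboundconj} (diameter finite $\Leftrightarrow$ loop pairs conjugate, and diameter is computable) can be made to work, but it is more circuitous and you would still owe the argument that bounded diameter of $R_{(p,q)}$ forces \emph{every} pair in $R_{(p,q)}$ to be conjugate (this follows because each such pair can be pumped, then apply \Cref{prop:conj}). For \HTP{}, your ``bounded position-offset counter'' is not a priori bounded: the offset $\big||u_1|-|u_2|\big|$ before entering the loop can be arbitrarily large in general, and bounding it requires first establishing and exploiting $\HTP_{lgth}$ (equal loop-output lengths imply bounded offset after decomposing $u$ into synchronised loops plus a short simple path). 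The paper sidesteps this entirely: it splits \HTP{} into $\HTP_{lgth}$ (expressible in the pattern logic of~\cite{DBLP:journals/ijfcs/FiliotMR20}) and $\HTP_{mism}$, and reduces the negation of $\HTP_{mism}$ to emptiness of a polynomial-size Parikh automaton with two unbounded $\mathbb{Z}$-counters that nondeterministically guess the mismatching position and verify it, with the linear acceptance constraint $c_1=c_2=0$. This avoids any need to bound the offset.
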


\begin{proof}
    Let $\calT$ be a transducer that defines a relation $R$.
    
    \subparagraph*{Deciding \ATP{}} Let $\calT^2$ be the cartesian product of $\calT$ by itself.
    Verifying whether $\calT$ satisfies \ATP{} reduces to checking
    that every loop in $\calT^2$ produces a pair of conjugate output words.
    For each state $(p,q) \in \calT^2$,
    we define a rational relation $R_{(p,q)}$
    consisting of pairs of output words produced by the loops in $\calT^2$ rooted at $(p,q)$.
    The transducer defining $R_{(p,q)}$ is obtained from $\calT^2$
    by disregarding the inputs and setting both the initial and final state to be $(p,q)$.
    It is known that we can decide whether every pair of words in a given rational relation is conjugate~\cite{decidingconjugacy}, which entails decidability of 
   \ATP.
   
    \subparagraph*{Deciding \STP{}} The twinning property is decidable in polynomial time~\cite{twinningproperty}. To decide \STP{}, we first decompose the transducer
    into SCCs (in polynomial time). Then, for each SCC and each state $p$,
    the SCC is seen as a transducer with initial state $p$ on which we check \TP{}.

    \subparagraph*{Deciding \HTP{}} We prove that it can be decided in \textsc{PTime} whether a transducer
    does not satisfy \HTP.
    The \HTP{} can be decomposed as a conjunction of two properties~:
    $\HTP_{lgth}$ on lengths and $\HTP_{mism}$ on mismatches.
    The (negation of) $\HTP_{lgth}$ can be directly expressed in the pattern logic of~\cite{DBLP:journals/ijfcs/FiliotMR20}, whose model-checking is in \textsc{PTime}. 
     Then, we reduce the problem of deciding the $\HTP_{mism}$ to the emptiness problem of a Parikh automaton of polynomial size, in the size of the transducer $\calT$. We recall that a Parikh automaton of dimension $d$ is an NFA extended with vectors in $\mathbb{Z}^d$ on its transitions. It accepts a word if there exists a run of the NFA that reaches an accepting state, and the sum of the vectors seen along the run belongs to some semi-linear set $W\subseteq \mathbb{Z}^d$, given for example as a Presburger formula $\phi(x_1,\dots,x_d)$. The emptiness problem is known to be decidable in \textsc{PTime} when both $d$ and $\phi$ are constant~\cite{DBLP:conf/lics/FigueiraL15,DBLP:journals/ijfcs/FiliotMR20}. Our reduction follows standard ideas, 
     and falls in this particular case. Let us give a bit more details.

     A twinning pattern $t$ can be encoded as a word $u_t$ over $\Delta^2\cup\{\#\}$, where $\Delta$ is the transition relation of $\calT$. In this construction, a run of $\calT$ is seen as a sequence of transitions. So, a twinning pattern consists of four runs, $r_1,r'_1$ and $r_2,r'_2$ where for all $i=1,2$, $r_i$ is the run such that $p_i\xrightarrow{u|u_i}q_i$ and $r'_i$ is the run such that $q_i\xrightarrow{v|v_i}q_i$.
     The four runs are encoded as a word $u_t = (r_1\otimes r_2)\# (r'_1\otimes r'_2)$ where $r_1\otimes r_2$ is the convolution of $r_1$ and $r_2$, i.e. the overlapping of $r_1$ and $r_2$ (and similarly for $r'_1\otimes r'_2$).
    It should be clear that the set of words $u_t$ for all twinning pattern $t$ such that $p_1$ and $p_2$ are initial states is a regular language, recognizable by an NFA of polysize in the size of $\calT$.

    In order to check the condition that there is a mismatch between $u_1v_1$ and $u_2v_2$ on a position common to $v_1$ and $v_2$, the NFA is extended with two counters $c_1$ and $c_2$, and the linear acceptance condition $c_1=c_2=0$ (making it a Parikh automaton). Those two counters are used to guess the mismatching position. Initially, using an $\epsilon$-loop, those two counters are incremented in parallel. At the end of this first phase, they therefore hold the same value, say $i\in\mathbb{N}$. Then, the Parikh automaton is built in such a way that it checks that $|u_1|<i\leq |u_1v_1|$ and $|u_2|<i\leq |u_2v_2|$, and $(u_1v_1)[i]\neq (u_2v_2)[i]$. To do so, while reading any transition of $r_j$ producing some word $\alpha_j$, $j=1,2$, $c_j$ is decremented by $|\alpha_j|$. The same is done when reading transitions of $r'_j$, but the automaton can non-deterministically guess that the counter $c_j$ is equal to $0$, and store (in its state) the corresponding letter in $\alpha_j$.
    From then on, it never decrements the counter $c_j$ again.
    When the whole input $(r_1\otimes r_2)\# (r'_1\otimes r'_2)$ has been read, the automaton has two letters stored in its state. It accepts the input, if those two letters are different, and if $c_1=c_2=0$.
     \end{proof}

\section{Approximate determinisation of rational functions}

In this section, we define \emph{approximately determinisable functions} and give decidable properties on transducers to characterise them.

\begin{definition}[Approximate determinisation]\label{def:approxdeter}
A rational function $f : A^\ast \rightarrow B^\ast$ is \emph{approximately determinisable} w.r.t.~metric $d$ if there exists a sequential function $g : A^\ast \rightarrow B^\ast$ such that $d(f,g) < \infty$. In this case, we also say that $g$ approximately determinises $f$ w.r.t.~$d$. 
\end{definition}

\begin{example}
The function $f_{\sf last}$ from the introduction is approximately determinisable w.r.t.~Levenshtein, while not w.r.t.~Hamming distance. The function $f_{\sf last}^\ast$ (depicted below) that maps $u_1\# \cdots u_n\#$, for $n \geq 1$, to $f_{\sf last}(u_1)\# \cdots f_{\sf last}(u_n)\#$ for some separator $\#$ is approximately determinisable w.r.t~neither Levenshtein nor Hamming distance.
\begin{center}
\begin{tikzpicture}[draw=black,shorten >=1pt, auto, node distance=2cm, on grid, >=stealth,scale=0.8, every node/.style={scale=0.8}]
    \tikzset{state/.style={circle, draw=black, thick, minimum size=0.5cm}}
        \node[state, draw,accepting,initial,initial above, initial text=] (q0) {$q_0$};
        \node[state,draw] (q3) [right=of q0] {$q_a$};
        \node[state,draw] (q4) [left=of q0] {$q_b$};
        \node[state] (q1) [right=of q3] {$p_a$};
        \node[state] (q2) [left=of q4] {$p_b$};
    
        \path[->]
            (q0) edge [bend right,below] node {$\begin{array}{lll}a\mid aa\\ b\mid ab\end{array}$} (q1)
            (q0) edge [bend left,below] node {$\begin{array}{l}a\mid ba\\ b\mid bb\end{array}$} (q2)
            (q1) edge [loop right] node {$\begin{array}{l}a\mid a\\b\mid b\end{array}$} (q1)
            (q2) edge [loop left] node {$\begin{array}{l}a\mid a\\b\mid b\end{array}$} (q2)
            (q2) edge [above] node {$b\mid \epsilon$} (q4)
            (q1) edge [above] node {$a\mid\epsilon$} (q3)
            (q3) edge [above] node  {$\#|\#$} (q0)
            (q4) edge [above] node  {$\#|\#$} (q0);
    \end{tikzpicture}
    \end{center}

\end{example}
The approximate determinisation problem asks whether a rational function given as a functional transducer
is approximately determinisable.
We prove the following.
\begin{theorem}\label{theorem:main}
    The approximate determinisation problem for rational functions w.r.t.~Levenshtein family ($d_l,d_{lcs},d_{dl}$) and Hamming $(d_h)$ distance are decidable.
\end{theorem}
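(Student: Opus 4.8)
The plan is to reduce \Cref{theorem:main} to \Cref{lem:decideTPs} via two structural characterisations of approximate determinisability, stated on the trim functional transducer $\calT$ recognising $f$: (i) $f$ is approximately determinisable w.r.t.\ the Levenshtein family $\{\dlev,d_{lcs},d_{dl}\}$ iff $\calT$ satisfies both \ATP{} and \STP{}; and (ii) $f$ is approximately determinisable w.r.t.\ $d_h$ iff $\calT$ satisfies both \STP{} and \HTP{}. Granting these, the theorem follows at once: given $\calT$, one decides the relevant conjunction of twinning properties, each decidable by \Cref{lem:decideTPs}. The content is therefore entirely in the two equivalences, which I would prove in parallel, the Hamming case differing only in the word-combinatorial estimate used.

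For the necessity direction I would argue uniformly. If a sequential $g$ satisfies $d(f,g)<\infty$, then $\dom(f)=\dom(g)$ by \Cref{def:distfunc}, and with $C=d(f,g)+1$ a trim sequential transducer $\calS$ for $g$ satisfies $d(\calT(u),\calS(u))<C$ on the common domain. Being input-deterministic, $\calS$ trivially satisfies \TP{}, hence by \Cref{proposition:TPrelations} it also satisfies \ATP{}, \STP{} and \HTP{}. \Cref{lem:preserveTPs} then transfers the relevant properties back to $\calT$: \ATP{} and \STP{} when $d$ is in the Levenshtein family, and \STP{} and \HTP{} when $d=d_h$. So necessity is a direct composition of the two earlier results.

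The sufficiency direction carries the construction and is the main obstacle. Assuming \ATP{} and \STP{}, I would exhibit a sequential $g$ with the same domain as $f$ by a delay-tracking subset construction organised along the SCC decomposition of $\calT$: the underlying automaton is determinised (always possible, as $\dom(f)$ is regular), and the output is committed by following the parallel active runs together with their delays. The point of \STP{} is that, while the determinised run stays inside a single SCC, it enforces the exact \TP{} there, so the classical bounded-delay determinisation applies locally and the tracked delays remain bounded by a constant depending only on $\calT$. Delays can only be disrupted when the run crosses between SCCs, and along any input this happens at most $K$ times, $K$ being the number of SCCs; hence the discrepancy accumulated from boundary crossings is uniformly bounded. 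The role of \ATP{} is to tame iterated loops: if $v_1,v_2$ are the outputs of two synchronised runs on a common loop, conjugacy yields a word $x$ with $v_1^kx=xv_2^k$, so $\dlev(v_1^k,v_2^k)\le 2|v_1|$ for all $k$ (and likewise for $d_{lcs},d_{dl}$), meaning that loop iteration never drives the distance between the outputs of $\calT$ and $g$ beyond a fixed bound.

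To verify correctness cleanly I would consider the rational relation $R'=\{(f(u),g(u))\mid u\in\dom(f)\}$, recognised by the product of the two functional transducers, and note that $\dia{R'}{\dlev}=\dlev(f,g)$. It then suffices to show that every loop of this product outputs a conjugate pair, so that \Cref{prop:levboundconj} gives $\dia{R'}{\dlev}<\infty$; this conjugacy is exactly what the SCC-guided construction together with \ATP{} (via \Cref{prop:conj}) is designed to guarantee. For the Hamming case the identical construction works, \STP{} again bounding the within-SCC delay, with the conjugacy estimate replaced by the stronger \HTP{} guarantee (equal lengths and no mismatch on the overlap), which keeps the Hamming distance of iterated loop outputs bounded with neither length drift nor accumulating substitutions. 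I expect the genuine difficulty to be making the ``delays are disrupted at most $K$ times'' accounting rigorous, i.e.\ defining the finite state of $g$ so that it records enough of the parallel runs and their delays to certify a single uniform edit-distance bound across all inputs; the conjugacy and no-mismatch estimates are the clean ingredients, but threading them through the cross-SCC transitions is the delicate part.
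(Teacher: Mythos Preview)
Your overall reduction and the necessity direction coincide exactly with the paper: both characterise approximate determinisability by $\ATP{}\wedge\STP{}$ (Levenshtein family) and $\HTP{}\wedge\STP{}$ (Hamming), then invoke \Cref{lem:decideTPs}; and both derive necessity by composing \Cref{proposition:TPrelations} with \Cref{lem:preserveTPs}.

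The sufficiency direction is where you diverge. The paper does not attempt a single monolithic delay-tracking subset construction. Instead it factors the argument through two intermediate subclasses: it first shows that \ATP{} (resp.\ \HTP{}) alone suffices to approximately determinise \emph{multi-sequential} and unambiguous \emph{series-sequential} transducers (\Cref{lemma:multiseqapprox}, \Cref{lemma:seriesseqapprox}, and their Hamming analogues). Then, given a general unambiguous $\calT$ with \STP{}, it enumerates the finitely many SCC-paths $\pi$, determinises each SCC along $\pi$ (legitimate because \STP{} is exactly \TP{} restricted to an SCC), obtains a series-sequential $\calT_\pi$, applies the series-sequential lemma to get a sequential $\calD_\pi$, and finally applies the multi-sequential lemma to $\bigcup_\pi \calD_\pi$. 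The distance bounds are propagated by the triangle inequality, using \Cref{lem:preserveTPs} to carry \ATP{}/\HTP{} across each step. The verification is by direct distance estimation via a uniform bound (\Cref{lem:ATPtobound} resp.\ \Cref{claim:HP}), not via \Cref{prop:levboundconj} on a product relation as you suggest.

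The one place where your sketch would need real repair is the switch-counting. Your claim that delays are disrupted at most $K$ times with $K$ the number of SCCs is too optimistic: in the subset construction several runs can be simultaneously active inside the same (already determinised) SCC, and the producing run may die and force a switch without any SCC crossing. The paper's series-sequential lemma bounds the number of switches by the total number of \emph{states} of $\calT$, using that each sequential component can only lose active states over time; the layered decomposition is precisely what makes this accounting clean. Your proposed verification via \Cref{prop:levboundconj} would also need a separate argument that every loop of the product $\calT\times\calD$ outputs a conjugate pair, which is not immediate from the construction and essentially reproves the bound the paper gets directly from \Cref{lem:ATPtobound}.
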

To prove the theorem, we show that \ATP{} and \STP{} characterise
rational functions that can be approximately determinised w.r.t~Levenshtein family (\Cref{theorem:approxdetlev}).
Similarly, we establish that \HTP{} and \STP{} characterise rational functions that can be approximately determinised
w.r.t~Hamming distance (\Cref{theorem:approxdetHamming}).
Theorem~\ref{theorem:main} then follows directly,
as these three properties are decidable (\Cref{lem:decideTPs}).
We now outline the proof strategy.
The full proof for Levenshtein family is presented in \cref{subsec:lev},
while the proof for Hamming distance, which follows a similar approach, is deferred to \cref{subsec:ham}.

\vspace{-0.3cm}
\subparagraph*{Levenshtein family}
We show with \Cref{proposition:approxtoATPSTP} that \ATP{} and \STP{}
are necessary conditions for approximate determinisation with respect to Levenshtein family,
a consequence of \Cref{lem:preserveTPs}.
The main challenge lies in proving that these properties are sufficient.
To prove it, we first show that \ATP{} alone
suffices for certain subclasses of functional transducers:
it enables the approximate determinisation of multi-sequential (\Cref{lemma:multiseqapprox})
and unambiguous series-sequential (\Cref{lemma:seriesseqapprox}) functions.
However, for rational functions in general, \ATP{} does not suffice. For example, the transducer above for $f^*_{\sf last}$ satisfies \ATP{} but is not approximately determinisable.
To extend this result to all rational functions,
we incorporate \STP{}.
Given a functional transducer $\trans$ satisfying \STP{},
we transform each strongly connected component of  $\trans$ into a sequential transducer,
effectively decomposing $\trans$ into a finite union of concatenations of sequential transducers.
We then leverage our results for series-sequential and multi-sequential functions
to approximate this structure with a sequential function~(\Cref{theorem:approxdetlev}).

\Cref{fig:enter-label} illustrates the main construction technique used in these proofs:
Starting with a transducer 
$\trans$ that we aim to approximate, we construct a sequential transducer 
$\calD_1$ as follows. We apply the powerset construction to 
$\trans$, introducing a distinguished state (marked by a $\bullet$ in the figure) in each subset.
The output is determined by the distinguished state’s production.
If the distinguished state reaches a point where it has no continuation,
we simply transition to another distinguished state.
We show that \ATP{}, combined with a carefully chosen priority scheme for selecting distinguished states,
ensures bounded Levenshtein distance.

\vspace{-0.3cm}
 \subparagraph*{Hamming Distance} 
The proof strategy is similar to the Levenshtein setting:
We first show with \Cref{proposition:approxtoHTPSTP}
that \HTP{} and \STP{} are necessary conditions
for approximate determinisation with respect to Hamming distance,
we show that \HTP{} alone suffices for the approximate determinisation
of multi-sequential (\Cref{lemma:multiseqapproxHamming}) and
series-sequential (\Cref{lemma:seriesseqapproxHamming}) functions,
and then we conclude by using \STP{} to transform functional transducers
into a finite unions of concatenations of sequential transducers (\Cref{theorem:approxdetHamming}).

While the core ideas remain similar to those used for the Levenshtein family,
the constructions required for the Hamming distance,
illustrated in \Cref{fig:enter-label},
are more intricate.  
Approximating the transducer $\trans$ with respect to the Levenshtein distance
(as shown by $\calD_1$ in the figure) allows us, at each step, to select a run,
produce its output, and disregard other possible runs.
However, for the Hamming distance, it is crucial to carefully track the length difference
between the produced output and the potential outputs of alternative runs.  
For instance, compare the outputs of $\trans$, $\calD_1$, and $\calD_2$ after reading $babcccc$~:
\[
\begin{array}{lll}
\calT(\texttt{babcccc}) & = & \texttt{bbabcabcabcabc},\\
\calD_1(\texttt{babcccc}) & = & \texttt{aabcabcabcabc},\\
\calD_2(\texttt{babcccc}) & = & \texttt{ababcabcabcabc}.
\end{array}
\]
We observe that after reading the input $bab$,
$\calD_1$ realizes that its distinguished state is incorrect and jumps to another state.
However, this shift causes a misalignment with $\trans$,
and reading additional $c$'s results in arbitrarily
many mismatches.\footnote{The Levenshtein distance remains bounded, as inserting a letter at the start resynchronizes both outputs.}  
In contrast, $\calD_2$ keeps in memory the delay relative to other runs.
Although it may still introduce mismatches along the way,
it ensures that when the distinguished run terminates,
it adjusts the output while transitioning to another run,
preventing long-term misalignment with $\trans$.

\begin{figure}
    \centering
    \begin{subfigure}{0.25\textwidth}
        \begin{tikzpicture}[draw=black,shorten >=1pt, auto, node distance=2cm, on grid, >=stealth,scale=0.7, every node/.style={scale=0.7}]
        \tikzset{state/.style={circle, draw=black, thick, minimum size=0.5cm}}
        \arraycolsep=1.4pt
            \node at (-0.75,0) {\Large$\calT$:}; 
            \node[state] (s) at (0.625,0) {$s$};
            \node[state] (p1) at (0,-1.5) {$p_1$};
            \node[state] (p2) at (0,-3) {$p_2$};
            \node[state] (q1) at (1.25,-1.5) {$q_1$};
            \node[state] (q2) at (1.25,-3) {$q_2$};
            \node[state,accepting] (f) at (0.625,-4.5) {$f$};
        
            \path[->]
                ($(s) + (0,0.7)$) edge (s)
                (s) edge [left] node[yshift=5pt] {$\begin{array}{lll}a &|& a\\b&|&\epsilon\end{array}$} (p1)
                (p1) edge [left] node {$\begin{array}{lll}a &|& a\\b&|&\epsilon\end{array}$} (p2)
                (p2) edge [left] node {$a \mid a$} (f)
                (s) edge [right] node[yshift=5pt] {$\begin{array}{lll}a &|& \epsilon\\b&|&b\end{array}$} (q1)
                (q1) edge [right] node {$\begin{array}{lll}a &|& \epsilon\\b&|&b\end{array}$} (q2)
                (q2) edge [right] node {$b \mid b$} (f)
                (f) edge [loop below] node[left,xshift=-5pt] {$c \mid abc$} (f);
        \end{tikzpicture}
    \end{subfigure}
    \begin{subfigure}{0.25\textwidth}
        \begin{tikzpicture}[draw=black,shorten >=1pt, auto, node distance=2cm, on grid, >=stealth,scale=0.8, every node/.style={scale=0.8}]
        \tikzset{state/.style={circle, draw=black, thick, minimum size=0.5cm}}
        \arraycolsep=1.4pt
            \node at (-1.25,0) {\Large$\calD_1$:}; 
            \node[draw,rectangle, rounded corners, minimum height = 18pt, thick] (p0)
            at (0,0) {\ $s$ \ };
            \node[draw,rectangle, rounded corners, minimum height = 18pt, thick] (p1)
            at (0,-1.5) {\ $p_1$ \ $q_1$ \ };
            \node[draw,rectangle, rounded corners, minimum height = 18pt, thick] (p2)
            at (0,-3) {\ $p_2$ \ $q_2$ \ };
            \node[draw,rectangle, rounded corners, minimum height = 18pt, thick] (p4)
            at (0,-4.5) {\ $f$ \ };
            \node[draw,rectangle, rounded corners, minimum height = 16pt, minimum width = 16, thick]
            at (0,-4.5) {};
            \node at ($(0,0) + (-0.17,0)$) {$\bullet$};
            \node at ($(0,-1.5) + (-0.56,0)$) {$\bullet$};
            \node at ($(0,-3) + (-0.56,0)$) {$\bullet$};
            \node at ($(0,-4.5) + (-0.17,0)$) {$\bullet$};
        
            \path[->]
                ($(p0) + (0,0.7)$) edge (p0)
                (p0) edge [left] node {$a \mid a\strut$} (p1)
                (p1) edge [left] node {$a \mid a\strut$} (p2)
                (p2) edge [left] node {$a \mid a\strut$} (p4)
                (p0) edge [right] node {$b\mid\epsilon\strut$} (p1)
                (p1) edge [right] node {$b\mid\epsilon\strut$} (p2)
                (p2) edge [right] node {$b\mid\epsilon\strut$} (p4)
                (p4) edge [loop below] node[left,xshift=-5pt] {$c \mid abc$} (p4);
        \end{tikzpicture}
    \end{subfigure}
    \begin{subfigure}{0.45\textwidth}
        \begin{tikzpicture}[draw=black,shorten >=1pt, auto, node distance=2cm, on grid, >=stealth,scale=0.8, every node/.style={scale=0.8}]
        \tikzset{state/.style={circle, draw=black, thick, minimum size=0.5cm}}
        \arraycolsep=1.4pt
            \node at (-1.25,0) {\Large$\calD_2$:}; 
            \node[draw,rectangle, rounded corners, minimum height = 18pt, thick] (p0)
            at (0,0) {\ $(s,\epsilon)$ \ };
            \node[draw,rectangle, rounded corners, minimum height = 18pt, thick] (p11)
            at (-1.35,-1.5) {\ $(p_1, a)$ \ $(q_1,\epsilon)$ \ };
            \node[draw,rectangle, rounded corners, minimum height = 18pt, thick] (p12)
            at (1.35,-1.5) {\ $(p_1, \epsilon)$ \ $(q_1,b)$ \ };
            \node[draw,rectangle, rounded corners, minimum height = 18pt, thick] (p21)
            at (-2.7,-3) {\ $(p_2, aa)$ \ $(q_2,\epsilon)$ \ };
            \node[draw,rectangle, rounded corners, minimum height = 18pt, thick] (p22)
            at (0,-3) {\ $(p_2, \epsilon)$ \ $(q_2,\epsilon)$ \ };
            \node[draw,rectangle, rounded corners, minimum height = 18pt, thick] (p23)
            at (2.7,-3) {\ $(p_2, \epsilon)$ \ $(q_2,bb)$ \ };
            \node[draw,rectangle, rounded corners, minimum height = 18pt, thick] (f)
            at (0,-4.5) {\ $(f,\epsilon)$ \ };
            \node[draw,rectangle, rounded corners, minimum height = 16pt, minimum width = 32, thick]
            at (0,-4.5) {};
            \node at ($(0,0) + (-0.47,0)$) {$\bullet$};
            \node at ($(-1.35,-1.5) + (-1.14,0)$) {$\bullet$};
            \node at ($(1.35,-1.5) + (-1.11,0)$) {$\bullet$};
            \node at ($(-2.7,-3) + (-1.21,0)$) {$\bullet$};
            \node at ($(0,-3) + (-1.10,0)$) {$\bullet$};
            \node at ($(2.7,-3) + (-1.19,0)$) {$\bullet$};
            \node at ($(0,-4.5) + (-0.47,0)$) {$\bullet$};
        
            \path[->]
                ($(p0) + (0,0.7)$) edge (p0)
                (p0) edge [left] node[yshift=2pt] {$a \mid \epsilon\strut$} (p11)
                (p0) edge [right] node[yshift=2pt] {$b \mid \epsilon\strut$} (p12)
                (p11) edge [left] node[yshift=2pt] {$a \mid \epsilon\strut$} (p21)
                (p11) edge [left] node[yshift=2pt,xshift=-2pt] {$b \mid a\strut$} (p22)
                (p12) edge [right] node[yshift=2pt,xshift=2pt] {$a \mid a\strut$} (p22)
                (p12) edge [right] node[yshift=2pt] {$b \mid \epsilon\strut$} (p23)
                (p21) edge [left, bend right] node[yshift = -5pt] {$a\mid aaa\strut$} (f)
                (p21) edge [right, bend right] node[yshift = 9pt, xshift = -9pt] {$b\mid b\strut$} (f)
                (p22) edge [left] node {$a\mid a\strut$} (f)
                (p22) edge [right] node {$b\mid b\strut$} (f)
                (p23) edge [left, bend left] node[yshift = 9pt, xshift = 9pt] {$a\mid a\strut$} (f)
                (p23) edge [right, bend left] node[yshift = -5pt] {$b\mid bbb\strut$} (f)
                (f) edge [loop below] node[left,xshift=-5pt] {$c \mid abc$} (f);
        \end{tikzpicture}
    \end{subfigure}
    \caption{An unambiguous non-deterministic transducer $\calT$,
    along with two sequential approximations $\calD_1$ and $\calD_2$
    with respect to the Levenshtein distance, respectively Hamming distance.}
    \label{fig:enter-label}
\end{figure}
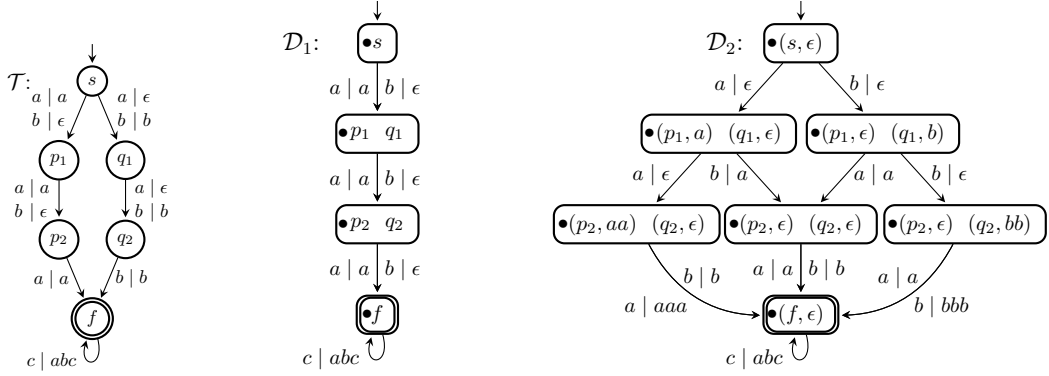

\subsection{Approximate determinisation for Levenshtein Family}\label{subsec:lev}
We give a decidable characterisation of approximately determinisable rational functions w.r.t.~Levenshtein family of distances --- Levenshtein ($\dlev$), Longest common subsequence ($d_{lcs}$) and Damerau-Levenshtein ($d_{dl}$). They are all equivalent up to boundedness (Lemma 2.1 and Remark 7 of \cite{editdistance}), i.e., for any two rational functions $f,g$, $d_{lcs}(f,g) < \infty \iff \dlev(f,g) < \infty \iff d_{dl}(f,g) < \infty.$ We show that a rational function is approximately determinisable w.r.t.~Levenshtein family if and only if the transducer that defines the function satisfies both \ATP{} and \STP{}. One direction is a consequence of \Cref{lem:preserveTPs} as follows.

\begin{proposition}\label{proposition:approxtoATPSTP}
   If a rational function given by a trim transducer $\calT$ is approximately determinisable w.r.t.~a metric $d \in \{d_l,d_{lcs},d_{dl}\}$ then $\calT$ satisfies both \ATP{} and \STP{}.
\end{proposition}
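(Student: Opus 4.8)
The plan is to deduce the statement directly from the two results already proved, namely Proposition~\ref{proposition:TPrelations} and Lemma~\ref{lem:preserveTPs}, by exhibiting a sequential transducer that is close to $\calT$ and transferring its good properties back to $\calT$. Concretely, since the rational function $f = R_\calT$ is approximately determinisable with respect to some $d \in \{d_l,d_{lcs},d_{dl}\}$, Definition~\ref{def:approxdeter} gives a sequential function $g$ with $d(f,g) < \infty$. Unfolding Definition~\ref{def:distfunc}, finiteness of $d(f,g)$ forces $\dom(f) = \dom(g)$; moreover, since the edit distance between any two words is a non-negative integer and the supremum $d(f,g)$ is finite, it is itself an integer, so setting $C = d(f,g)+1 \in \mathbb{N}$ yields $d(f(u),g(u)) < C$ for every $u \in \dom(f)$. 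I fix an input-deterministic transducer $\calS$ realising $g$, and after trimming it (which alters neither $g$, nor the domain equality, nor the distance bound) I may assume $\calS$ is trim.

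The second step is to observe that $\calS$ satisfies the exact twinning property \TP{}. This is immediate from input-determinism: in any twinning pattern $(p_1,q_1,p_2,q_2,u,v,u_1,v_1,u_2,v_2)$ with $p_1,p_2$ initial, the single initial state and the unique run on each input force $p_1 = p_2$, then $q_1 = q_2$ with $u_1 = u_2$, and finally (the two loops being literally the same loop) $v_1 = v_2$. Hence $\delay(u_1,u_2) = \delay(u_1v_1,u_2v_2) = (\epsilon,\epsilon)$, so \TP{} holds. By Proposition~\ref{proposition:TPrelations}, $\calS$ therefore also satisfies \ATP{} and \STP{}.

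The final step transports these properties from $\calS$ to $\calT$. Both transducers are trim, share the same domain, and satisfy $d(\calT(u),\calS(u)) < C$ for all $u$ in their common domain, with $d$ an edit distance from Table~\ref{table:editdistance}. Lemma~\ref{lem:preserveTPs} then applies for each $\mathsf{P} \in \{\ATP,\STP\}$, giving that $\calT$ satisfies $\mathsf{P}$ if and only if $\calS$ does; since $\calS$ satisfies both, so does $\calT$, which is the claim.

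I do not expect a genuine obstacle, since the substantive work has already been absorbed into Lemma~\ref{lem:preserveTPs}. The only points requiring a little care are bookkeeping ones in the first step: extracting a uniform integer bound $C$ from finiteness of the supremum, and justifying that $\calS$ may be taken trim while preserving both the domain equality and the distance bound. The verification that an input-deterministic transducer satisfies \TP{} is routine and could even be cited from the standard characterisation of sequential functions.
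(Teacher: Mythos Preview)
Your proposal is correct and follows essentially the same approach as the paper: both argue that a sequential transducer realising the close-by function satisfies \TP{}, hence \ATP{} and \STP{} via Proposition~\ref{proposition:TPrelations}, and then transfer these properties to $\calT$ using Lemma~\ref{lem:preserveTPs}. You simply spell out the bookkeeping (extracting an integer bound, trimming $\calS$, verifying \TP{} directly from input-determinism) that the paper leaves implicit.
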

\begin{proof}
    Given that the rational function given by $\calT$ is approximately determinisable, i.e., there exists a sequential function given by a deterministic transducer $\calD$ such that $d(\calT,\calD) < \infty$. Since $\calD$ is a sequential transducer, it satisfies the twinning property, and consequently, $\calD$ also satisfies \ATP{} and \STP{} by \Cref{proposition:TPrelations}. Since $d(\calT,\calD) <\infty$, we can conclude that $\calT$ also satisfies \ATP{} as well as \STP{} by applying \Cref{lem:preserveTPs}.
\end{proof}

Towards proving the other direction, we prove the following lemma, which provides a bound on the distance between the output words produced by distinct runs of a transducer on the same prefix of an input word using {\Cref{prop:levboundconj}.

\begin{lemma}\label{lem:ATPtobound}
       Let $\calT$ be a trim transducer satisfying the \ATP{}. Then, there exists a constant $N_\calT \in \mathbb{N}$ such that for any two output words $v, v' \in B^*$ produced via two distinct runs of $\calT$ from an initial state on the same prefix of an input word, $d(v,v') \leq N_\calT$ for $d \in \{d_l,d_{lcs},d_{dl}\}$.
    \end{lemma}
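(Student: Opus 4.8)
The plan is to bound $d(v,v')$ by analysing the two runs producing $v$ and $v'$ and decomposing them into a bounded number of synchronised loop iterations plus bounded-length ``transient'' parts. Let $n = |Q|$ be the number of states of $\calT$, and consider the product $\calT^2$ (as in the proof of \Cref{lem:decideTPs}). The pair of runs on a common input prefix $w$ corresponds to a single run $\pi$ of $\calT^2$ from an initial state, producing the pair $(v,v')$. First I would argue that along $\pi$, any infix of length exceeding $|Q^2| = n^2$ that returns to the same product-state contributes a \emph{synchronised loop} of $\calT$, i.e.\ an instance of a twinning pattern $(p_1,q_1,p_2,q_2,\cdot,\cdot,\cdot,v_1,\cdot,v_2)$ with $p_1,p_2$ initial (after prefixing by the accessing run, using that $\calT$ is trim). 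By the \ATP{}, the outputs $v_1,v_2$ on every such synchronised loop are \emph{conjugate}, hence in particular $|v_1| = |v_2|$.

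The key quantitative step is then to invoke \Cref{prop:levboundconj}: the \ATP{} guarantees that every pair of output words generated by loops in $\calT^2$ is conjugate, which is exactly the condition under which the relation $R_{(p,q)}$ (pairs of outputs produced by synchronised loops) has finite diameter for $d \in \{\dlev, d_{lcs}, d_{dl}\}$. Concretely, I would decompose the run $\pi$ by a pumping/loop-extraction argument: write $(v,v')$ as an interleaving where the ``loop'' parts iterate conjugate word pairs and the remaining ``short'' parts have total length bounded by a function of $n^2$ and the maximal output length $L$ of a single transition. Because each synchronised loop iterates a pair of conjugate words, it contributes a \emph{constant} (independent of the number of iterations) amount to the edit distance between $v$ and $v'$; iterating a conjugate pair $(xy)^k$ versus $(yx)^k$ costs only a bounded realignment, since $(xy)^k$ and $(yx)^k$ differ by a cyclic shift of bounded size. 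Summing the contributions over the at most $n^2$ distinct product-states that can anchor such loops, together with the bounded transient contributions, yields a global bound $N_\calT$ depending only on $n$ and $L$, and not on the length of $w$.

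The main obstacle will be making the decomposition of $\pi$ into synchronised loops and transient parts precise and controlling how the \emph{accumulated delay} between $v$ and $v'$ behaves across successive loops. A single conjugate loop costs $O(|x|+|y|)$ edits in isolation, but one must ensure these costs do not compound unboundedly when many loops are chained with intervening transient segments at shifting offsets. The cleanest way I would handle this is to bound, at every prefix of $\pi$, the Levenshtein distance between the two partial outputs directly: conjugacy of every loop output pair forces the two partial outputs to stay within a bounded cyclic realignment of each other, so the ``delay profile'' takes only finitely many values up to a bounded shift, and this bounded set is exactly what \Cref{prop:levboundconj} (applied to the finitely many loop-relations $R_{(p,q)}$ of $\calT^2$) certifies to be finite. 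Taking $N_\calT$ to be the maximum of these finitely many diameters plus the transient bound completes the argument.
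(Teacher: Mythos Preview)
Your decomposition argument has a genuine gap. You claim that summing contributions ``over the at most $n^2$ distinct product-states that can anchor such loops'' gives a global bound, but a long run of $\calT^2$ need not decompose into at most $n^2$ loop segments: the same product-state can be revisited with \emph{different} loop bodies interleaved, so the number of distinct conjugate pairs you must chain together is unbounded. Your proposed fix---bounding the delay profile at every prefix---is exactly the statement of the lemma, and applying \Cref{prop:levboundconj} only to the loop relations $R_{(p,q)}$ does not deliver it: those applications bound the edit distance between outputs of \emph{individual} loops, not between the full partial outputs $v,v'$.

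The paper avoids all of this by applying \Cref{prop:levboundconj} once, to the right relation. Instead of the loop relations $R_{(p,q)}$, take $\calT^2$, declare \emph{every} state final, and forget the input: the resulting rational relation $R_o$ is precisely the set of pairs $(v,v')$ you want to bound. The \ATP{} says that every loop in $\calT^2$ (hence every loop in the transducer for $R_o$) produces a conjugate output pair, which is exactly the hypothesis of \Cref{prop:levboundconj}. One invocation then gives $\dia{R_o}{d} < \infty$, and $N_\calT$ is that diameter. No decomposition, no delay-tracking, no summation over loops is needed.
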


    \begin{proof}
     Let $\calT^2$ be the cartesian product of $\calT$ by itself. By designating all states of $\calT^2$ as final, and disregarding the input word, we obtain a new transducer that defines the relation $R_o$, consisting of all pairs of output words produced by distinct runs of $T$ on the same prefix of an input word.  Since $\calT$ satisfies \ATP{}, every pair of output words produced by loops in $\calT^2$ on any input are conjugate. Consequently, since $R_o$ is obtained by ignoring the input word from $\calT^2$, it satisfies the condition in \Cref{prop:levboundconj}. Hence, the diameter of $R_o$ w.r.t.~Levenshtein family of distances is bounded. 
     
     Let $\dia{R_o}{d} \leq k$. By the definition of diameter, it follows that $d(u,v) \leq k$ for all $(u,v) \in R_o$. As a result, the distance between any two output words produced by distinct runs of $\calT$ on the same word is at most $k$. Setting $N_T = k$ completes the proof.
    \end{proof}

For subclasses of rational functions, namely multi-sequential and series-sequential functions, we show that \ATP{} is a sufficient condition for approximate determinisation. 
\begin{lemma} \label{lemma:multiseqapprox}
    A multi-sequential function given by a trim transducer $\calT$ is approximately determinisable w.r.t.~a metric $d \in \{d_l,d_{lcs},d_{dl}\}$ iff $\calT$ satisfies the \ATP{}.
\end{lemma}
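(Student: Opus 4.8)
The statement is an ``iff'', so I would prove the two directions separately. The forward direction (approximate determinisability implies \ATP{}) is already free: it follows immediately from \Cref{proposition:approxtoATPSTP} since that proposition establishes \ATP{} as a necessary condition for \emph{any} trim transducer, and a multi-sequential transducer is in particular a trim transducer. So the whole content lies in the converse: assuming $\calT = \calD_1 \uplus \cdots \uplus \calD_k$ is a disjoint union of sequential transducers satisfying \ATP{}, I must construct a single sequential function $g$ with $d(\calT,g) < \infty$.

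The plan for the hard direction is to build $g$ by a powerset-style construction that tracks all active runs simultaneously but commits to the output of one distinguished run at a time, exactly as sketched for $\calD_1$ in \Cref{fig:enter-label}. Concretely, I would run the $k$ deterministic components $\calD_1,\dots,\calD_k$ in parallel (their product is deterministic, since each is), reading the common input letter by letter and maintaining the tuple of current states of those components whose run is still alive. Among the live components I would fix a priority order (say by index) and let the distinguished component be the live one of highest priority; $g$ outputs exactly what the distinguished component produces. When the distinguished component's run dies (has no continuation or leaves the domain), $g$ switches the distinguished role to the next live component and simply continues producing that component's output, discarding the delay accumulated so far. The key point is that since $\dom(\calT)$ is the union of the domains of the $\calD_i$ and we only ever deem a run ``dead'' when it can no longer be completed to an accepting run on the actual input being read, at every input prefix in $\dom(\calT)$ there is always at least one live component, so $g$ is well defined and total on $\dom(\calT) = \dom(g)$.

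The crux of the correctness argument is bounding $d(\calT(w), g(w))$ uniformly over $w \in \dom(\calT)$. Here I would invoke \Cref{lem:ATPtobound}: since $\calT$ satisfies \ATP{}, any two output words produced by distinct runs from an initial state on the same input prefix are within Levenshtein distance $N_\calT$ of each other. For a fixed accepted input $w$, let the accepting run of $\calT$ realizing $\calT(w)$ lie in some component $\calD_j$. The word $g(w)$ is the concatenation of output fragments coming from a bounded sequence of distinguished components, where each switch is triggered by the death of a run; I would argue that the number of switches over any single input is at most $k-1$ (once a component's run dies on an input prefix it stays dead, and priorities only decrease), and that between consecutive switches the distinguished output and the output of $\calD_j$'s run track the same input prefix, hence differ by at most $N_\calT$ by \Cref{lem:ATPtobound}. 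Summing over at most $k-1$ switch points, each contributing a bounded resynchronisation cost, together with the final-output adjustments $\lambda(\cdot)$, yields a global bound of roughly $k \cdot N_\calT$ plus the maximal final-output length, giving $d(\calT,g) < \infty$.

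I expect the main obstacle to be the bookkeeping at the switch points, i.e.\ carefully arguing that each time the distinguished component dies and $g$ jumps to another component, the accumulated discrepancy with the \emph{true} accepting run $\calD_j$ remains bounded rather than compounding. The subtlety is that $g$'s output before a switch was aligned to the \emph{old} distinguished component, and after the switch it follows a \emph{different} component, so I must show that both of these, when compared to $\calD_j$'s output on the same prefix, stay within $N_\calT$ (by \Cref{lem:ATPtobound} applied pairwise), and that the triangle inequality then controls the jump. A clean way to organise this is to compare $g(w)$ directly with the distinguished run's output and bound that, then compare the distinguished run to $\calD_j$'s run one final time at the end using \Cref{lem:ATPtobound}; because the number of segments is bounded by $k$, no telescoping blow-up occurs. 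The footnote in the figure caption already hints at why Levenshtein (and not Hamming) is the right metric here: a bounded number of insertions at each switch suffices to resynchronise, which is exactly what keeps $d \in \{d_l,d_{lcs},d_{dl}\}$ finite.
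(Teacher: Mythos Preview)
Your proposal is correct and follows essentially the same approach as the paper: the product of the sequential components with a distinguished (lowest-index) live run whose output is produced, and a bound via \Cref{lem:ATPtobound} plus the triangle inequality over at most $k$ switches. One small point the paper makes explicit that you elide: $\calT$ is only \emph{equivalent} to a disjoint union $\calU = \calD_1 \cup \cdots \cup \calD_k$, not literally of that form, so you need \Cref{lem:preserveTPs} to transfer the \ATP{} from $\calT$ to $\calU$ before running your construction on $\calU$.
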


\begin{proof}
    ($\rightarrow$) is direct by \Cref{proposition:approxtoATPSTP}, and we show ($\leftarrow$). Since the transducer $\calT$ is multi-sequential, it is equivalent to some finite union of sequential transducers ${\cal U} = \calD_1 \cup \cdots \cup \calD_k$ for some $k\in \mathbb{N}$ where $\calD_i = (Q_i,s_i,\delta_i,F_i,\lambda_i)$ is a sequential trim transducer for $i \in [k]$. By \Cref{lem:preserveTPs} and since $\calT$ satisfies \ATP{}, the transducer ${\cal U}$ also satisfies \ATP{}.    
    We construct a sequential transducer $\calD$ that approximately determinises $\calT$, which intuitively is simply the cartesian product of $\calD_1, \ldots, \calD_k$ which on each transition produces the output of the smallest index transducer $\calD_i$ for which that transition is defined.
    Let $\calD = (Q,s,\delta,F,\lambda)$  where
    \begin{enumerate}
        \item The set of states $Q = Q_1' \times Q_2' \times \cdots \times Q_k'$ is the cartesian product of the state set $Q_i'$ for $i \in [k]$ such that $Q_i'= Q_i \cup \{d\}$ where $d$ represents a dead state. The initial state is $s = (s_1,s_2,\cdots,s_k)$, and the set of final states $F$ is $\{(p_1, p_2, \ldots, p_k) \mid \exists i \  p_i \in F_i \}$.
 
        \item The function $\delta : Q \times A \rightarrow Q \times B^*$ is defined as: $\delta((p_1,p_2,\ldots,p_k),a)=((q_1,q_2,\ldots, q_k),x)$ where 
        for each $i \in [k]$, either $\delta_i(p_i,a) = (q_i,x_i)$ or, if $p_i = d$ or $\delta_i(p_i,a)$ is undefined, then $q_i = d$. The output $x$ is set to $x_j$, where $j \in [k]$ is the smallest index for which the transition $\delta_j(p_j,a)$ is defined.

        \item The output function $\lambda: F \rightarrow B^*$ is defined as $\lambda((p_1, p_2, \ldots, p_k))=\lambda_i(p_i)$ where $i \in [k]$ is the smallest index such that $p_i \in F_i$.
    \end{enumerate}
 
    We show that $d(\calD,\calT) < \infty$ w.r.t.~Levenshtein family. From the construction, it is clear that $\calD$ and $\calT$ have the same domain. Consider an input word $u\in\dom(\calT)$. 
    Let $i\in[k]$ be the smallest index such that $u\in\dom(\calD_i)$, with output word, say $v$. The transducer $\calD$ on input $u$ produces output, say $v'$, by concatenating output on each transition over input word produced by the smallest index transducer.
    
    Thus, $v'$ can be decomposed into $v_{i_1}v_{i_2} \cdots v_{i_n}$ where $i_1 < i_2 < \cdots < i_n = i$ and for each $i_j \in [i]$, $v_{i_j}$ is the output produced by $\calD_{i_j}$ along the partial run of $u$. For each $v_{i_j}$, let $v_{i_j}'$ denote the prefix of the output produced by $\calD_{i_j}$ upto $v_{i_j}$. The output produced by $\calT$ on $u$ via $\calD_i$ is $v = v_i'v_i = v'_{i_n}v_{i_n}$.

    \begin{center}
    \begin{tikzpicture}[scale=0.6]
        \draw[thick] (0,4) -- (1,4) node[midway, above] {$v_{i_1}$} 
                     -- (1,3)  
                     -- (3,3) node[midway, above] {$v_{i_2}$} 
                     -- (3,2)  
                     -- (4,2) node[midway, above] {$v_{i_3}$} 
                     -- (4,1)  
                     -- (6,1) node[midway, above] {$\cdots$}
                     -- (6,0)  
                     -- (7,0) node[midway, above] {$v_{i_n}$};
        \draw[thick] (0,0) -- (0,4);
        \draw[thick,dotted] (0,3) -- (1,3) node[midway, above] {$v'_{i_2}$};
        \draw[thick,dotted] (0,2) -- (3,2) node[midway, above] {$v'_{i_3}$};
        \draw[thick,dotted] (0,1) -- (4,1);
        \draw[thick,dotted] (0,0) -- (6,0) node[midway, above] {$v'_{i_n}$};

        \node at (-0.5,4) {$\calD_{i_1}$};
        \node at (-0.5,3) {$\calD_{i_2}$};
        \node at (-0.5,2) {$\calD_{i_3}$};
        \node at (-0.5,1.2) {$\vdots$};
        \node at (-1.4,0.2) {$\calD_{i_n} = \calD_i$};
    \end{tikzpicture}
\end{center}
    
 Observe that $v_{i_j}'v_{i_j}$ and $v_{{i_j}+1}'$ (for $i_1 < i_j < i_n$) is the output produced by $\calD_{i_j}$ and $\calD_{{i_j}+1}$ on the same prefix of input $u$. By \Cref{lem:ATPtobound}, we obtain $d(v_{i_j}'v_{i_j},v_{{i_j}+1}') \leq N_T$. Similarly,  since $v_{i_1},v_{i_2}'$ are the outputs of $\calD_{i_1}$ and $\calD_{i_2}$ on the same input prefix, we get $d(v_{i_1},v_{i_2}') \leq N_T$.
    
    \begin{align*}
     d(v_{i_1}v_{i_2}v_{i_3} \ldots v_{i_n},v) &= d(v_{i_1}v_{i_2}v_{i_3} \cdots v_{i_n},v_{i_n}'v_{i_n}) \text{ (Since $v= v_{i_n}'v_{i_n}$)}\\
     &\leq d(v_{i_1}v_{i_2}v_{i_3} \cdots v_{i_n},v_{i_2}'v_{i_2}v_{i_3} \cdots v_{i_n}) + d(v_{i_2}'v_{i_2}v_{i_3} \cdots v_{i_n},v_{i_3}'v_{i_3} \cdots v_{i_n})\\
     & \hspace{0.6cm} + \cdots + d(v_{{i_n}-1}'v_{{i_n}-1}v_{i_n},v_{i_n}'v_{i_n}) \text{ (Applying triangle inequality of $d$)}\\
     &\leq n \cdot N_T  \text{ (using \Cref{lem:ATPtobound})}   
    \end{align*}
    In fact, on any input word, the distances between the outputs produced by $\calD$ and $\calT$ is less than or equal to $k \cdot N_T$, 
    as there can be at most $k$ switches between runs.
    Hence, we get that $d(\calD,\calT)$ is bounded.
\end{proof}

The characterisation of Lemma~\ref{lemma:multiseqapprox} also holds for series-sequential functions.
\begin{lemma} \label{lemma:seriesseqapprox}
    Let $\calT = \calD_1 \cdots \calD_k$  be an unambiguous transducer where each $\calD_i$ is a sequential trim transducer for $i \in [k], k>1$. The series-sequential function defined by $\calT$ is approximately determinisable w.r.t.~a metric $d \in \{d_l,d_{lcs},d_{dl}\}$ if and only if $\calT$ satisfies \ATP{}.
\end{lemma}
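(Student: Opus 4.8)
The forward direction is immediate: a series-sequential transducer is in particular a trim functional transducer, so if it is approximately determinisable then it satisfies \ATP{} (and \STP{}) by \Cref{proposition:approxtoATPSTP}. The plan for the converse is to adapt the construction of \Cref{lemma:multiseqapprox}, replacing the product of the $\calD_i$'s by a powerset construction equipped with a distinguished state. Concretely, I would build a sequential transducer $\calD$ whose states are pairs $(S,\hat p)$, where $S \subseteq Q$ is the set of states reachable by $\calT$ on the input read so far and $\hat p \in S$ is a distinguished state; the output produced on each transition is the production of $\calT$ along the distinguished state, and acceptance together with the final output is inherited from $S$ (so that $\dom(\calD) = \dom(\calT)$, as is standard for powerset constructions). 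The crux is the priority scheme used to re-select $\hat p$ when it \emph{dies} (has no outgoing transition on the current letter): I would always keep $\hat p$ in the lowest-indexed component $\calD_i$ among the live states of $S$, breaking ties by a fixed total order on $Q_i$, and at the branching state $q_i$ I let the distinguished run stay inside $\calD_i$ rather than jump to $\calD_{i+1}$.

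The main step is then to bound, independently of the input, the number of times $\hat p$ is re-selected (a \emph{switch}); here the chain structure is essential. Since runs of $\calT$ can only move from $\calD_j$ to $\calD_{j+1}$, and every run descends from the single initial run, the minimal component index occurring in $S$ is non-decreasing along the input. Under the lowest-component priority, $\hat p$ always sits in this minimal component as long as it is alive, so switches split into at most $k-1$ component-increasing switches and, within each component $\calD_i$, the switches occurring while $i$ is minimal. During such a phase no new run can enter $\calD_i$ (all lower components are empty), so the restriction $S_i = S \cap Q_i$ evolves by the injection-free subset transition $S_i \mapsto \delta_i(S_i,a)$ and its size is non-increasing. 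Every within-$\calD_i$ switch is caused by the death of $\hat p$, which removes at least one element from $S_i$; hence at most $|Q_i|$ such switches can occur before the phase ends. Summing over components, the total number of switches on any input is at most $|Q| + k$, a constant $M$ depending only on $\calT$.

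It remains to turn this into a distance bound. Between two consecutive switches $\calD$ reproduces exactly the production of a single run of $\calT$, and at each switch point the outgoing and incoming distinguished runs are two runs from an initial state on the same input prefix, so by \ATP{} and \Cref{lem:ATPtobound} their outputs differ by at most $N_\calT$ for $d \in \{d_l,d_{lcs},d_{dl}\}$. I would then telescope exactly as in \Cref{lemma:multiseqapprox}: using that these edit distances do not increase under appending a common suffix, each switch contributes at most $N_\calT$, and the last distinguished run processes the whole input and, by unambiguity, differs from the unique accepting run by at most $N_\calT$ as well, up to the bounded contribution of the final output function $\lambda$. This yields $d(\calD,\calT) \le (M+1)\,N_\calT + c < \infty$. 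The main obstacle is precisely the switch-counting argument: unlike the multi-sequential case, where each switch strictly increases the component index and hence at most $k$ switches occur, here several runs may coexist in one component with different entry times, and the bound genuinely relies on the observation that once a component becomes minimal it receives no further injections and its live set can only shrink. Arranging the telescoping so that a bounded number of switches yields a bounded edit distance, and correctly accounting for the final switch to an accepting state, is the remaining technical point, but it follows the pattern already established for multi-sequential functions.
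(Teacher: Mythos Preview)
Your proposal is correct and mirrors the paper's proof almost exactly: the paper also builds a powerset transducer with a single $\bullet$-marked (distinguished) state, uses the lowest-index-component priority (preferring to stay in the current $\calD_i$ at the branching state), bounds the number of switches by $\sum_i |Q_i|$ via the same ``no new runs enter the minimal component, so $|S\cap Q_i|$ can only shrink'' argument, and then telescopes with \Cref{lem:ATPtobound} precisely as you describe. The only cosmetic differences are that the paper treats the move of $\hat p$ from $q_i$ to $\calD_{i+1}$ explicitly as one of the counted switches (rather than leaving it implicit in ``$\hat p$ dies''), and it asserts that the final distinguished run is the unique accepting run rather than invoking \Cref{lem:ATPtobound} once more plus a bounded~$\lambda$ correction; your handling of these two points is equally valid.
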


\begin{proof}
    ($\rightarrow$) follows from \Cref{proposition:approxtoATPSTP}, and we prove
    ($\leftarrow$).
    Similarly to the proof in \Cref{lemma:multiseqapprox}, we construct a sequential transducer $\calD$ using a subset construction of $\calT$ such that $\calD$ approximately determinises $\calT$. Although $\calT$ is functional, and each $\calD_i$, $i \in [k]$, is sequential, the nondeterminism arises between the transitions between one $\calD_i$ to the other. We assume an ordering for the states of $\calT$ for each $D_i$. Intuitively, $\calD$ stores a subset of states of $\calT$ to capture all possible \emph{active runs} on any input word. It produces the output of the active run of the smallest indexed sequential transducer, called the \emph{producing run}. Upon termination of the producing run, it switches to the next active run of the smallest indexed sequential transducer. Since $\calT$ is unambiguous, there is at most one accepting run on any input. Formally, $\calD = (Q,s_0,\delta,F,\lambda)$  where
    \begin{enumerate}
        \item $Q$ is the power set of the states in $\calT$. Each set $S \in Q$ has exactly one state marked with a $\bullet$ to denote the currently producing run.
        \item $s_0$ is the initial state of $\calT$ and is marked with a $\bullet$.
        \item The transition function $\delta: Q \times A \rightarrow Q \times B^*  $ is defined as follows: $\delta(P,a)=(S,x)$ where $S$ consists of all states $q$ in $\calT$ reachable from a state $p \in P$ via transition $(p,a,q,x_{pq})$ in $\calT$. The output word $x$ is set as follows. Let $p$ be the $\bullet$ marked state in $P$ that belongs to the sequential transducer $\calD_i$ for some $i \in [k]$.
        \begin{enumerate}
            \item If a transition $(p,a,q,x_{pq})$ in $\calT$ exists within the same sequential transducer $\calD_i$, then $x=x_{pq}$ and $q$ is $\bullet$ marked in $S$. 
            \item If no transition exists from $p$ on $a$ within $\calD_i$, but a transition $(p,a,q,x_{pq})$ in $\calT$ exists to a different sequential transducer, i.e., $q$ belongs to $\calD_{i+1}$, then $x=x_{pq}$ and $q$ is $\bullet$ marked in $S$. Such a transition is called a \emph{switch} (to $\calD_{i+1}$).
            \item Otherwise, choose the smallest numbered state $p' \in P$ that belongs to the smallest indexed sequential transducer 
            $\calD_j$ where $j \in [k]$ with a defined transition $(p',a,q',x_{p'q'})$  in $\calT$, and set $x = x_{p'q'}$ and $\bullet$ mark $q'$  in $S$. Such a transition is also called a switch (to $\calD_j$). 
        \end{enumerate}
        \item $F$ is the set of all states $P \in Q$ such that $P$ contains a final state of $\calT$.
        \item The output function $\lambda: F \rightarrow B^*$ is defined, for $P \in F$, as $\lambda(P)=\lambda_\calT(p)$ for some arbitrary final state $p\in P$ of $\calT$, where $\lambda_\calT$ is the output function of $\calT$.
    \end{enumerate}
     The number of states of $\calD$ is exponential in the number of states of $\calT$. We now argue that the number of switches in the run of $\calD$ between the active runs of $\calT$ on any input word is less than the number of states in $\calT$, and hence finite. 
     Let $n_i$ be the number of states in $\calD_i$ for $i \in [k]$. Consider the run of $\calD$ on an arbitrary input word. Initially, only the initial state of $\calD_1$ is active in $\calD$.  
     As the run proceeds, if $P$ is the set of states of $\calT$ reached so far, then by construction of $\calD$, the only $\bullet$ marked state in $P$ is the last state of the active run of $\calT$ that belongs to the smallest indexed $\calD_i$. If this run eventually dies, then two cases can happen: $(1)$ $\calD$ switches to another active run in $\calD_i$, or, $(2)$ if none exists, $\calD$ switches to some active run in $\calD_j$, where $j>i$ is minimal. 
     If case $(2)$ happens, then no more states of $\calD_i$ are active, so $\calD$ will never switch again to $\calD_i$ in the future. The number of times case $(1)$ can happen is bounded by $n_i$. Indeed, at most $n_i$ states of $\calD_i$ can be active at any moment, and since $\calD_i$ is sequential, the number of active states in $\calD_i$ can only decrease.

     Now, if $\calD$ eventually switches to $\calD_j$, at most $n_j$ states in $\calD_j$ are active, so at most $n_j$ switches of type $(1)$ can happen in $\calD_j$, and so on until $\calD$ eventually terminates in some sequential transducer $\calD_l$ for $l\geq j$. 

Therefore, in the worst case, $\calD$ switches $n_i$ times in $\calD_i$ before switching to $\calD_{i+1}$ for all $i\in\{1,\dots,k-1\}$. So, the overall number of switches in the run of $\calD$ is at most $N = \sum_{i=1}^{k} n_i$, which is exactly the number of states in $\calT$.

    Now we show that $d(\calD,\calT) < \infty$ w.r.t.~Levenshtein family. From the construction, it is clear that $\calD$ and $\calT$ have the same domain.
   
    Consider an input word $u$ accepted by $\calT$. Since $\calT$ is unambiguous,
    there is exactly one run in $\calT$ that accepts $u$.
    Let $v = \calT(u)$ and $v' = \calD(u)$.
    From the construction of $\calD$, $v'$ can be decomposed into $v_{s_1}v_{s_2} \cdots v_{s_n}$ accommodating $n \in [N]$ switches between the active runs of $\calT$ on the prefixes of $u$, where each $v_{s_i}$, $i \in [n]$ is the output produced by an active run $r_i$ on the prefix of $u$. For each $v_{s_i}$, let $v_{s_i}'$ denote the prefix of the output produced by the run $r_i$ up to $v_{s_i}$. 
    
    Observe that $v_{s_i}'v_{s_i}$ and $v'_{s_{i+1}}$ (for $1 < i < n$) is the output produced by $\calD_1 \cdots \calD_{j_i}$ and $\calD_1 \cdots \calD_{j_{i+1}}$ on the same prefix of $u$ where $j_i,j_{i+1} \in [k]$. 
    By using \Cref{lem:ATPtobound}, we obtain $d(v_{s_i}'v_{s_i},v_{s_{i+1}}') \leq N_T$. Similarly,  since $v_{s_1},v_{s_2}'$ are the outputs of $\calD_1 \cdots \calD_{j_1}$ and $\calD_1 \cdots \calD_{j_2}$ on the same input prefix, it follows that $d(v_{s_1},v_{s_2}') \leq N_T$. As a consequence,
     \begin{align*}
     d(v_{s_1}v_{s_2}& \ldots v_{s_n},v) \\
     &= d(v_{s_1}v_{s_2} \cdots v_{s_n},v_{s_n}'v_{s_n}) \text{ ($v=v_{s_n}'v_{s_n}$ since $r_{s_n}$ is the only accepting run of $u$)}\\
     &\leq d(v_{s_1}v_{s_2} \cdots v_{s_n},v_{s_2}'v_{s_2} \cdots v_{s_n}) + d(v_{s_2}'v_{s_2} \cdots v_{s_n},v_{s_3}'v_{s_3} \cdots v_{s_n})\\
     & \hspace{0.6cm} + \cdots + d(v_{s_{n-1}}'v_{s_{n-1}}v_{s_n},v_{s_n}'v_{s_n}) \text{ (Applying triangle inequality of $d$)}\\
     &\leq n \cdot N_\calT \text{ (using \Cref{lem:ATPtobound})}
    \end{align*}
Therefore, $d(\calD(u),\calT(u))\leq n\cdot N_T\leq N\cdot N_\calT$. This holds for any $u\in\dom(\calT)$, and $N$ being the number of states in $\calT$, we get that $d(\calD,\calT)$ is bounded.
\end{proof}
We extend the characterisation to rational functions, where \STP{} is also required to decompose the function into a finite union of series-sequential functions, which can then be transformed into a multi-sequential function using the properties of \ATP{}. 

\begin{lemma}\label{theorem:approxdetlev}

    A rational function given by a trim transducer $\calT$ is approximately determinisable w.r.t.~a metric $d \in \{d_l,d_{lcs},d_{dl}\}$ iff $\calT$ satisfies both \ATP{} and \STP{}.
\end{lemma}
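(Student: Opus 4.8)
The forward direction is precisely \Cref{proposition:approxtoATPSTP}, so I focus on the converse: assuming $\calT$ satisfies both \ATP{} and \STP{}, I must build a sequential function at finite Levenshtein distance from $\calT$. Since $\calT$ is functional, the first move is to replace it by an equivalent \emph{unambiguous} transducer; equivalent transducers are at distance $0$, so \ATP{} and \STP{} are preserved by \Cref{lem:preserveTPs}. It also suffices to treat $d = \dlev$, since the three metrics of the family coincide up to boundedness.

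The plan is to use \STP{} to reduce to the series-sequential and multi-sequential cases already settled in \Cref{lemma:seriesseqapprox,lemma:multiseqapprox}. I would decompose $\calT$ along its SCC-DAG. Inside a single strongly connected component, \STP{} is exactly the twinning property for patterns rooted in that component, so each SCC --- read with a fixed entry state --- is determinisable into a sequential transducer by the classical \TP{} characterisation of sequentiality. A run of $\calT$ traverses a sequence of components $S_1 \to \cdots \to S_m$ joined by bridge transitions; replacing each $S_j$ by its sequential determinisation turns such a run template into a concatenation of sequential transducers, that is, a series-sequential transducer. As there are only finitely many SCC-path templates with prescribed entry states, this exhibits $\calT$ as a finite union $\calC_1 \cup \cdots \cup \calC_r$ of unambiguous series-sequential transducers that stays at bounded distance from $\calT$, and hence still satisfies \ATP{} by \Cref{lem:preserveTPs}.

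From here I would run a single subset construction with a distinguished \emph{producing} run, merging the switching schemes of \Cref{lemma:seriesseqapprox,lemma:multiseqapprox}. The sequential transducer $\calD$ tracks the set of all active runs across the $\calC_i$, outputs the production of the active producing run of smallest index, and switches to another active run --- first within the current sequential block, then to the next block of the same component, then to another component --- whenever the current producing run dies. The switch-counting arguments of the two lemmas bound the total number of switches by the number of states of the whole structure, independently of the input length. \Cref{lem:ATPtobound}, which is where \ATP{} is used, bounds by a constant the Levenshtein distance between the outputs of two active runs on a common input prefix, so each switch costs a bounded amount; a telescoping application of the triangle inequality then yields $d(\calD,\calT) < \infty$, as required.

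The step I expect to be the main obstacle is making the SCC-decomposition rigorous. One must check that the determinisation of each component via \STP{} is genuinely delay-free, so that no unbounded delay accumulates inside a component, that the resulting finite union of series-sequential transducers is functional, unambiguous, and at bounded distance from $\calT$, and that the entry-state bookkeeping across bridge transitions keeps the number of components finite. The combined final construction is then routine given \Cref{lemma:seriesseqapprox,lemma:multiseqapprox}, but it hinges on a priority scheme for choosing the producing run that keeps the switch count bounded; this is exactly the mechanism through which \ATP{}, via \Cref{lem:ATPtobound}, controls the global distance.
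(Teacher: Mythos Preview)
Your proposal is correct and follows the same overall architecture as the paper: pass to an unambiguous transducer, decompose along the SCC-DAG, use \STP{} to exactly determinise each component via the classical \TP{} characterisation, and then exploit \ATP{} through \Cref{lem:ATPtobound} and a bounded-switch argument to build the final sequential approximation.

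The one organisational difference is in the last step. The paper keeps the two existing lemmas separate: it first applies \Cref{lemma:seriesseqapprox} to each path-transducer $\calT_\pi$ to obtain a genuinely \emph{sequential} $\calD_\pi$, so that the union $\calU' = \bigcup_\pi \calD_\pi$ is multi-sequential and \Cref{lemma:multiseqapprox} applies verbatim; the triangle inequality then finishes. You instead propose a single merged subset construction over the union of series-sequential transducers, re-deriving a combined switch bound. Both work, but the paper's two-pass route is shorter precisely because it avoids re-proving anything: once $\calU'$ is multi-sequential and (by \Cref{lem:preserveTPs}) still satisfies \ATP{}, the remaining step is a black-box invocation of \Cref{lemma:multiseqapprox}. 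Your merged construction is sound but requires carrying the switch-counting argument through a structure that is neither series-sequential nor multi-sequential, which is extra bookkeeping the paper sidesteps.

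Your worry about entry-state bookkeeping is handled in the paper by the observation that unambiguity of $\calT$ forces each SCC occurring in a fixed $\calT_\pi$ to have a single entry and a single exit point, so the finiteness of path templates and the applicability of \TP{} inside each component are immediate.
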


\begin{proof}
    ($\rightarrow$) follows from \Cref{proposition:approxtoATPSTP}. We prove ($\leftarrow$). Assume that the rational function is given by an unambiguous transducer $\calT$ with set of states $Q$. Disregarding the labels on transitions, decompose $\calT$ into maximal SCCs $S_1,\dots,S_k\subseteq Q$.
    Consider the set of paths $\Pi$ of the form $\pi = S_{i_1}t_{i_1}S_{i_2}\dots t_{i_{n-1}}S_{i_n}$ such that $S_{i_1}$ is an SCC which contains an initial state, $S_{i_n}$ is an SCC which contains a final state, and for all $1\leq k<n$, $t_{i_k}$ is a transition of $\calT$ from a state of $S_{i_k}$ to some state of $S_{i_{k+1}}$. Let $\calT_\pi$ denote the trim subtransducer of $\calT$ that removes all the transitions in $\calT$ except the transitions $t_{i_k}$ $(1\leq k<n)$ and the transitions occurring in the SCCs $S_{i_k}$ ($1\leq k\leq n$).

      Note that since the SCCs are maximal, the set $\Pi$ is finite. 
    Now, it is straightforward to see that $\calT \equiv {\cal U}= \bigcup_{\pi \in \Pi} \calT_\pi$. From \Cref{lem:preserveTPs}, we deduce ${\cal U}$ satisfies both \ATP{} as well as \STP{}. Moreover, given $\calT$'s unambiguity, each input accepted by $\calT$ is accepted by exactly one $\calT_\pi$ and, each SCC within a $\calT_\pi$ has a single entry and exit point.

    Since ${\cal U}$ satisfies \STP{}, each SCC in $T_\pi$ satisfies \TP{}, with initial state being the unique entry point of the SCC. Hence we can determinise each SCC in $\calT_\pi$ and can obtain a series-sequential transducer that is equivalent to $\calT_\pi$ and indeed satisfies \ATP{} by \Cref{lem:preserveTPs}. From \Cref{lemma:seriesseqapprox}, there exists a sequential transducer $\calD_\pi$ for each $\calT_\pi$, such that $d(\calD_\pi,\calT_\pi)$ is finite. 
    
    Let $d(\calT_\pi,\calD_\pi) \leq k_\pi$ for some $k_\pi \in \mathbb{N}$. Consequently, the distance between $\calT$ and the new transducer ${\cal U'} = \bigcup_{\pi\in \Pi} \calD_\pi$ is bounded where $d(\calT,{\cal U'}) = d({\cal U},{\cal U'}) \leq \max \{ k_\pi \mid \pi\in \Pi\}$. Further, since $\calT$ satisfies \ATP{}, we deduce ${\cal U'}$ also satisfies \ATP{} by \Cref{lem:preserveTPs}. Being multi-sequential and satisfying \ATP{}, ${\cal U'}$ can be further approximately determinised using the construction outlined in \Cref{lemma:multiseqapprox}. Let $\calD$ be the sequential transducer such that $d({\cal U'},\calD) < \infty$. Since $d$ is a metric,
    $d(\calT,\calD) \leq d(\calT,{\cal U'}) + d({\cal U'},\calD)$. Since both $d(\calT,{\cal U'})$ and $d({\cal U'},\calD)$ are finite, $d(\calT,\calD)$ is also finite. \qedhere

\end{proof}

\subsection{Approximate determinisation for Hamming distance}\label{subsec:ham}
In this subsection, we prove that a rational function is approximately determinisable w.r.t.~Hamming distance if and only if the transducer that defines the function satisfies both \HTP{} and \STP{}. The proof strategy is similar to the setting of Levenshtein family.
    \begin{lemma}\label{claim:HP}
    Let $T$ be a trim transducer satisfying the \HTP{}. Then, there exists a constant $N_T \in \mathbb{N}$ such that for any two output words $w, w' \in B^*$ produced by $T$ while processing the same input word via two distinct runs from an initial state, the following properties hold:
    \begin{enumerate}
        \item\label{item:sameLength} The length difference satisfies $\big||w| - |w'|\big| < N_T$.
        \item\label{item:mismatch} The number of positions where $w$ and $w'$ differ is at most $N_T$.
    \end{enumerate}
    \end{lemma}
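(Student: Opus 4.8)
The plan is to adapt the proof of \Cref{lem:ATPtobound} to the Hamming setting, replacing the conjugacy-and-diameter argument (which relied on \Cref{prop:levboundconj}, available only for the Levenshtein family) by a direct combinatorial analysis of the self-product $T^2 = T\times T$ that uses both clauses of the \HTP. A pair of distinct runs of $T$ on a common input is a single run $\pi = c_0\cdots c_n$ of $T^2$ from a pair of initial states, whose two components produce $w$ and $w'$; let $\ell_1(t),\ell_2(t)$ be the output lengths produced after $t$ steps and $d_t=\ell_1(t)-\ell_2(t)$ the delay. Fix $K$ bounding the length of every transition output and every final output of $T$, let $Q$ be the state set of $T$, and set $m=|Q|^2$ and $\Delta=Km$. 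The two facts I would extract from the \HTP{} concern any loop of $T^2$ rooted at a state reachable from the initial pair --- equivalently, any twinning pattern with $p_1,p_2$ initial: such a loop produces outputs $v_1,v_2$ with $|v_1|=|v_2|$, and, writing $d$ for the delay at its entry and assuming $d\geq0$, the no-mismatch clause restricted to the loop reads $v_1[j]=v_2[j+d]$ for all $1\le j\le |v_1|-d$ (the case $d<0$ is symmetric).

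For both items I would use the same decomposition of $\pi$ into a simple \emph{backbone} plus \emph{excursions}. Reading $\pi$ left to right, if $c_{i_k}$ is the current backbone state and $e_k$ is the position of its last occurrence in $\pi$, then the sub-walk $c_{i_k}\cdots c_{e_k}$ is recorded as the excursion $E_k$ (a loop at $c_{i_k}$), and $c_{e_k+1}$ becomes the next backbone state $c_{i_{k+1}}$. Jumping past the last occurrence of each backbone state makes the backbone states pairwise distinct, so the backbone is a simple path with at most $m$ transitions and there are at most $m+1$ excursions, each a loop reachable from the initial pair to which the \HTP{} therefore applies. Since every excursion is a loop, it has equal output length in the two components and hence contributes $0$ to the delay; consequently $d_n$ equals the delay accumulated along the simple backbone, giving $|d_n|\le Km=\Delta$ and $\big||w|-|w'|\big|\le \Delta+2K$, which proves item~\ref{item:sameLength}.

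For item~\ref{item:mismatch} I would reconstruct $w$ and $w'$ from the backbone outputs by splicing the excursions back one at a time, and bound the number of Hamming mismatches created. The backbone outputs have length at most $Km$ each, hence contribute at most $Km$ mismatches. The key point is that splicing a single excursion $E_k$ is a clean insertion of $v_1$ into $w$ and $v_2$ into $w'$ at corresponding transition times, i.e.\ at comparison positions $p$ in $w$ and $p-d$ in $w'$, where $d=d_{i_k}$ is the backbone delay at $c_{i_k}$, with $|d|\le Km=\Delta$. Outside a window of positions around the insertion the left-aligned comparison is unchanged (positions past the window are merely shifted by $|v_1|=|v_2|$), and inside the window the identity $v_1[j]=v_2[j+d]$ forces $w[p+j]=w'[p+j]$ for all $1\le j\le |v_1|-d$, so all but at most $2|d|\le 2\Delta$ positions of the window match. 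Thus each splice adds at most $2\Delta$ mismatches, and taking $N_T=Km+(m+1)\cdot 2\Delta+2K$ (the last term absorbing the final outputs) bounds the total number of mismatching positions.

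The main obstacle is precisely this mismatch count, and it has two faces. First, removing or inserting a loop shifts the left-alignment of the two outputs and a priori ruins any position-by-position accounting; this is handled by observing that, by the no-mismatch clause, the inserted blocks $v_1,v_2$ agree in left-alignment except at a boundary whose size is the delay $|d|\le\Delta$. Second, and crucially, a loop traversed many times (or a nested excursion) might seem to accumulate mismatches proportional to its length; this is avoided by treating each maximal excursion as a single insertion and invoking the \HTP{} on the whole excursion-loop at once, so that the no-mismatch clause holds verbatim for the entire block and the boundary error stays bounded by $\Delta$ independently of the excursion's length. With these two observations the bound is insensitive to both the number of loop iterations and the lengths of the outputs, as required.
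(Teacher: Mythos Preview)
Your proof is correct and follows essentially the same strategy as the paper: both factor the run of the self-product $T\times T$ into a short simple ``backbone'' of length at most $|Q|^2$ together with synchronised loops, use the equal-length clause of the \HTP{} on the loops to bound the length discrepancy for item~\ref{item:sameLength}, and use the no-mismatch clause (with the delay at loop entry bounded by the backbone output length) to bound the mismatches for item~\ref{item:mismatch}. The only differences are presentational: you build the backbone via last-occurrence jumps and count mismatches incrementally by splicing one excursion at a time, whereas the paper states the loop decomposition $u=u_0v_1u_1\cdots v_ku_k$ directly and bounds the mismatches by a single global overlap argument; both yield a polynomial bound of the same shape in $|Q|$ and the maximal transition-output length.
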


    \begin{proof}
    We establish the lemma by decomposing the runs producing $w$ and $w'$ into short simple paths combined
    with synchronised looping segments.
    This decomposition naturally gives rise to multiple occurrences of the Twinning pattern.
    Since the transducer $T$ satisfies the \HTP{},
    we can leverage this property to show that all synchronised loops
    generate output words of the same length.
    Consequently, any difference in output length can only arise from the short simple paths.
    Furthermore, the \HTP{} also constrains the locations of potential mismatches between the two output words.
    By quantifying these effects, we derive the desired bounds on both the length difference
    and the number of mismatched positions.
    
    Formally, let $u \in A^*$ be such that $T$ has two initial runs $\rho$ and $\rho'$ on $u$,
    producing the outputs $w$ and $w'$, respectively. 
    Let $n \in \mathbb{N}$ denote the number of states of $T$,
    and let $M \in \mathbb{N}$ be the maximal output length produced by the transition function of $T$.
    We set $N_T = Mn^4$.
    
    Since there are at most $n^2$ distinct pairs of states in $T$,
    if the length of $u$ exceeds $n^2$, then the runs $\rho$ and $\rho'$ must revisit at least one pair of states while processing $u$, thus looping synchronously.
    More precisely, there exists a decomposition of $u$ into nonempty words
    \[
        u = u_0 v_1 u_1 \dots v_k u_k
    \]
    such that $|u_0 u_1 \dots u_k| < n^2$ and both $\rho$ and $\rho'$ loop
    while processing each $v_i$ for $1 \leq i \leq k$.
    This decomposition induces $k$ twinning patterns in $T$, as illustrated below:
    
    \begin{center}
    \begin{tikzpicture}[draw=black,shorten >=1pt, auto, node distance=2cm, on grid, >=stealth,scale=0.8, every node/.style={scale=0.8}]
        \tikzset{state/.style={circle, draw=black, thick, minimum size=0.5cm}}

        \node[state,inner sep = 6pt,initial,initial text=$\rho$] at (-7,1) (q_0){}; 
        \node[state,inner sep = 6pt,initial,initial text=$\rho'$] at (-7,-1) (q_0'){}; 
        \node[state,inner sep = 6pt] at (0,1) (q_i) {};
        \node[state,inner sep = 6pt] at (0,-1) (q_i') {};
        \node[state,inner sep = 6pt,accepting] at (7,1) (q_f) {};
        \node[state,inner sep = 6pt,accepting] at (7,-1) (q_f') {};
        
        \node at (q_0)   {$p_0$}; 
        \node at (q_0')   {$p_0'$}; 
        \node at (q_i)   {$p_i$}; 
        \node at (q_i')   {$p_i'$}; 
        \node at (q_f)   {$p_{F}$}; 
        \node at (q_f')   {$p_{F}'$}; 
        
        \path[->] 
        (q_0) edge [above] node {$u_0 v_1 u_1 \dots v_{i-1} u_{i-1} | w_0 x_1 w_1 \dots x_{i-1} w_{i-1}$} (q_i)
        (q_0') edge [above] node {$u_0 v_1 u_1 \dots v_{i-1} u_{i-1} | w_0' x_1' w_1' \dots x_{i-1}' w_{i-1}'$} (q_i')
        (q_i) edge [above] node  {$u_i v_{i+1} u_{i+1} \dots v_k u_k | w_i x_{i+1} w_{i+1} \dots x_k w_k$} (q_f)
                edge [loop above] node {$v_i | x_i$} ()
        (q_i') edge [above] node  {$u_i v_{i+1} u_{i+1} \dots v_k u_k | w_i' x_{i+1}' w_{i+1}' \dots x_k' w_k'$} (q_f')
                 edge [loop above] node {$v_i | x_i'$} ();
    \end{tikzpicture}
    \end{center}
    
    Since $T$ satisfies the \HTP{}, we have that $|x_i| = |x_i'|$ for every $1 \leq i \leq k$.
    This allows us to bound the difference in output length between $\rho$ and $\rho'$
    after reading each $u_i$:
    
    \begin{equation} \label{equ:sizeBound}
        \begin{aligned}
            \big| |w_0 x_1 w_1 \dots x_{i} w_{i}| - |w_0' x_1' w_1' \dots x_{i}' w_{i}'| \big|
            &= \big| |w_0 w_1 \dots w_{i}| - |w_0' w_1' \dots w_{i}'| \big| \\
            &\leq \max \big( |w_0 w_1 \dots w_{i}|, |w_0' w_1' \dots w_{i}'| \big) \\
            &\leq M \cdot |u_0 u_1 \dots u_{i}| \\
            &< Mn^2.
        \end{aligned}
    \end{equation}
    Applying Equation~\eqref{equ:sizeBound} to the case $i = k$ immediately yields Item~\ref{item:sameLength}:
    \[
        \big| |w| - |w'| \big| = \big| |w_0 x_1 w_1 \dots x_{k} w_{k}| - |w_0' x_1' w_1' \dots x_{k}' w_{k}'| \big| < Mn^2.
    \]
    To establish Item~\ref{item:mismatch},
    note that Equation~\eqref{equ:sizeBound} implies that for each $1 \leq i \leq k$,
    the subwords $x_i$ and $x_i'$ in $w$ and $w'$ are shifted by at most $Mn^2$.
    In other words, at most $Mn^2$ positions of $x_i$ don't overlap $x_i'$.
    As $T$ satisfies the \HTP{}, every position of $x_i$ that overlaps with $x_i'$
    is a match between $w$ and $w'$.
    Therefore, the positions where $w$ differs from $w'$ can only occur~:
    \begin{itemize}
        \item In one of the $w_i$,
        contributing at most $|w_1 w_2 \dots w_k| \leq M \cdot |u_1 u_2 \dots u_k| < Mn^2$ mismatches.
        \item In the part of some $x_i$ that does not overlap $x_i'$, contributing at most $k \cdot Mn^2$ mismatches.
    \end{itemize}
    As the $u_i$ are nonempty we can bound the number of mismatches between $w$ and $w'$ by
    \[
        Mn^2 + k \cdot Mn^2 \leq Mn^2 + (|u_0u_1 \ldots u_k|-1) Mn^2 < Mn^4.\qedhere
    \]
\end{proof}

\begin{proposition}\label{proposition:approxtoHTPSTP}
   If a rational relation given by a transducer $\calT$
   is approximately determinisable w.r.t.~Hamming distance
   then $\calT$ satisfies both \HTP{} as well as \STP{}.
\end{proposition}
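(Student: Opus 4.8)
The plan is to mirror, almost verbatim, the argument of \Cref{proposition:approxtoATPSTP}, replacing \ATP{} by \HTP{} and the Levenshtein family by the Hamming distance $d_h$. By hypothesis the (rational) function recognised by $\calT$ is approximately determinisable w.r.t.\ $d_h$, so \Cref{def:approxdeter} supplies a sequential function, given by a deterministic transducer $\calD$, with $d_h(\calT,\calD) < \infty$. Since $\calD$ is sequential, its underlying automaton is deterministic, so it satisfies the twinning property; hence, by \Cref{proposition:TPrelations}, $\calD$ satisfies both \STP{} and \HTP{}.

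The next step is to transfer these two properties from $\calD$ back to $\calT$ using \Cref{lem:preserveTPs}. First I would observe that $d_h(\calT,\calD) < \infty$ forces $\dom(\calT) = \dom(\calD)$: by \Cref{def:distfunc} the distance is $\infty$ as soon as the domains differ, so finiteness already rules this out. Moreover, the finite value yields a constant $C \in \mathbb{N}$ with $d_h(\calT(u),\calD(u)) < C$ for every $u \in \dom(\calT)$, which is exactly the hypothesis required by \Cref{lem:preserveTPs}. As $d_h$ is one of the edit distances listed in \Cref{table:editdistance}, the lemma applies in two ways: for ${\sf P} = \STP$ it holds for every edit distance of the table, and for ${\sf P} = \HTP$ the lemma explicitly covers the instance $d = d_h$. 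Since $\calD$ satisfies both \STP{} and \HTP{}, the equivalences provided by \Cref{lem:preserveTPs} let me conclude that $\calT$ satisfies \STP{} and \HTP{} as well, which is precisely the claim.

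I expect no genuine obstacle here: the entire mathematical content has already been packaged into \Cref{proposition:TPrelations} (sequential $\Rightarrow$ \TP{} $\Rightarrow$ \STP{} and \HTP{}) and into \Cref{lem:preserveTPs} (invariance of \STP{} and \HTP{} under bounded $d_h$-distance), so the proof is pure bookkeeping. The only points warranting care are the two hypotheses of \Cref{lem:preserveTPs}, namely that the domains of $\calT$ and $\calD$ coincide (guaranteed by the finiteness of $d_h(\calT,\calD)$ via \Cref{def:distfunc}) and that the chosen distance lies in the scope for which the \HTP{} case of the lemma was proved (which is exactly $d_h$). Once these are checked, the conclusion follows immediately.
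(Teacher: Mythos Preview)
Your proposal is correct and follows essentially the same route as the paper's own proof: obtain a sequential $\calD$ at finite Hamming distance, use \Cref{proposition:TPrelations} to get \HTP{} and \STP{} for $\calD$, and then transfer both properties back to $\calT$ via \Cref{lem:preserveTPs}. The only difference is that you spell out the verification of the hypotheses of \Cref{lem:preserveTPs} (equal domains and the applicability of the \HTP{} clause for $d_h$) a bit more explicitly than the paper does.
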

\begin{proof}
    Let us suppose that the rational relation given by $\calT$
    is approximately determinisable, i.e.,
    there exists a sequential function given by a deterministic transducer $\calD$ such that $\dHam{\calT}{D} < \infty$.
    Since $\calD$ is a sequential transducer, it satisfies the twinning property, and consequently,
    $\calD$ also satisfies \HTP{} and \STP{} by \Cref{proposition:TPrelations}.
    Since $\dHam{\calT}{\calD} <\infty$, we can conclude that $\calT$ also satisfies \HTP{} as well as \STP{}
    by applying \Cref{lem:preserveTPs}.
\end{proof}

\begin{lemma} \label{lemma:multiseqapproxHamming}
    A multi-sequential function given by a transducer $\calT$
    is approximately determinisable w.r.t. the~Hamming distance
    if and only if $\calT$ satisfies the \HTP{}.
\end{lemma}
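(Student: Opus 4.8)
The plan is to prove Lemma~\ref{lemma:multiseqapproxHamming} by mirroring the structure of the Levenshtein proof in Lemma~\ref{lemma:multiseqapprox}, replacing the single bound $N_\calT$ on the edit distance between parallel outputs with the two-part bound from Lemma~\ref{claim:HP}: a bound on length differences and a bound on mismatch counts. The forward direction $(\rightarrow)$ is immediate from Proposition~\ref{proposition:approxtoHTPSTP}, since a multi-sequential transducer is in particular a transducer, so approximate determinisability w.r.t.~$d_h$ forces \HTP{}. The substance is the converse $(\leftarrow)$.

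For $(\leftarrow)$, assume $\calT$ is equivalent to a finite union $\calU = \calD_1 \cup \cdots \cup \calD_k$ of sequential trim transducers. By Lemma~\ref{lem:preserveTPs} (the $d = d_h$, ${\sf P} = \HTP$ case) $\calU$ also satisfies \HTP{}. I would build the \emph{same} product construction $\calD$ as in Lemma~\ref{lemma:multiseqapprox}: run all $\calD_i$ in parallel, and on each input letter emit the output of the smallest-index component that is still alive. The domain of $\calD$ equals that of $\calT$, and on any input $u$ the output $\calD(u)$ decomposes into at most $k$ blocks $v_{i_1} \cdots v_{i_n}$, one per active component, with at most $k$ switches between components (since once a sequential component dies on a deterministic run it stays dead, so indices only increase). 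The crucial difference from the Levenshtein setting is that for Hamming distance I cannot simply sum edit distances via the triangle inequality and absorb insertions: I must control \emph{alignment}, because a length discrepancy introduced at a switch propagates as unboundedly many mismatches down the rest of the output.

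The key step is therefore to bound both the length difference and the mismatch count between $\calD(u)$ and $\calT(u) = v$ directly, using the two items of Lemma~\ref{claim:HP}. At each of the $\leq k$ switch points, the producing run and the next producing run are two distinct initial runs of $\calT$ on the same input prefix; applying Lemma~\ref{claim:HP} bounds their length difference by $N_\calT$ and their mismatch count by $N_\calT$. I would argue inductively that after $n \le k$ switches the accumulated length difference between $\calD(u)$ and $v$ is at most $k \cdot N_\calT$, and that the total number of mismatches is at most $k \cdot N_\calT$ contributed by the misaligned prefixes at the switches, plus the mismatches inherited inside each block. Since both quantities are bounded by a constant depending only on $\calT$ (namely $k$ and $N_\calT$), and since $\dom(\calD) = \dom(\calT)$, we conclude $\dHam{\calD}{\calT}$ is finite, so $\calT$ is approximately determinisable w.r.t.~$d_h$.

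The \textbf{main obstacle} is exactly the alignment bookkeeping: unlike the Levenshtein case, where \Cref{lem:ATPtobound} gives one clean bound and the triangle inequality closes the argument, here the Hamming distance is sensitive to global shifts, so I must verify that the length differences at the $\leq k$ switch points do not compound into an unbounded drift. The saving grace is Item~\ref{item:sameLength} of Lemma~\ref{claim:HP}: each switch contributes a length difference of at most $N_\calT$ \emph{relative to the true run}, and because all the relevant pairs of runs share the single accepting run of reference (by passing to each switch the comparison against the component that $\calT$ itself follows), the individual discrepancies are each bounded against $v$ rather than chaining multiplicatively. Making this precise — choosing the right run to compare against at each switch so that Lemma~\ref{claim:HP} applies and the bounds stay additive in $k$ rather than multiplicative — is where the care is needed, but it follows the same template as Lemma~\ref{lemma:seriesseqapproxHamming} will use.
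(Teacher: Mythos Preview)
Your proposal has a genuine gap, and it is precisely the obstacle you name but do not actually resolve. You propose reusing the simple product construction from \Cref{lemma:multiseqapprox} verbatim: emit, letter by letter, the output of the smallest-index live component. But this construction can fail for Hamming distance, and the paper gives an explicit counterexample in \Cref{fig:enter-label}: on input $\texttt{babcccc}$, the naive construction $\calD_1$ outputs $\texttt{aabcabcabcabc}$ while $\calT$ outputs $\texttt{bbabcabcabcabc}$. The length discrepancy created at the switch (one letter) shifts the entire tail $(abc)^m$ by one position, producing $\Theta(m)$ mismatches. \Cref{claim:HP} bounds the length \emph{difference} at each switch by $N_\calT$, and it bounds the mismatch count \emph{between two aligned run-outputs} by $N_\calT$; what it does not do is bound the number of mismatches between a word and a bounded \emph{shift} of that word. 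Your inductive sketch silently assumes this last bound, which is false in general.

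The paper's construction is therefore genuinely different: it uses a subset construction \emph{with delays}, storing for each live component $\calD_i$ the suffix $v_i$ of its output not yet emitted. At each step it emits the shortest such suffix and truncates all the others accordingly, so that the invariant ``the $m$-th letter of $\calD(u)$ is the $m$-th letter of the output of some run of $\calT$'' (Claim~\ref{claim:output}) is maintained throughout. This is exactly what guarantees that a switch never introduces a shift: the delay is absorbed into the state rather than into the output stream. Boundedness of the state space then follows from Item~\ref{item:sameLength} of \Cref{claim:HP}, and the Hamming bound follows from Item~\ref{item:mismatch} via the union bound $\sum_i \textsf{mismatch}(v_j,v_i) \leq k N_\calT$. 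Your construction cannot be salvaged by a more careful analysis; you need the delay-tracking mechanism.
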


\begin{proof}
    ($\rightarrow$) By \Cref{proposition:approxtoHTPSTP},
    every transducer defining a relation that is
    approximately determinisable 
    with respect to the Hamming distance satisfies the \HTP{}.
    
    ($\leftarrow$) Let $\calT$ be a multi-sequential transducer satisfying the \HTP{}.
    Then $\calT$
    it is equivalent to a finite union of sequential transducers
    $\calU = \calD_1 \cup \calD_2 \cup \cdots \cup \calD_k$,
    where $\calD_i = (Q_i,s_i,\delta_i,F_i,\lambda_i)$
    is a sequential trim transducer for all $1 \leq i \leq k$.
    As $\calT$ satisfies the \HTP{},
    so does $\calU$ by \Cref{lem:preserveTPs}.
    We build a sequential transducer $\calD$ approximating $\calU$
    by modifying the \emph{subset construction with delays}
    used to determinise transducers satisfying the Twinning property~\cite{twinningproperty,twinningproperty0}.
    
    While processing an input word $u$,
    the transducer $\calD$ simulates the run of each $\calD_i$ on $u$,
    and tracks the outputs they produce.
    To that end, the states of $\calD$ are tuples of pairs
    \[
        ((p_1,v_1),(p_2,v_2), \ldots, (p_k,v_k)) \in
        ((Q_1 \cup \{d\}) \times B^*) \times ((Q_2\cup \{d\}) \times B^*) \times \ldots \times ((Q_k\cup \{d\}) \times B^*).
    \]
    Each $p_i$ is the state reached by $\calD_i$
    on the current input, and the special ``dead'' state $d$ is used to denote the fact that $\calD_i$
    does \emph{not} have a run on the current input.
    Each $v_i$ is a suffix of the output $v_i'$ produced by $\calD_i$
    on the current input.
    Observe that the complete output word $v_i'$ 
    may grow arbitrarily large as $u$ increases in length,
    and $\calD$ cannot store it entirely while maintaining a finite state space.
    $\calD$ decides which suffix to keep in memory
    through the following procedure~:
    To process an input letter $a$,
    $\calD$ updates each pair $(p_i,v_i)$ such that $\delta_i(p_i,a) = (q_i,w_i)$ into 
    $(q_i,v_iw_i)$.
    Then, $\calD$ identifies the \emph{minimal length} $C \in \mathbb{N}$ among all $|v_iw_i|$,
    truncates the prefix of length $C$ of all these words,
    and outputs one of the prefixes.
    While this approach may introduce errors, as the partial outputs produced along the run
    may not correspond to the run that is eventually accepting,
    our analysis shows that the satisfaction of the \HTP{}
    ensures bounded Hamming distance between the generated output
    and the correct one.
    
    Finally, at the end of the input word,
    $\calD$ selects a pair $(p,w)$
    from its current state such that $p$ is a final state,
    and outputs the concatenation of $w$ with the final output of $\trans$ at $p$.

    \subparagraph*{Formal construction}
    We set $\calD = (Q,s,\delta,F,\lambda)$, where
    \begin{enumerate}
        \item The set of states is $Q = Q_1' \times Q_2' \times \cdots \times Q_k'$,
        with
        $Q_i' = (Q_i \cup \{d\}) \times B^*$.
        While this definition allows arbitrarily large words,
        we will show that only a finite subset of $Q$ is reachable from the initial state.
        \item
        The initial state is $s = ((s_1,\epsilon),(s_2,\epsilon),\cdots,(s_k,\epsilon))$.
        \item The transition function maps each pair $(((p_1,v_1), (p_2,v_2), \ldots, (p_k,v_k)),a) \in Q \times A$ to
        \[
            \delta(((p_1,v_1), (p_2,v_2), \ldots, (p_k,v_k)),a)
            = (((q_1,x_1'), (q_2,x_2'), \ldots, (q_k,x_k')),w),
        \]
        with the pairs $(q_i,x_i')$ and the word $w$ defined as follows.
        First, for all $1 \leq i \leq k$ we set
        \[
        (q_i,x_i) =
        \begin{cases}
        (q_i,v_iw_i) & \text{if } p_i \in Q_i \text{ and } \delta_i(p_i,a) = (q_i,w_i); \\
        (d,\epsilon) & \text{if } p_i = d \text{, or if } p_i \in Q_i \text{ and } \delta_i(p_i,a) \text{ is undefined.}
        \end{cases}
        \]
        Then, we let $C \in \mathbb{N}$ be the minimal element of the set 
        $\{|x_{i}| \mid 1 \leq i \leq k \textup{ and } q_i \neq d\}$,
        or $C = 0$ if this set is empty.
        We set $w = x_j$, where $j$ is the smallest index satisfying $|x_j| = C$,
        and for every $1 \leq i \leq k$
        we set $x_i'$ as the word obtained by deleting the $C$ first letters of $x_i$.
        \item The set of final states is 
        $F = \{((p_1,u_1), (p_2,u_2), \ldots, (p_k,u_k)) \mid p_i \in F_i \textup{ for some } 1 \leq i \leq k\}$.
        \item The output function maps each state $p = ((p_1,u_1), (p_2,u_2), \ldots, (p_k,u_k)) \in Q$
        to
        \[
        \lambda(p) =
        \begin{cases}
        u_i \lambda_i(p_i) &  \text{if } p \in F \text{, where $i$ is the smallest index such that } p_i \in F_i; \\
        \epsilon & \text{otherwise}.
        \end{cases}
        \]
    \end{enumerate}

    \subparagraph*{Key properties}
    In order to use our construction to prove the Lemma,
    we explicitly state three key properties.
    Let us fix an input word $u \in A^*$,
    and let $p = ((p_1,u_1), (p_2,u_2), \ldots, (p_k,u_k)) \in Q$ and $v \in B^*$
    denote the state reached and the word produced by $D$ while reading $u$.
    First, observe that if we disregard the output words in the states of $D$
    and remove any occurrences of the ``dead'' state $d$,
    the resulting structure is exactly a subset construction keeping track of which
    states of $T$ can be reached on the current input.
    This leads to the following claim~:
    
    \begin{claim}\label{claim:states}
        For every $1 \leq i \leq k$ we have $q_i \neq d$
        if and only if $\calD_i$ has an initial run on $u$ that ends in $q_i$.
    \end{claim}
    Next, our construction keeps the output suffixes $u_i$
    synchronised with the outputs of $\trans$:

    \begin{claim}\label{claim:words}
        For every $i,i'$ such that $p_i \neq d$ and $p_{i'} \neq d$,
        the length difference $|u_i| - |u_i'|$ is equal
        to the length difference between the outputs produced by $D_i$ and $D_{i'}$ on $u$.
    \end{claim}
    Finally, our definition of the transition function ensures that each fragment of output
    produced by $\calD$ while processing $u$ can be traced back to a fragment of output
    produced by one of the transducers $\calD_i$ while reading some prefix of $u$.
    This leads to the following claim~:
 
    \begin{claim}\label{claim:output}
        For all $u \in \dom(\calD)$, for all $1 \leq i \leq |\calD(u)|$,
        there exists a run $\rho$ of $T$ on some prefix of $u$
        such that the $i$th letter of $u$ is equal to the $i$th letter of the word produced by $\rho$.
    \end{claim}

    \subparagraph*{Bounding the state space}
    As explained in the definition of the state space of $\calD$,
    we initially define it as an infinite set of states, and we now show that
    only a finite subset is reachable from the initial state.
    To that end, we prove that every state $p =  ((p_1,w_1), (p_2,w_2), \ldots, (p_k,w_k))$
    reachable from the initial state of $\calD$ satisfies $|w_i| < N_\calT$ for every $1 \leq i \leq k$, where $N_\calT$ is the constant defined in \Cref{claim:HP}.
    This implies that the trimmed version of $\calD$ contains at most $(|Q_i| + 1) \cdot |A|^{N_\calT+1}$ states.
    Let $u$ denote a word so that $\calD$ reaches $p$ from the initial state while reading $u$.
    First, remark that by construction every pair $(p_i,w_i)$ such that $p_i = d$
    satisfies $w_i = \epsilon$, thus $|w_i| = 0 < N_\calT$.
    Now, observe that the definition of the transition function implies that
    if there is at least one index $1 \leq j \leq k$ satisfying $q_j \neq d$,
    then there exists such an index such that we also have $w_j = \epsilon$.
    Now for every other index $1 \leq i \leq k$ satisfying $q_i \neq d$,
    we get that $|w_i| = |w_i| - |\epsilon| = |w_i| - |w_j|$,
    which is equal to the difference between the length of the outputs produced by $D_i$ and $D_j$
    on $u$  by Claim~\ref{claim:words}.
    Finally,~\Cref{claim:HP} implies that this difference is smaller than $N_\calT$.

    \subparagraph*{Bounding the Hamming distance}
    We now proceed with the proof that $d(\calD,\calT) < \infty$ w.r.t.~the Hamming distance.
    Specifically, we show that every word $u \in \dom(\calT)$ satisfies
    $\dHam{\calD(u)}{\calT(u)} \leq k \cdot N_\calT$,
    where $N_\calT$ is the constant from~\Cref{claim:HP}.
    
    First, by Claim~\ref{claim:states} and the definition of the final states of $D$,
    both $\calD$ and $\calT$ have the same domain.
    Fix an input word $u \in \dom(\calT)$.
    Remark that \Cref{claim:words} and the definition
    of the output function $\lambda$ imply that
    $|\calT(u)| = |\calD(u)|$.
    Therefore, in order to bound $\dHam{\calD(u)}{\calT(u)}$
    it is sufficient to bound the number of mismatches
    $\textsf{mismatch}(\calD(u),\calT(u)) \in \mathbb{N}$ between $\calD(u)$ and $\calT(u)$.
    For each $1 \leq i \leq k$
    let $u_i$ denote the longest prefix of $u$
    for which $\calD_i$ has an initial run,
    and let $v_i$ denote the output produced by this run.
    Since $\dom(\calD) = \dom(\calT)$ there exists at least one index $j \in \{1,2,\ldots,k\}$
    such that $u$ is in the domain of $\calD_j$, which we now fix for the remainder of the proof.
    By our notation, we have $u_j = u$ and $v_j = \calD(u)$.

    By Claim~\ref{claim:output}, for every $1 \leq m \leq |v|$
    there exists $1 \leq i \leq k$ such that $v[m] = v_i[m]$.
    Consequently, every position where $v_j$ and $v$ differ
    is also a position where $v_j$ and $v_i$ differ for some $i$.
    This yields the desired result~: 
    \[
        \dHam{\calD(u)}{\calT(u)}
        = \textsf{mismatch}(v_j,v) \leq \sum_{i=1}^k \textsf{mismatch}(v_j,v_i) \leq \sum_{i=1}^k N_\calT \leq k \cdot N_\calT.\qedhere
    \]
\end{proof}
\begin{lemma} \label{lemma:seriesseqapproxHamming}
    Let $\calU = \calD_1 \calD_2 \cdots \calD_k$ be an unambiguous transducer,
    where $\calD_i$ is a sequential trim transducer for $i \in [k], k>1$.
    The series-sequential function given by the transducer $\calU$
    is approximately determinisable w.r.t.~Hamming distance
    if and only if $\calU$ satisfies the Hamming twinning property.
\end{lemma}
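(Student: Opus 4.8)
The forward implication would be immediate: by \Cref{proposition:approxtoHTPSTP}, any transducer defining a relation approximately determinisable with respect to the Hamming distance satisfies the \HTP, and here that transducer is $\calU$ itself. So the whole work is the converse, and my plan is to construct, from a $\calU$ satisfying the \HTP, a sequential transducer $\calD$ with $\dHam{\calD}{\calU}<\infty$.

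The construction I would use merges the two ingredients already developed for this regime. From \Cref{lemma:multiseqapproxHamming} I would borrow the subset construction with delays: a state of $\calD$ records, for every surviving run of $\calU$ on the current input, the state it has reached together with a suffix of its output not yet committed. From \Cref{lemma:seriesseqapprox} I would borrow the organisation of switches for the series structure: one surviving run is distinguished as the \emph{producing run}, chosen in the smallest-indexed component $\calD_i$, and a switch occurs only when this run dies. On each input letter $\calD$ advances all surviving runs, truncates a common prefix of all stored outputs and emits it, exactly as in the truncation step of \Cref{lemma:multiseqapproxHamming}, keeping the residual suffixes as delays. This is precisely the behaviour of $\calD_2$ in \Cref{fig:enter-label}. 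The one genuinely new point, compared with the Levenshtein construction, is that when the producing run dies and $\calD$ switches, it consults the stored delay of the incoming run and uses it to \emph{realign} the output, instead of restarting afresh; it is this bookkeeping that prevents the permanent length drift that would otherwise send the Hamming distance to infinity.

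The verification then splits into three parts. First, I would bound the number of switches exactly as in \Cref{lemma:seriesseqapprox}: inside a fixed $\calD_i$ the set of surviving runs can only shrink because $\calD_i$ is sequential, and once $\calD$ leaves $\calD_i$ it never comes back, so along any run there are at most $N$ switches, where $N$ is the number of states of $\calU$. Second, I would bound the state space: since every surviving run is a run of $\calU$ from an initial state on the same input, Item~\ref{item:sameLength} of \Cref{claim:HP} gives that any two of their outputs differ in length by less than the constant $N_\calU$ it provides for $\calU$; arguing as in the state-space bound of \Cref{lemma:multiseqapproxHamming}, every stored delay then has length below $N_\calU$, so $\calD$ is genuinely finite state. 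Third, I would bound the Hamming distance. As $\calU$ is unambiguous there is a unique accepting run on each $u\in\dom(\calU)$; the realignment at switches is designed to force $|\calD(u)|=|\calU(u)|$, so only mismatches count. Via the analogue of \Cref{claim:output}, every position of $\calD(u)$ is copied from some run of $\calU$ on a prefix of $u$, and Item~\ref{item:mismatch} of \Cref{claim:HP} bounds by $N_\calU$ the number of mismatches between any such run and the accepting run; since $\calD(u)$ is assembled from at most $N$ producing segments, this yields $\dHam{\calD(u)}{\calU(u)}\le N\cdot N_\calU$, a bound uniform over $\dom(\calU)$.

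The step I expect to be hardest is the third one, and within it the interaction of switching with output length. For the Levenshtein distance a switch could resynchronise by inserting or deleting a bounded block, which is exactly why \Cref{lemma:seriesseqapprox} required no delay bookkeeping at all; for the Hamming distance no such correction exists, so the construction must keep, at every instant and across every switch, an exact length alignment with an accepting run that is not yet determined. The delicate point is to define the realignment performed when a producing run dies so that it stays consistent with the equal-length conclusion of Item~\ref{item:sameLength} of \Cref{claim:HP}: the committed output may drift from the accepting run by at most a bounded length, and the switch must absorb exactly this drift. This is the precise place where the \HTP hypothesis, through the fact that synchronised loops produce outputs of equal length, is indispensable.
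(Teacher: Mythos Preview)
Your proposal is correct and follows essentially the same route as the paper: a subset-with-delays construction in which output is always drawn from a run lying in the smallest-indexed surviving component, with \Cref{claim:HP} used once to bound the stored delays (hence the state space) and once to bound mismatches against the accepting run. The only difference is organisational. You keep an explicit producing-run marker inherited from \Cref{lemma:seriesseqapprox} and bound the Hamming distance via the switch count $\le |Q|$; the paper instead fixes a total order $\prec$ on $Q$ compatible with the component index, at every step emits the truncated prefix of the $\prec$-minimal live pair (so no marker and no distinguished ``switch'' events), and replaces your switch bound by a short combinatorial claim: every such \emph{optimal} run on a prefix of $u$ is a prefix of one of at most $|Q|$ fixed runs $\rho_q$, one per state reachable at the moment its component first becomes minimal. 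Both devices yield the same bound $|Q|\cdot N_{\calU}$. Your ``realignment on switch'' is then nothing more than continuing to emit from the stored delay of the newly marked run---in the paper's formulation this is automatic, since the minimal-length truncation is applied uniformly at every step and already keeps all delays position-aligned with the committed output.
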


\begin{proof} ($\rightarrow$) By \Cref{proposition:approxtoHTPSTP}.\\
    ($\leftarrow$) Let $\calU = (Q,s,\delta,F,\lambda)$, and for every $1 \leq i \leq k$ let $\calD_i = (Q_i,s_i,\delta_i,F_i,\lambda_i)$.
    We construct a sequential transducer $\calD = (Q',G_S,\delta',F',\lambda')$ that approximately determinises $\mathcal{U}$
    by following a similar approach to the one used in the proof of Lemma~\ref{lemma:multiseqapproxHamming}.
    Specifically, we design $\calD$ to track the set of states in which $\mathcal{U}$ could currently be,
    while also keeping track of synchronised output suffixes.
    At each step, $\calD$ produces a part of these suffixes, ensuring that the state space remains finite.
    The key difference compared to the proof of Lemma~\ref{lemma:multiseqapproxHamming} is that,
    whereas in the previous case we could pick any suffix of minimal length to be produced,
    here we must be more careful to ensure that the Hamming distance remains bounded.
    We show that selecting suffixes corresponding to runs that end in the component $\calD_i$
    with the smallest index $i$ guarantees the desired outcome.

    \subparagraph*{Formal Construction}
    We begin by selecting a total ordering $\prec$ of the states of $\calU$
    that respects reachability constraints:
    \[
        \text{For all } q_i \in Q_i \text{ and } q_j \in Q_j, \text{ if } i < j \text{ then } q_i \prec q_j.
    \]
    This order is naturally extended to pairs in $Q \times B^*$,
    where pairs are compared first by the state component according to $\prec$
    and then lexicographically by the word component if states are identical.
    
    \begin{enumerate}
        \item The set of states $Q'$ is the power set $\mathcal{P}(Q \times B^*)$.
        While this definition allows arbitrarily large words,
        we will show that only a finite subset of $Q'$ is reachable from the initial state.
        \item The initial state is $G_s = \{(s_1,\epsilon)\}$.
        \item The transition function operates in two steps. Given $(G,a) \in Q' \times A$, define:
        \[
            \overline{H} = \{(q,uw) \mid \text{there exists $(p,u) \in G$ such that } \delta(p,a) = (q,w)\}.
        \]
        Then, let $C \in \mathbb{N}$ be the minimum length of the set
        $\{|w| \mid (q,w) \in \overline{H}\}$, and let
        \[
            H = \{(q,w_2) \mid (q,w_1w_2) \in \overline{H} \text{ for some } w_1 \in B^* \textup{ satisfying } |w_1| = C\}.
        \]
        We set $\delta'(G,a) = (H,v)$, where $v$ is the $C$-length prefix of the word $w_{\min}$
        occurring in the minimal pair $(q_{\min},w_{\min})$ in $\overline{H}$ according to $\prec$.
        \item The set of final states is 
        $F' = \{G \in Q' \mid \exists (p,u) \in G \text{ such that } p \in F \}.$
        \item The output function maps each state $G \in Q'$ to
        \[
            \lambda'(G) =
            \begin{cases}
                u \lambda(p) & \text{if } p \in F, \text{ where } (p,u) 
                \text{ is the minimal pair in } G \text{ such that } p \in F; \\
                \epsilon & \text{otherwise}.
            \end{cases}
        \]
    \end{enumerate}

   \subparagraph*{Key Properties}
    We establish three fundamental properties linking, for each input word $u \in A^*$,
    the initial run of $\calD$ on $u$ with the runs of $\calU$ on $u$.
    Let us fix an initial run of $\calD$~:
    \[
        \rho : s' \xrightarrow{u \mid v} G.
    \]
    First, observe that disregarding the output words in the states of $\calD$
    results in a subset construction keeping track of the
    reachable states of $\calU$.
    This leads to the following claim~:
    
    \begin{claim}\label{claim:states2}
        For all $p \in Q$,
        $\calU$ has an initial run on $u$ ending in $p$
        if and only if $(p,w) \in G$ for some $w \in B^*$.
    \end{claim}
    Next, our construction guarantees that the output suffixes stored in $G$
    remain synchronized~:
    
    \begin{claim}\label{claim:words2}
        For all $(p,u), (p',u') \in G$,
        the length difference $|u| - |u'|$
        is equal to the length difference
        between the outputs produced by the initial runs of $\calU$ on $u$ ending in $p$ and $p'$.
    \end{claim}
    Finally, our definition of the transition function ensures that the output 
    produced by $\calD$ while processing $u$ can be traced back to output
    produced by $\calU$ while reading some prefix of $u$.
    More precisely, $\delta'$ always produces output
    coming from a run of $\calU$ that ends in the smallest state with respect to $\prec$.
    To reflect this, we say that an initial run of $\calU$ is \emph{optimal}
    if it reaches the minimal state reachable on its input, and we get~:
    \begin{claim}\label{claim:output2}
        For every $1 \leq i \leq v$,
        the $i$th letter of $v$ is equal to the $i$th letter of 
        an optimal run of $\calU$ on some prefix of $u$.
    \end{claim}
    While $\calU$ might have arbitrarily many optimal runs on prefixes of $u$,
    we now build a set of $|Q|$ runs such that every optimal run on a prefix of $u$
    is a prefix of one of these runs.
    This will be crucial to bound the Hamming distance between $\calD$
    and $\calU$.
    
    Let $u = u_1u_2 \ldots u_k$ denote the decomposition of $u$ into 
    (possibly empty) subwords such that for every $1 \leq i \leq k$,
    $u_1u_2 \ldots u_i$ is the minimal prefix of $u$
    for which there exists no run of $\calU$ starting in the initial state
    and ending in $\calD_i$.
    For every $1 \leq i \leq k$, 
    let $\bar{Q}_i \subseteq Q_i$
    denote the set containing all the states $q \in Q_i$
    reachable by reading $u_1 u_2 \dots u_{i-1}$
    from the initial state.
    Moreover, let $\bar{Q} = \bigcup_{i=1}^k \bar{Q}_i$.
    For every $q \in \bar{Q}$ we define a specific run $\rho_q$ as the
    concatenation of two runs $\rho_q'\rho_q''$, where~:
    \begin{itemize}
    \item $\rho_q'$ is the initial run of $\calU$
    on $u_1 u_2 \dots u_{i-1}$ ending in $q$.
    Note that there is only one such run as $\calU$
    is an unambiguous trim transducer.
    \item $\rho_q''$ is the run of $\calD_i$ starting from $q$
    that processes the longest prefix of $u_i$ (while remaining in $\calD_i$).
    Note that there is only one such run as $\calD_i$ is sequential.
    \end{itemize}
    We now argue the following key claim:
    
    \begin{claim}\label{claim:fin}
    Every optimal initial run $\rho$ of $\calU$ on a prefix of $u$
    is a prefix of $\rho_q$ for some  $q \in Q$.
    \end{claim}

    \begin{proof}
    Let $\rho$ be an optimal run on some prefix of $u$,
    and let $1 \leq i \leq k$ be the index of the component $\calD_i$ in which $\rho$ terminates.
    By definition of the decomposition $u = u_1 u_2 \dots u_k$ the run $\rho$ terminates while processing $u_i$,
    thus its input must be of the form $u_1 u_2 \dots u_{i-1} u_i'$
    for some prefix $u_i'$ of $u_i$.
    Moreover, $\rho$ must have already entered the component $\calD_i$
    after reading $u_1 u_2 \dots u_{i-1}$.
    We decompose $\rho$ into two parts $\rho'\rho''$ defined as follows:
    \[
        \rho: p_0 \xrightarrow{u_1 u_2 \dots u_{i-1} \mid v} q \xrightarrow{u_i' \mid w} q'.
    \]
    Since $T$ is an unambiguous trim transducer we have $\rho' = \rho_q'$,
    and as $D_i$ is sequential $\rho''$ is a prefix of $\rho_q''$, 
    which implies that $\rho$ is a prefix of $\rho_q' \rho_q''$.
    This establishes the claim.
    \end{proof}

   \subparagraph*{Bounding the state space}
    The state space $Q'$ in the formal definition is infinite.
    We now show that the set of states reachable from the initial state is finite.
    More precisely, we let $N_\calU \in \mathbb{N}$ be the bound given by Lemma~\ref{claim:HP},
    and show that every initial run
    \[
        \rho : \{(s,\epsilon)\} \xrightarrow{u|v} G =  \{(p_1,w_1), (p_2,w_2), \ldots, (p_m,w_m)\}
    \]
    satisfies $|w_i| < N_\calU$ for every $1 \leq i \leq m$.
    This bounds the number of reachable states by
    \[
    2^{(|Q|) \cdot A^{N_\calU+1}}.
    \]
    Observe that $G$ is reached by successive applications of the transition function $\delta'$
    to the initial state, thus,  by definition of $\delta'$, there is a pair $(q_j,w_j) \in G$
    satisfying $w_j = \epsilon$.
    Since $|\epsilon| = 0$,
    we get that for every $(q_i,w_i) \in G$,
    the length of $|w_i|$ is equal to the difference $|w_i| - |w_j|$.
    In turn, we can apply Claim~\ref{claim:words2} to get that the length of $|w_i|$ is equal to 
    the length difference between the outputs of the initial runs of $\calU$ on $u$
    ending in $q_i$ and $q_j$.
    Finally, Lemma~\ref{claim:HP}
    guarantees that this difference is bounded by $N_\calU$
    as $\calU$ satisfies the \HTP{}.
    
    \subparagraph*{Bounding the Hamming distance}
    We now establish that for every word $u \in \dom(\calU)$,
    the Hamming distance between $\calD(u)$ and $\calU(u)$ is bounded by $|Q| \cdot N_\calU$,
    where  $N_\calU$ is the constant from Lemma~\ref{claim:HP}.
    This ensures that $\dHam{\calD}{\calU} < \infty$.
    
    By Claim~\ref{claim:states2} and the definition of the final states of $\calD$,
    we know that $\calD$ and $\calU$ have the same domain.
    Fix an input word $u \in \dom(\calU)$.
    By \Cref{claim:words2} and the definition of the output function $\lambda'$
    we get that $|\calD(u)| = |\calU(u)|$,
    therefore the distance $\dHam{\calD(u)}{\calU(u)}$
    is equal to the number of mismatches
    between $\calD(u)$ and $\calU(u)$, denoted $\textsf{mismatch}(\calD(u),\calU(u))$.
    
    Combining Claim~\ref{claim:output2} and Claim~\ref{claim:fin}
    yields that for every $1 \leq m \leq |\calD(u)|$,
    the $m$th letter of $\calD(u)$ matches
    the $m$th letter of the output $v_q$ of one of the runs $\rho_q$ with $q \in Q$.
    Consequently, every position where $\calD(u)$ and $\calU(u)$
    differ is also a position where $v_q$ and $\calU(u)$ differ for some $q \in Q$.
    This yields the desired bound through the use of \Cref{claim:HP}~: 
    \[
        \dHam{D(u)}{T(u)} = \textsf{mismatch}(\calD(u),\calU(u))
        \leq \sum_{q \in Q} \textsf{mismatch}(v_q,\calU(u))
        \leq \sum_{q \in Q} N_\calU = |Q| \cdot N_\calU. \qedhere
    \]
\end{proof}

\begin{lemma}\label{theorem:approxdetHamming}
    A rational function given by a transducer $\calT$
    is approximately determinisable w.r.t.~the Hamming distance
    if and only if $\calT$ satisfies
    both the Hamming and the strongly connected twinning property.
\end{lemma}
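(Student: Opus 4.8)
The plan is to mirror the structure of the Levenshtein argument (\Cref{theorem:approxdetlev}), systematically replacing \ATP{} by \HTP{} and invoking the Hamming-specific sublemmas in place of their Levenshtein counterparts. The forward direction is immediate: if $\calT$ is approximately determinisable w.r.t.~Hamming distance, then \Cref{proposition:approxtoHTPSTP} already yields that $\calT$ satisfies both \HTP{} and \STP{}. Thus the entire effort lies in the converse.

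For the converse, I would first use the fact that every rational function can be presented by an unambiguous transducer, so I may assume $\calT$ is unambiguous and satisfies \HTP{} and \STP{}. The guiding idea is to exploit \STP{} to cut $\calT$ into pieces on which \emph{exact} determinisation is possible, and then reassemble them via the two Hamming sublemmas. Concretely, disregarding transition labels, I decompose $\calT$ into its maximal SCCs $S_1,\dots,S_k$, and consider the finite set $\Pi$ of SCC-paths $\pi = S_{i_1}t_{i_1}S_{i_2}\cdots t_{i_{n-1}}S_{i_n}$ running from an SCC containing an initial state to an SCC containing a final state, where each $t_{i_j}$ is a connecting transition. For each $\pi$ I let $\calT_\pi$ be the trim subtransducer retaining only the transitions inside the $S_{i_j}$ together with the connecting transitions $t_{i_j}$. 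As in the Levenshtein case, $\calT \equiv \calU = \bigcup_{\pi\in\Pi}\calT_\pi$, and since equivalence means Hamming distance $0$, \Cref{lem:preserveTPs} (for $d=d_h$) transfers \HTP{} and \STP{} to $\calU$. Unambiguity of $\calT$ guarantees that each SCC inside a given $\calT_\pi$ has a single entry and exit point.

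Next I would turn each $\calT_\pi$ into a series-sequential transducer using \STP{}. Since $\calU$ satisfies \STP{}, every SCC of $\calT_\pi$ satisfies the ordinary \TP{} once its unique entry state is taken as initial, so each such SCC can be determinised exactly. Stitching the determinised SCCs along the connecting transitions yields a series-sequential transducer equivalent to $\calT_\pi$; being equivalent, it is at Hamming distance $0$ from $\calT_\pi$ and hence satisfies \HTP{} by \Cref{lem:preserveTPs}. \Cref{lemma:seriesseqapproxHamming} then supplies a sequential transducer $\calD_\pi$ with $\dHam{\calT_\pi}{\calD_\pi} \le k_\pi$ for some $k_\pi \in \mathbb{N}$. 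Forming the finite union $\calU' = \bigcup_{\pi\in\Pi}\calD_\pi$ gives $\dHam{\calT}{\calU'} = \dHam{\calU}{\calU'} \le \max\{k_\pi \mid \pi\in\Pi\} < \infty$, so by \Cref{lem:preserveTPs} the multi-sequential transducer $\calU'$ satisfies \HTP{} as well.

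Finally, I would apply \Cref{lemma:multiseqapproxHamming} to $\calU'$: being multi-sequential and satisfying \HTP{}, it admits a sequential transducer $\calD$ with $\dHam{\calU'}{\calD} < \infty$. The triangle inequality for the metric $d_h$ on functions then gives $\dHam{\calT}{\calD} \le \dHam{\calT}{\calU'} + \dHam{\calU'}{\calD} < \infty$, as required. The step that needs the most care, and which I expect to be the main obstacle, is verifying that the local exact determinisations of the SCCs of $\calT_\pi$ can genuinely be composed into a single left-to-right chain of sequential transducers equivalent to $\calT_\pi$. This hinges crucially on unambiguity forcing a unique entry and exit per SCC, without which the determinised fragments could not be glued into one series-sequential object. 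Everything else is a faithful transcription of the Levenshtein argument with $d_h$ and the Hamming sublemmas substituted in.
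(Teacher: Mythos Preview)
Your proposal is correct and follows essentially the same route as the paper's own proof: decompose an unambiguous presentation of $\calT$ along SCC-paths, use \STP{} to exactly determinise each SCC and obtain series-sequential pieces, apply \Cref{lemma:seriesseqapproxHamming} to each piece, and finish with \Cref{lemma:multiseqapproxHamming} and the triangle inequality. The paper's argument is identical in structure and in the lemmas invoked, including the use of \Cref{lem:preserveTPs} to transfer \HTP{} across equivalent and close transducers.
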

\begin{proof}
    ($\rightarrow$) By \Cref{proposition:approxtoHTPSTP}.\\
    ($\leftarrow$):
    First, we assume that the rational function is given by an unambiguous transducer $\calT$ and
    decompose $\calT$ into an equivalent finite union of transducers
    $\calU = \bigcup_{\pi \in \Pi} \calT_\pi$, exactly as in the proof of \Cref{proposition:approxtoATPSTP}. 
    Then  \Cref{lem:preserveTPs} guarantees that $\calU$
    satisfies both \HTP{} as well as \STP{}.

    Since $\calU$ satisfies \STP{}, for every $\pi \in \Pi$
    each SCC of $\calT_\pi$ satisfies twinning property
    if we set as initial state the unique entry point of the SCC.
    Hence we can determinise each SCC in $\calT_\pi$,
    thus obtaining a series-sequential transducer $\calT_\pi'$
    equivalent to $\calT_\pi$ that still satisfies \HTP{} by \Cref{lem:preserveTPs}.
    From \Cref{lemma:seriesseqapproxHamming},
    we can then turn each $\calT_\pi'$ into a  sequential transducer $\calD_\pi$
    for which the distance $\dHam{\calD_\pi}{\calT_\pi'} = \dHam{\calD_\pi}{\calT_\pi}$ is finite. 
    
    We then set $\dHam{\calD_\pi}{\calT_\pi} = k_\pi$ for some $k_\pi \in \mathbb{N}$,
    which yields that the distance between $\calT$ and the new transducer
    $\calU' = \bigcup_{\pi\in \Pi} \calD_\pi$ is bounded~:
    \[
    \dHam{\calU'}{\calT} = \dHam{\calU'}{\calU} = \max \{ k_\pi \mid \pi\in \Pi\} < \infty.
    \]

    Then, as $\calT$ satisfies \HTP{} so does $\calU'$ by \Cref{lem:preserveTPs},
    and as $\calU'$ is multi-sequential,~\Cref{lemma:multiseqapproxHamming}
    allows to construct a sequential transducer $\calD$ satisfying
    $\dHam{\calD}{\calU'} < \infty$.

    Now since both $\dHam{\calD}{\calU'}$ and $\dHam{\calU'}{\calT}$
    are bounded, we get that $\dHam{\calD}{\calT}$
    is also bounded by the triangle inequality.
    This shows that, as required,
    $\calT$ is approximately deteminisable w.r.t~Hamming distance.
\end{proof}

\section{Approximate decision problems for rational relations}

In this section, we consider two possible generalisations of the approximate determinisation problem to rational relations. We describe those generalisations informally. The first one asks to decide whether a rational relation is close to some rational function. We call it the approximate functionality problem. The second one, that we still call determinisation problem, amounts to decide, given a rational relation $R$, whether it is \emph{almost} a sequential function. The third generalisation we consider is an approximate uniformisation problem, which asks, given $R$, whether there exists a sequential function $s$ which is close to a function $f$, whose graph is included in $R$. We however show that this problem is undecidable. 

We now proceed with the formal definitions and statements of our results. First, we need to extend the notion of distance from functions of words to binary relations of words. 
Towards this, we use Hausdorff distance between languages, defined as \[
d_H(L,L') = \max \left\{ \sup_{w \in L} \inf_{w' \in L'} d(w,w'), \sup_{w' \in L'} \inf_{w \in L} d(w,w') \right\}.
\]
Given a metric $d$ on words, and two relations $R_1,R_2\subseteq A^*\times B^*$, the distance between $R_1$ and $R_2$ is defined as follows.
$$d(R_1,R_2) = \begin{cases} \sup \left \{\,  d_H(R_1(w), R_2(w)) \,\mid\,  w \in \dom(R_1) \right \} & \text{ if $\dom(R_1) = \dom(R_2)$} \\
 \infty & \text{ otherwise } 
 \end{cases}$$
Therefore, $d(R_1,R_2)<\infty$ iff there exists $k\in\mathbb{N}$ such that for all word $u$ in the domain and any output $v_1$ of $R_1$ on $u$, there exists some output $v_2$ of $R_2$ on $u$, such that $d(v_1,v_2)<k$, and symmetrically. In fact, $d$ is a metric on relations as shown below. 
    
\begin{proposition}\label{prop:metricrelations}
    $d$ is a metric on relations.
\end{proposition}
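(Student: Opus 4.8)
The plan is to verify the three metric axioms --- separation, symmetry, and the triangle inequality (nonnegativity being immediate, as $d_H\geq 0$) --- by first isolating the relevant properties of the (directed) Hausdorff distance on nonempty languages, and then lifting them through the outer supremum over the common domain. Throughout I would use that each edit distance in \Cref{table:editdistance} is integer-valued with $d(w,w')=0 \iff w=w'$, so that any $w'\neq w$ satisfies $d(w,w')\geq 1$.

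First I would record the needed facts about $d_H$ on nonempty $L,L'\subseteq B^*$. Symmetry is immediate from the definition, since the two directed terms are simply swapped. For separation, the integer-valuedness gives $\inf_{w'\in L'}d(w,w')=0$ iff $w\in L'$; hence $d_H(L,L')=0$ forces $L\subseteq L'$ and $L'\subseteq L$, i.e.\ $L=L'$. For the triangle inequality I would work with the directed distance $\vec{d}_H(L,L')=\sup_{w\in L}\inf_{w'\in L'}d(w,w')$: given $w\in L$, pick a near-optimal $w'\in L'$ and then a near-optimal $w''\in L''$, and apply the word triangle inequality to obtain $\vec{d}_H(L,L'')\leq \vec{d}_H(L,L')+\vec{d}_H(L',L'')$; taking the maximum over the two directions yields the triangle inequality for $d_H$ itself.

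Second I would lift these to relations. Symmetry of $d$ follows because the condition $\dom(R_1)=\dom(R_2)$ is symmetric and $d_H$ is symmetric. For separation, $d(R_1,R_2)=0$ forces $\dom(R_1)=\dom(R_2)$ together with $d_H(R_1(w),R_2(w))=0$ for every $w$ in the common domain; since $R_i(w)\neq\varnothing$ exactly when $w\in\dom(R_i)$, separation of $d_H$ on nonempty languages gives $R_1(w)=R_2(w)$ for all such $w$, and as both relations produce nothing off their (shared) domain, this means $R_1=R_2$.

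The triangle inequality is the only step requiring a short case analysis, and is the mild obstacle. When $\dom(R_1)=\dom(R_2)=\dom(R_3)$, I would combine the triangle inequality for $d_H$ with subadditivity of the supremum, giving $d(R_1,R_3)=\sup_w d_H(R_1(w),R_3(w))\leq \sup_w d_H(R_1(w),R_2(w)) + \sup_w d_H(R_2(w),R_3(w)) = d(R_1,R_2)+d(R_2,R_3)$. When the three domains are not all equal, the right-hand side is $\infty$: were it finite, both $d(R_1,R_2)$ and $d(R_2,R_3)$ would be finite, forcing $\dom(R_1)=\dom(R_2)=\dom(R_3)$, a contradiction; hence the inequality holds trivially. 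I expect no serious difficulty: the only points needing care are the integer-valuedness argument underlying separation of $d_H$ and the bookkeeping of the $\infty$ cases dictated by the domain-mismatch convention.
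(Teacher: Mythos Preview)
Your proposal is correct and follows essentially the same route as the paper: establish that $d_H$ is a metric on (nonempty) languages, then lift separation, symmetry, and the triangle inequality through the outer supremum with a domain case analysis. The paper simply asserts that $d_H$ is a metric and then carries out the same domain-mismatch bookkeeping and $\sup$-subadditivity argument for the triangle inequality; your version is slightly more careful in that you justify separation of $d_H$ via integer-valuedness of the underlying word metric, a point the paper leaves implicit.
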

\begin{proof}
    When $d$ is a metric on words, the Hausdorff distance $d_H$ between languages is also a metric. Therfore, $d(R_1,R_2) = 0 \iff R_1 = R_2$ and $d(R_1,R_2) = d(R_2,R_1)$ for any two relations $R_1$ and $R_2$. It remains to show that $d(R_1,R_2) \leq d(R_1,R_3) + d(R_3,R_2)$ for relations $R_1,R_2$ and $R_3$.
  
Assume the domains of $R_1$ and $R_2$ are different. Then, either $\dom(R_1) \neq \dom(R_3)$ or $\dom(R_3) \neq \dom(R_2)$. In both cases, $d(R_1
,R_2)$ and 
$d(R_1,R_3) + d(R_3,R_2)$ are $\infty$. Therefore, assume that the domains of $R_1,R_2$ and $R_3$ are the same, call it $L$. Since for each word $w$ in $L$, $d_H(R_1(w),R_2(w)) \leq d_H(R_1(w),R_3(w)) + d_H(R_3(w),R_2(w))$ by virtue of $d_H$ being a metric, it follows that 
\setlength{\arraycolsep}{0.0em}
\begin{eqnarray*}
d(R_1,R_2) && =\sup\left\{\,d_H(R_1(w),R_2(w)) \,\mid\, w \in L\,\right\} \\
               && \leq \sup\left\{\,d_H(R_1(w),R_3(w)) + d_H(R_3(w),R_2(w))  \,\mid\, w \in L\,\right\}\\
                                          &&\leq \sup\left\{d_H(R_1(w),R_3(w)) \,\mid\, w \in L \,\right\} + \sup \left\{\, d_H(R_3(w),R_2(w))  \,\mid\, w \in L \,\right\} \\
                                          &&=  d(R_1,R_3) + d(R_3,R_2)~. \hspace{9cm}\qedhere
\end{eqnarray*}\setlength{\arraycolsep}{5pt}
\end{proof}

\subparagraph*{Approximate functionality problem}

\begin{definition}[Approximate Functionalisation]\label{def:approxfunc}
    A rational relation $R$ is approximately functionalisable w.r.t.~a metric $d$ if there exists a rational function $f$  such that $d(R,f)<\infty$. 
\end{definition}
The approximate functionality problem asks, given $R$ represented by a transducer, whether it is approximately functionalisable w.r.t. $d$. Towards this, we define the following value for a relation $R$ and metric $d$, which measures how different are output words over the same input. More precisely, it is the maximal distance between any two output words over the same input word, by $R$:
\\$
\textcolor{white}{x}~~~~~~~~~~~~~~~~~~~~~~~~~~~~~~{\sf diff}_d(R) = \sup_{u\in dom(R)} \sup_{v_1,v_2\in R(u)} d(v_1,v_2)
$

\begin{lemma}\label{lem:diffcomp}
  For a rational relation $R$ given as a rational transducer, ${\sf diff}_d(R)$ is computable for all metrics given in \Cref{table:editdistance}.
\end{lemma}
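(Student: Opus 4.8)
The plan is to reduce the computation of ${\sf diff}_d(R)$ to computing the diameter of an auxiliary rational relation, and then to invoke the computability of diameters of rational relations w.r.t.~the metrics of \Cref{table:editdistance} established in~\cite{editdistance}. This mirrors, and generalises to arbitrary (non real-time) rational transducers, the product-and-project construction already used in the proof of \Cref{lem:ATPtobound}.

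First I would observe that, directly from its definition, ${\sf diff}_d(R)$ is the supremum of $d(v_1,v_2)$ ranging over all pairs $(v_1,v_2)$ of output words that $R$ produces on a common input. Making this precise, consider the relation
\[
R_o = \{(v_1,v_2)\in B^*\times B^* \mid \exists u\in\dom(R),\ (u,v_1)\in R \text{ and } (u,v_2)\in R\}.
\]
By definition $\dia{R_o}{d} = \sup\{d(v_1,v_2)\mid (v_1,v_2)\in R_o\}$, which is exactly ${\sf diff}_d(R)$; hence it suffices to show that $R_o$ is effectively rational and then appeal to the computability of diameters.

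The key step is therefore to produce a rational transducer for $R_o$. I would write $R_o$ as the relational composition of $R^{-1}$ with $R$: a pair $(v_1,v_2)$ belongs to this composition iff there is a common intermediate word $u$ with $(v_1,u)\in R^{-1}$ and $(u,v_2)\in R$, i.e.~with $(u,v_1),(u,v_2)\in R$, which is exactly the membership condition of $R_o$. A transducer for $R^{-1}$ is obtained from $\calT$ by swapping the input and output labels of every transition, and rational relations are effectively closed under composition (the classical Elgot--Mezei theorem), so a rational transducer recognising $R_o$ can be built effectively from $\calT$. The diagonal pairs with $v_1=v_2$ are included but contribute distance $0$, so they have no effect on the supremum.

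Finally, $R_o$ being an effectively computed rational relation, I would invoke the computability of $\dia{R_o}{d}$ for every metric $d$ of \Cref{table:editdistance}, as proved in~\cite{editdistance} (itself relying on the conjugacy criterion of \Cref{prop:levboundconj} to decide finiteness, together with an effective bound in the finite case). Since ${\sf diff}_d(R)=\dia{R_o}{d}$, this gives the desired computability. The only point needing care is verifying that the composition of $R^{-1}$ with $R$ captures precisely the pairs of co-outputs on a shared input; beyond this, everything reduces to classical closure properties and the cited computability result, so I do not anticipate a substantial obstacle.
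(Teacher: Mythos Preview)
Your proposal is correct and follows essentially the same approach as the paper: define $R_o$ as the set of pairs of outputs on a common input, show it is effectively rational, and reduce ${\sf diff}_d(R)$ to $\dia{R_o}{d}$, which is computable by~\cite{editdistance}. The only cosmetic difference is that the paper builds a transducer for $R_o$ directly via the cartesian product $\calT\times\calT$ with the input component projected away, whereas you obtain it as $R^{-1}\circ R$ using closure under inverse and Elgot--Mezei composition; both constructions yield the same relation and are equally valid.
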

\begin{proof}
Given a rational relation $R$, we can construct a rational relation $R_o$ that consists of all pairs of output words of $R$ on any input, i.e., $R_o = \{(v_1,v_2) \mid \exists u \in \dom(R),(u,v_1),(u,v_2)\in R\}$. If $R$ is a relation defined by a transducer $\calT$, then the transducer obtained by $\calT \times \calT$ (cartesian product of $\calT$ by itself) by ignoring the input word is a transducer that defines the relation $R_o$. Observe that ${\sf diff}_d(R)$ is equivalent to the diameter of $R_o$ w.r.t.~$d$. It is shown in \cite{editdistance} that diameter of a rational relation is computable for all metrics given in \Cref{table:editdistance}. Hence, for those metrics, ${\sf diff}_d(R)$ is computable.
\end{proof}

We now characterise rational relations which are approximately functionalisable.

\begin{lemma}\label{lemma:approxfun}
   A rational relation $R$ is approximately functionalisable w.r.t.~a metric $d$ if and only if ${\sf diff}_d(R) < \infty$.
\end{lemma}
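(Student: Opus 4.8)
The plan is to derive both implications directly from the definition of the relational distance $d(R_1,R_2)$ and the Hausdorff distance $d_H$, the only external ingredient being the classical fact that every rational relation admits a \emph{rational uniformisation}: a rational function $f$ with $\dom(f)=\dom(R)$ whose graph is contained in $R$, so that $f(u)\in R(u)$ for every $u\in\dom(R)$. Given such a witness, both directions reduce to short manipulations of suprema and an application of the triangle inequality.

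For the forward implication, I would suppose $R$ is approximately functionalisable, so there is a rational function $f$ with $d(R,f)=k<\infty$ for some $k\in\mathbb{N}$. By definition of $d$, this forces $\dom(R)=\dom(f)$ and $d_H(R(u),\{f(u)\})\le k$ for every $u\in\dom(R)$. Unfolding the Hausdorff distance, its first component $\sup_{w\in R(u)}\inf_{w'\in\{f(u)\}}d(w,w')=\sup_{w\in R(u)}d(w,f(u))$ is at most $k$, hence every $v\in R(u)$ satisfies $d(v,f(u))\le k$. Thus for any $v_1,v_2\in R(u)$ the triangle inequality gives $d(v_1,v_2)\le d(v_1,f(u))+d(f(u),v_2)\le 2k$, and taking suprema over $u$ and over $v_1,v_2$ yields ${\sf diff}_d(R)\le 2k<\infty$.

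For the converse, I would suppose ${\sf diff}_d(R)<\infty$ and fix a rational uniformisation $f$ of $R$. Since $\dom(f)=\dom(R)$, it remains to bound $d_H(R(u),\{f(u)\})$ for each $u\in\dom(R)$. On one side, because $f(u)\in R(u)$, we have $\inf_{w\in R(u)}d(w,f(u))=0$, so the component $\sup_{w'\in\{f(u)\}}\inf_{w\in R(u)}d(w,w')=0$. On the other side, for every $w\in R(u)$ both $w$ and $f(u)$ lie in $R(u)$, so $d(w,f(u))\le {\sf diff}_d(R)$, giving $\sup_{w\in R(u)}d(w,f(u))\le {\sf diff}_d(R)$. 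Hence $d_H(R(u),\{f(u)\})\le {\sf diff}_d(R)$ for all $u$, and therefore $d(R,f)\le {\sf diff}_d(R)<\infty$, witnessing approximate functionalisability.

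I expect the only real subtlety to be the converse direction's reliance on the existence of a rational uniformisation, which must be invoked as a known result; it is precisely what guarantees $\dom(f)=\dom(R)$ together with $f(u)\in R(u)$, making one side of the Hausdorff distance vanish and the other controllable by ${\sf diff}_d(R)$. The two distance computations themselves are routine once this witness is in hand.
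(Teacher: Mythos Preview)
Your proposal is correct and follows essentially the same approach as the paper: the forward direction uses the triangle inequality through $f(u)$ to get ${\sf diff}_d(R)\le 2k$, and the converse picks a (rational) uniformiser $f$ and bounds $d(R,f)$ by ${\sf diff}_d(R)$. If anything, you are slightly more careful than the paper in explicitly unfolding both components of the Hausdorff distance and in stressing that the uniformiser must be \emph{rational} (the paper just says ``any uniformiser'' but relies on the Kobayashi result cited later to guarantee a rational one exists).
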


\begin{proof}

    ($\rightarrow$) Assume that $R$ is approximately funtionalisable, i.e., there exists a rational function $f$ such that $d(R,f) < \infty$. Therefore, $\dom(R) = \dom(f)$ and there exists an integer $k$ such that for each input word $u \in \dom(R)$, for each output word $v \in R(u)$, $d(v, f(u)) \leq k$ (since $f$ is a function). By triangle inequality of metric $d$, the distance between any two arbitrary output words $v_1,v_2 \in R(u)$ on any input $u \in \dom(R)$  is $d(v_1,v_2) \leq d(v_1,f(u)) + d(f(u),v_2) \leq 2k$.  Since this holds for any input in the domain of $R$, we get ${\sf diff}_d(R) = \sup_{u\in dom(R)} \sup_{v_1,v_2\in R(u)} d(v_1,v_2)  \leq 2k$.

    ($\leftarrow$) Assume that ${\sf diff}_d(R) < \infty$. Let ${\sf diff}_d(R) \leq k$ for some $k \in \mathbb{N}$. We show that any uniformiser of the relation (a function of same domain as the relation whose graph is included in the relation) is a function that approximately functionalises the relation. 
    Let $f$ be a uniformiser of $R$. We prove that $d(R,f) < \infty$. Since $f$ is a uniformiser of $R$, $\dom(R) = \dom(f)$, and for all $u\in\dom(R)$, it holds that $(u,f(u))\in R$.  Since ${\sf diff}_d(R) \leq k$, the distance between the outputs of $R$ on any input is less than or equal to $k$. Thus, for any input word $u \in \dom(R)$, for each output word $v \in R(u)$, $d(v, f(u)) \leq k$ (since $f(u)$ is also an output of $R(u)$). Hence, $d(R,f) < \infty$, i.e., $R$ is approximately functionalisable w.r.t.~$d$.
\end{proof}

Since ${\sf diff}_d(R)$ is computable (see \Cref{lem:diffcomp}), we get the following result.
\begin{theorem}
    The approximate functionality problem for rational relations given as rational transducers w.r.t.~a metric given in \Cref{table:editdistance} is decidable.
\end{theorem}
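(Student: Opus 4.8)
The plan is to obtain the theorem as an immediate consequence of the two lemmas just established, so the strategy is simply to chain them together. By \Cref{lemma:approxfun}, a rational relation $R$ is approximately functionalisable w.r.t.~a metric $d$ if and only if ${\sf diff}_d(R) < \infty$. Hence the approximate functionality problem is equivalent to deciding, given a transducer for $R$, whether the value ${\sf diff}_d(R)$ is finite. This observation turns a synthesis-flavoured question (does there \emph{exist} a close rational function $f$?) into a concrete finiteness test on a single computable quantity.

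The second step is to invoke \Cref{lem:diffcomp}, which asserts that ${\sf diff}_d(R)$ is computable for every metric in \Cref{table:editdistance}, including the Hamming and Levenshtein families. In particular, computing this value lets us determine whether it equals $\infty$ or some finite number, which is precisely the decision required. Combining the two lemmas therefore yields decidability directly, uniformly across all the listed distances.

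The only point that I would verify rather than treat as routine is that ``computable'' in \Cref{lem:diffcomp} genuinely encompasses the ability to \emph{detect} the value $\infty$, and not merely to output a finite number once one already knows the value is finite. This is guaranteed because the proof of \Cref{lem:diffcomp} reduces ${\sf diff}_d(R)$ to the diameter $\dia{R_o}{d}$ of the auxiliary relation $R_o$ obtained from $\calT \times \calT$ by erasing inputs, and the diameter of a rational relation is shown to be computable (finiteness included) in \cite{editdistance}. Indeed, for the Levenshtein family, \Cref{prop:levboundconj} characterises finiteness of the diameter by conjugacy of every pair of input--output words generated by loops, which is a decidable condition. Consequently the finiteness test is effective for all metrics in the table, and I expect no genuine obstacle to remain at the level of this theorem: the real technical work has already been carried out in \Cref{lemma:approxfun} and \Cref{lem:diffcomp}.
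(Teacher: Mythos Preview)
Your proposal is correct and matches the paper's approach exactly: the paper does not even give a separate proof, but simply states the theorem after \Cref{lemma:approxfun} with the sentence ``Since ${\sf diff}_d(R)$ is computable (see \Cref{lem:diffcomp}), we get the following result.'' Your additional remark verifying that computability of ${\sf diff}_d(R)$ includes detection of the value $\infty$ is a reasonable sanity check and is indeed guaranteed by the diameter results of \cite{editdistance}.
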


\subparagraph*{Approximate determinisation problem}
A rational relation $R$ is said to be approximately determinisable for a metric $d$ if it is almost a sequential function with respect to $d$. Formally, it means that there exists a sequential function $f$ such that $d(R,f)<\infty$. We show that the associated decision problem, that we call approximate determinisation problem, is decidable for Levenshtein family of distances. In fact, the characterisation for approximate determinisation of rational functions also holds for rational relations. 
\begin{lemma}
    A rational relation defined by a trim transducer $\calT$ is approximate determinisable w.r.t.~Levenshtein family ($d_l,d_{lcs},d_{dl}$) if and only if $\calT$ satisfies \ATP{} and \STP{}.
\end{lemma}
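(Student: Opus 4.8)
The plan is to reduce this relational statement to the function case already settled in \Cref{theorem:approxdetlev}, using the approximate functionality machinery of \Cref{lemma:approxfun} together with the representation-independence of the twinning variants (\Cref{lem:preserveTPs}). Throughout I would use that, for a single-valued $g$, the Hausdorff distance $d_H(R(u),\{g(u)\})$ collapses to $\sup_{v\in R(u)} d(v,g(u))$, so that $d(R,g)<\infty$ is equivalent to $\dom(R)=\dom(g)$ together with a uniform bound $d(v,g(u))\le k$ over all inputs $u\in\dom(R)$ and all outputs $v\in R(u)$. In particular, on functions the relational distance agrees with \Cref{def:distfunc}, which lets me move freely between the two notions.

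First I would settle the forward direction. Assume $R=R_\calT$ is approximately determinisable, so there is a sequential function $g$ with $d(R,g)<\infty$; as noted above this forces $\dom(R)=\dom(g)$ and a uniform bound on $d(v,g(u))$ for all $v\in R(u)$. Being sequential, $g$ satisfies the twinning property, hence \ATP{} and \STP{} by \Cref{proposition:TPrelations}, and I would then transfer these to $\calT$ via \Cref{lem:preserveTPs}. For the converse, assume $\calT$ satisfies \ATP{} and \STP{}. From \ATP{}, applying \Cref{lem:ATPtobound} to pairs of complete accepting runs shows that any two outputs of $R$ on the same input lie within $N_\calT$, so ${\sf diff}_d(R)\le N_\calT<\infty$. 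By \Cref{lemma:approxfun} the relation $R$ is then approximately functionalisable: there is a rational function $f$ with $\dom(R)=\dom(f)$ and $d(R,f)<\infty$. Fixing an (unambiguous) transducer $\calT_f$ defining $f$, I would invoke \Cref{lem:preserveTPs} to transfer \ATP{} and \STP{} from $\calT$ to $\calT_f$. Since $\calT_f$ now defines a rational \emph{function} satisfying both properties, \Cref{theorem:approxdetlev} produces a sequential function $g$ with $d(f,g)<\infty$. Finally, because the relational distance is a metric (\Cref{prop:metricrelations}) and coincides with the function distance on functions, the triangle inequality gives $d(R,g)\le d(R,f)+d(f,g)<\infty$, exhibiting $g$ as a sequential approximation of $R$.

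The hard part will be justifying the two applications of \Cref{lem:preserveTPs}, since that lemma is stated for single-valued functions whereas here one of the two transducers, namely $\calT$, is a genuinely non-functional relation transducer. I would resolve this by observing that the proof of \Cref{lem:preserveTPs} only ever compares the output of one \emph{specific} run of $\trans$ against the output of one specific run of $\transs$ on matching inputs, and never the full sets $R(u)$; since $d(R,f)<\infty$ uniformly bounds $d(v,f(u))$ over \emph{every} output $v\in R(u)$, in particular over the output of each run constructed in that proof, all the edit-distance inequalities invoked there remain valid with $\trans=\calT$ and $\transs=\calT_f$ (and symmetrically with $\transs=g$ in the forward direction). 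Thus the lemma goes through verbatim in the relation-versus-function setting.

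The remaining points are routine: the appeal to \Cref{lemma:approxfun} already supplies the rational function $f$ close to $R$ (internally relying on the existence of a rational uniformiser of a rational relation, as in that lemma's proof), so no extra uniformisation argument is needed here; and the identification of the relational distance with \Cref{def:distfunc} on functions is immediate from the singleton computation of $d_H$ above. Assembling these pieces yields both implications of the equivalence.
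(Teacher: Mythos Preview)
Your proposal is correct and follows essentially the same route as the paper: show ${\sf diff}_d(R)<\infty$ from \ATP{} via \Cref{lem:ATPtobound}, invoke \Cref{lemma:approxfun} to get a close rational function $f$, transfer \ATP{} and \STP{} to a transducer for $f$ via \Cref{lem:preserveTPs}, apply \Cref{theorem:approxdetlev}, and conclude by the triangle inequality using \Cref{prop:metricrelations}. You are in fact more careful than the paper in one respect: you explicitly justify why \Cref{lem:preserveTPs} still applies when one of the two transducers defines a genuine relation, whereas the paper simply invokes the lemma without comment.
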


\begin{proof}
    Let $R$ be a rational relation given by a transducer $\calT$, and let $d \in \{d_l,d_{lcs},d_{dl}\}$. The proof of direction $(\rightarrow$) is the same as \Cref{proposition:approxtoATPSTP} when the function is replaced with a relation.
    For the other direction, assume that $\calT$ satisfies both \ATP{} and \STP{}. We first show that if $R$ is approximately functionalisable then it is approximately determinisable when $\calT$ (that defines $R$) satisfies both \ATP{} and \STP{}. Let $f$ be a rational function that approximately functionalises $R$ w.r.t.~Levenshtein distance, i.e., $d(R,f) < \infty$. Thus, an unambiguous transducer $\calF$ that defines $f$ satisfies $d(\calT,\calF) < \infty$. From \Cref{lem:preserveTPs}, because $\calT$ satisfies both \ATP{} and \STP{} and $d(\calT,\calF) < \infty$, it follows that $\calF$ also satisfies \ATP{} and \STP{}. Using \Cref{theorem:approxdetlev}, the rational function $f$ is approximately determinisable. Let $g$ be a sequential function that approximately determinises $f$ w.r.t~Levenshtein, i.e, $d(f,g) < \infty$. Since $d$ is a metric on relations (see \Cref{prop:metricrelations}), $d(R,g) \leq d(R,f) + d(f,g)$. Since both $d(R,f)$ and $d(f,g)$ are finite, we get $d(R,g) < \infty$. Hence, the rational relation $R$ is approximately determinisable.

    Now it suffices to show that $R$ is approximately functionalisable w.r.t.~Levenshtein family of distance. Towards this, we prove that ${\sf diff}_{d}(R) < \infty$ when $\calT$ satisfies \ATP{}. Observe that ${\sf diff}_{d}(R) = \dia{R_o}{d}$ where $R_o = \{(v_1,v_2) \mid \exists u \in \dom(R),(u,v_1),(u,v_2)\in R\}$ is a rational relation that consists of all pairs of output words of $R$ on any input. Using \Cref{lem:ATPtobound}, $\dia{R_o}{d} \leq N_\calT$ when $\calT$ satisfies \ATP{}. Hence,  ${\sf diff}_{d}(R) = \dia{R_o}{d} < \infty$, and $R$ is approximately functionalisable w.r.t.~$d$ by virtue of \Cref{lemma:approxfun}. \qedhere
    
\end{proof}
Since \ATP{} and \STP{} are decidable for transducers (\Cref{lem:decideTPs}), we obtain the following.
\begin{theorem}
    The approximate determinisation problem for rational relations given as rational transducers w.r.t.~a metric $d \in \{d_l,d_{lcs},d_{dl}\}$ is decidable.
\end{theorem}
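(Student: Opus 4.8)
The plan is to obtain the theorem as an algorithmic corollary of the characterisation just established (approximate determinisability of a relation w.r.t.\ the Levenshtein family is equivalent to the conjunction of \ATP{} and \STP{}) together with the decidability of these twinning variants from \Cref{lem:decideTPs}. Concretely, given a rational relation $R$ presented by a rational transducer, I would first put the machine into the normal form on which the twinning properties are phrased: trim it by discarding every state that does not lie on an accepting run (a standard reachability/co-reachability computation that leaves $R$ unchanged), and convert it into an equivalent trim real-time transducer $\calT$. Then, by the preceding characterisation lemma, $R$ is approximately determinisable w.r.t.\ any $d\in\{d_l,d_{lcs},d_{dl}\}$ if and only if $\calT$ satisfies both \ATP{} and \STP{}.

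Since $\calT$ is a trim real-time transducer, \Cref{lem:decideTPs} applies directly and provides effective procedures deciding \ATP{} and \STP{} individually. The algorithm therefore runs both procedures and returns the conjunction of their answers, which decides approximate determinisability and proves the theorem. The three distances of the Levenshtein family can be handled uniformly, because, as recalled at the start of this subsection, they coincide up to boundedness, so a single \ATP{}/\STP{} test settles the question for all of them simultaneously.

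The only genuine subtlety, and hence the step I would be most careful about, is the passage from an arbitrary rational transducer to the trim real-time transducer on which \ATP{} and \STP{} are defined. A general rational transducer may emit output along $\epsilon$-input segments, so for a fixed input $u$ the output set $R(u)$ can be infinite; in that case $R$ cannot lie within finite distance of any function (each $R(u)$ would have infinite diameter under the Levenshtein family), and so it is certainly not approximately determinisable. This degenerate case is detectable in advance: approximate determinisability implies approximate functionalisability, which by \Cref{lemma:approxfun} is equivalent to ${\sf diff}_d(R)<\infty$, and ${\sf diff}_d(R)$ is computable by \Cref{lem:diffcomp}. I would therefore first test whether ${\sf diff}_d(R)<\infty$, rejecting if it fails. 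When it holds, every $R(u)$ has uniformly bounded diameter and is thus \emph{finite}, which rules out productive $\epsilon$-input cycles in the trimmed transducer; the run length, and hence the output length, is then linearly bounded in the input length, so the standard conversion to an equivalent trim real-time transducer $\calT$ goes through and the characterisation plus \Cref{lem:decideTPs} can be invoked as above.
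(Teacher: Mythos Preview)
Your proof is correct and takes exactly the paper's route: the theorem is stated as an immediate corollary of the preceding characterisation lemma (approximate determinisability of a rational relation is equivalent to \ATP{} $\wedge$ \STP{}) together with the decidability of these properties from \Cref{lem:decideTPs}. Your third paragraph, handling the passage from an arbitrary rational transducer to a trim real-time one via a preliminary ${\sf diff}_d(R)<\infty$ test, is a sound and more careful treatment of a point the paper glosses over (it tacitly works with real-time transducers), but it does not alter the underlying approach.
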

\subparagraph{Approximate uniformisation problem}

Given a relation $R\subseteq A^*\times B^*$, a \emph{uniformiser} of $R$ is a function $U : A^*\rightarrow B^*$ such that $\dom(R) = \dom(U)$ and 
for all $u\in\dom(R)$, it holds that $(u,U(u))\in R$. It is known that any rational relation admits a rational uniformiser~\cite{KOBAYASHI1969}, which is not true if we seek for a sequential uniformiser. Moreover, the problem of deciding whether a given rational relation admits a sequential uniformiser is undecidable~\cite{LoC}. We consider an approximate variant of this problem.

\begin{definition}[Approximate Uniformisation]\label{def:approxuni}
   A rational relation $R$ is approximately uniformisable w.r.t.~a metric $d$ if there exists a 
    uniformiser $U$ of $R$ and a sequential function $f$ such that $d(U,f)<\infty$. In that case, we say that $R$ is $d$-approximate uniformisable by a sequential function. 
\end{definition}
In other words, $R$ is $d$-approximate uniformisable by a sequential function $f$ iff there exists an integer $k \in \mathbb{N}$ such that for all $u\in dom(R)$, there exists $(u,v)\in R$ with $d(v,f(u))\leq k$. 

\begin{theorem}\label{thm:app-un}
Checking whether a rational relation is $d$-approximate uniformisable for $d \in \{d_l,d_{lcs},d_{dl}\}$ is undecidable.
\end{theorem}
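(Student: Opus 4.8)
The plan is to reduce from the exact sequential uniformisation problem, which is undecidable \cite{LoC,DBLP:conf/icalp/FiliotJLW16}, and to arrange the reduction so that, on the relations it produces, \emph{exact} and \emph{approximate} sequential uniformisability coincide. Since the three distances of the Levenshtein family agree on every pair of words up to a constant factor, boundedness of $\sup_{u}\inf_{v\in R(u)} d(v,f(u))$ is insensitive to the choice of $d\in\{\dlev,d_{lcs},d_{dl}\}$, so it suffices to treat $d=\dlev$. Recall from \Cref{def:approxuni} that $R$ is $\dlev$-approximate uniformisable iff there is a sequential $f$ and a constant $k$ with $\inf_{v\in R(u)} \dlev(v,f(u))\le k$ for every $u\in\dom(R)$; the exact version is the special case $k=0$.

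First I would take the relation $R_P$ built from an instance of an undecidable problem $P$ (Post Correspondence, or the halting problem) in the undecidability proof of exact sequential uniformisation, so that $R_P$ admits a sequential uniformiser if and only if $P$ is positive. I would then impose on this construction an \emph{output-separation} property: whenever two runs of the defining transducer read the same input but commit to different discrete choices, the output words they produce along any pair of synchronised loops are non-conjugate. This is easily enforced by encoding the two choices with loop outputs of different primitive root or different length. By \Cref{prop:conj} and \Cref{prop:levboundconj}, such a pair of runs then produces, on suitable inputs, output words whose $\dlev$-distance grows without bound as the loops are iterated.

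With this in place the reduction splits into two directions. If $P$ is positive, $R_P$ has a sequential uniformiser $U$, and taking $f=U$ gives $\dlev(U,f)=0<\infty$, so $R_P$ is approximately uniformisable. Conversely, I would show that if $R_P$ is approximately uniformisable then it is exactly so, which by the choice of $R_P$ forces $P$ to be positive. The argument is a pumping/rounding step: given a sequential $f$ with $\inf_{v\in R_P(u)}\dlev(v,f(u))\le k$ for all $u$, the output-separation property ensures that for long inputs at most one family of valid outputs can stay within distance $k$ of $f(u)$, so the discrete choices realised by the nearest valid output are uniquely determined; reading these choices off the output of the sequential $f$ by a finite-state online procedure yields an \emph{exact} sequential uniformiser. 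I expect this rounding step to be the main obstacle: one must verify that a bounded number of Levenshtein edits cannot mask a wrong discrete choice, and that the correct choice can be recovered uniformly and online. This is exactly where the non-conjugacy and length arguments of \Cref{prop:conj} are used, since they guarantee that a single wrong choice, once pumped, costs unboundedly many edits and therefore cannot be tolerated within the fixed budget $k$.
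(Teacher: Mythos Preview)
Your high-level strategy---build a relation on which exact and approximate sequential uniformisability coincide---matches the paper's, but your execution diverges at the crucial point and leaves a real gap.

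The paper reduces directly from PCP with an explicit relation: inputs are $w_1\#\dots w_m\# X$ with $X\in\{a,b\}$; for $X=a$ the unique output is $\phi_1(w_1)\#\dots\phi_1(w_m)\#$, for $X=b$ any $v_1\#\dots v_m\#$ with each $v_i\neq\phi_2(w_i)$. The two directions are then: if PCP has no solution, the map ``always output $\phi_1$'' is an \emph{exact} sequential uniformiser; if PCP has a solution $w$, one takes inputs $(w\#)^\ell$ and observes that any sequential approximator has already committed to an output before seeing $X$, so its outputs on $(w\#)^\ell a$ and $(w\#)^\ell b$ differ only in a bounded suffix. A triangle-inequality plus pigeonhole argument on the number of edits then forces some $v_i=\phi_1(w)=\phi_2(w)$, a contradiction. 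No ``rounding'' is ever needed: the paper shows directly that a PCP solution precludes approximate uniformisability.

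Your plan instead tries to prove ``approximate $\Rightarrow$ exact'' by rounding the output of an approximate sequential $f$ to the nearest valid output via a finite-state online procedure. This step is not justified and is where the proposal breaks. The nearest valid output is a function of the \emph{pair} $(u,f(u))$, not of $f(u)$ alone, and even with your output-separation hypothesis there is no reason the map $(u,f(u))\mapsto\text{nearest valid output}$ is computable sequentially: determining which branch $f(u)$ is $k$-close to may require unbounded lookahead, and the valid-output language along each branch need not be recognisable by a transducer reading $f(u)$ letter by letter. You also never specify $R_P$ or the modification enforcing output-separation, so nothing can be checked. The fix is to abandon the rounding direction and argue the contrapositive directly, as the paper does: design the relation so that the ``wrong'' branch forces a linear number of $\#$-delimited blocks to be edited, and use that a bounded edit budget leaves at least one block untouched.
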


\begin{proof}
Let $\phi_1 : \{1,\dots,k\}\rightarrow B^*$ and $\phi_2 :\{1,\dots,k\}\rightarrow B^*$ be two morphisms defining an instance of the post correspondence problem (PCP), which asks whether there exists $w\in\{1,\dots,k\}^+$ such that $\phi_1(w) = \phi_2(w)$. Let $A = B\uplus \{1,\dots,i,a,b,\#\}$.

Let $R$ be the relation which as input takes any word of the form $w_1\#w_2\#\dots w_m\#X$
where $w_i\in \{1,\dots,k\}^*$ for $1\leq i\leq m$ and $X\in\{a,b\}$. Let us define the outputs:
\begin{itemize}
\item If $X = a$, then the only output is $\phi_1(w_1)\#\phi_1(w_2)\#\dots \phi_1(w_m)\#$.
\item If $X = b$, then any word of the form 
$v_1\#\dots v_m\#\text{ where } v_i\neq \phi_2(w_i)$ for all $1\leq i\leq m$ is a valid output.
\end{itemize}
It can be shown that $R$ is rational, recognizable by some transducer $\calT$.
We now show that $R$ is approx-uniformisable iff PCP has no solution iff $R$ is exact-uniformisable. First, suppose that PCP has a solution $w$ and $R$ is approx-uniformisable by some
sequential transducer $\mathcal{D}$, i.e. $d(\trans,{\cal D})\leq K$
for some $K$. Consider inputs of the form $u_{\ell} = (w\#)^\ell$ for
$\ell\geq 0$, and let $\alpha_\ell,\alpha_a,\alpha_b$ be such that $q_0\xrightarrow{u_\ell|\alpha_\ell}_{\cal D} q,
q\xrightarrow{a\mid \alpha_a}_{\cal D} q_f\text{ and } q\xrightarrow{b\mid
  \alpha_b}_{\cal D} p_f$, 
where $q_0$ is the initial state, $q$ is a state and $q_f,p_f$ are
final states of ${\cal D}$. Since $d(\trans,{\cal D})\leq K$, for all $\ell$,
$d((\Phi_1(w)\#)^\ell, \alpha_\ell\alpha_a)\leq K$ holds.
Similarly, for all $\ell$, there exist $v_1,v_2,\dots,v_\ell$ all different from $\Phi_2(w)$ such that
$$d(v_1\#v_2\#\dots v_\ell\#, \alpha_\ell\alpha_b)\leq K.$$
Since $d(\alpha_\ell\alpha_a,\alpha_\ell\alpha_b)$ is uniformly bounded for all $\ell$ by some $M$, by applying triangular
inequalities, we get that for all $\ell$, there exist $v_1,\dots,v_\ell$ all different from $\Phi_2(w)$ such that
$$
d((\Phi_1(w)\#)^{\ell}, v_1\#v_2\#\dots v_\ell\#)\leq 2K+M
$$
Take $\ell = 4K+2M+2$, and fix a sequence of at most $2K+M$ edits from
\\$(\Phi_1(w)\#)^{\ell} = (\Phi_1(w)\#\Phi_1(w)\#)^{2K+M+1}$ to
$v_1\#v_2\#\dots v_\ell\#$. In this sequence, at least one copy of
$(\Phi_1(w)\#\Phi_1(w)\#)$ is not edited by the sequence. Therefore, 
there exists some $i$ such that $\#v_i\# = \#\Phi_1(w)\#$ and hence $v_i = \Phi_1(w)$. It is a contradiction since $v_i\neq \Phi_2(w) = \Phi_1(w)$. Therefore $R$ is not approximately uniformisable.

Conversely, if there is no solution to PCP, then the following
sequential function is an exact uniformiser of $R$: on any input of
the form $w_1\#\dots \# w_mX$ it outputs $\Phi_1(w_1)\#\dots \#\Phi_1(w_m)$.\qedhere
\end{proof}

\section{Future works}
In this paper, we proved that approximate determinisation is decidable for functional transducers, by checking various twinning properties. We have shown that \HTP{} and \STP{} are decidable in \textsc{PTime}, which entails that approximate determinization of functional transducers for the Hamming distance is decidable in \textsc{PTime}. For the other distances, such as Levenshtein distance, the time complexity is doubly exponential, as deciding conjugacy of a rational relation, and hence \ATP, is doubly exponential~\cite{decidingconjugacy}.
We conjecture that this is suboptimal,
and leave as future work finding a better upper-bound. 

We show that approximate uniformisation is undecidable for Levenshtein family. We leave the case of Hamming distance as future work. The current undecidability proof for Levenshtein family, based on PCP, heavily requires that the lengths of the output words produced on transitions can differ, which may not guarantee that the total output lengths are the same, which is necessary to have a finite Hamming distance. Our proof also does not extend to the
letter-to-letter setting, where both the given transducer
and the required uniformiser process \emph{and produce} a single letter
on every transition.
This problem is closely related to the standard Church synthesis for regular specifications,
with the modification that the strategy to be synthesized is allowed to make
a bounded number of errors.
To our knowledge, this variant has not been studied in the literature.

Finally, we studied approximate decision problems up to finite distance. Another interesting question is to consider their  ``up to distance $k$'' variant, where $k$ is given as input.
    
\bibliography{reference} 

\end{document}